\newcommand{\V}{\mathcal{V}}
\newcommand{\C}{\mathcal{C}}
\newcommand{\mathsc}[1]{{\normalfont\textsc{#1}}}
\newtheorem{lemma}{Lemma}
\newtheorem{theorem}{Theorem}
\newtheorem{corollary}{Corollary}
\newtheorem{proposition}{Proposition}
\theoremstyle{definition}
\newtheorem{example}{Example}
\newtheorem{definition}{Definition}
\newcommand{\g}{\mathsc{Greedy}}
\renewcommand{\b}{\mathsc{Banzhaf}}
\newcommand{\opt}{\mathsc{Opt}}
\newcommand{\rand}{\mathsc{Rand}}
\newcommand{\alg}{\mathsc{Alg}}
\DeclareMathOperator{\argmax}{arg\,max}
\DeclareMathOperator{\argmin}{arg\,min}
\DeclareMathOperator{\E}{\mathbf{E}}
\renewcommand{\d}{\mathrm{d}}
\newcommand{\F}{\mathsc{Fail}}
\newcommand{\p}{\mathsf{P}}
\newcommand{\np}{\mathsf{NP}}
\newcommand{\zpp}{\mathsf{ZPP}}
\newcommand{\rp}{\mathsf{RP}}
\newcommand{\crp}{\mathsf{coRP}}
\newcommand{\nph}{\mathsf{NP}\text{-}\mathsf{hard}}
\newcommand{\rc}{\mathsc{Regular Max K-Cover}}
\definecolor{color1}{rgb}{0.7, 0.2, 0.2}
\definecolor{color2}{rgb}{0.0, 0.4, 0.0}
\definecolor{color3}{rgb}{0.2, 0.4, 0.7}
\title{Optimal Algorithms for Multiwinner Elections and the Chamberlin-Courant Rule}
\newcommand*\samethanks[1][\value{footnote}]{\footnotemark[#1]}
\author{Kamesh Munagala\thanks{Computer Science Department, Duke University. Email: \texttt{kamesh@cs.duke.edu}, \texttt{zeyu.shen@duke.edu}, \texttt{knwang@cs.duke.edu}.} \and Zeyu Shen\samethanks[1] \and Kangning Wang\samethanks[1]}
\date{}
\begin{document}

\maketitle

\begin{abstract}
We consider the algorithmic question of choosing a subset of candidates of a given size $k$ from a set of $m$ candidates, with knowledge of voters' ordinal rankings over all candidates. We consider the well-known and classic scoring rule for achieving diverse representation: the Chamberlin-Courant (CC) or $1$-Borda rule, where the score of a committee is the average over the voters, of the rank of the best candidate in the committee for that voter; and its generalization to the average of the top $s$ best candidates, called the $s$-Borda rule. 

Our first result is an improved analysis of the natural and well-studied greedy heuristic. We show that greedy achieves a $\left(1 - \frac{2}{k+1}\right)$-approximation to the maximization (or satisfaction) version of CC rule, and a $\left(1 - \frac{2s}{k+1}\right)$-approximation to the $s$-Borda score. This significantly improves the existing submodularity-based analysis of the greedy algorithm that only shows a $(1-1/e)$-approximation. Our result also improves on the best known approximation algorithm for this problem. We achieve this result by showing that the average dissatisfaction score for the greedy algorithm is at most $2\frac{m+1}{k+1}$ for the CC rule, and at most $2s^2 \frac{m+1}{k+1}$ for $s$-Borda. We show these dissatisfaction score bounds are tight up to constants, and even the constant factor of $2$ in the case of the CC rule is almost tight.

For the dissatisfaction (or minimization) version of the problem, it is known that the average dissatisfaction score of the best committee cannot be approximated in polynomial time to within any constant factor when $s$ is a constant (under standard computational complexity assumptions). As our next result, we strengthen this to show that the score of $\frac{m+1}{k+1}$ can be viewed as an \emph{optimal benchmark} for the CC rule, in the sense that it is essentially the best achievable score of any polynomial-time algorithm even when the optimal score is a polynomial factor smaller. We show that another well-studied algorithm for this problem, called the Banzhaf rule, attains this benchmark. 

We finally show that for the $s$-Borda rule, when the optimal value is small, these algorithms can be improved by a factor of $\tilde \Omega(\sqrt{s})$ via LP rounding. Our upper and lower bounds are a significant improvement over previous results, and taken together, not only enable us to perform a finer comparison of greedy algorithms for these problems, but also provide analytic justification for using such algorithms in practice.
\end{abstract}

\section{Introduction}
Multiwinner elections are a classical problem in social choice. In this problem, the goal is to find a set of candidates (or winning committee) of fixed size from voter preferences over the candidates. Indeed, some of the earliest work on the design of voting rules that map individual preferences to a winning committee dates back at least a century~\cite{Thiele}. 

Multiwinner elections clearly arise in choosing a winning parliament in representative democracies. They have also recently found applications in design of systems for making procurement or hiring decisions~\cite{LuB,SkworonFL}, and in participatory budgeting~\cite{goel2019knapsack,pb2}. In these settings, the candidates are products or public projects that provide shared utility to individuals. An entity such as a company or a city government has to decide, based on individual preferences, which of these projects or products to produce subject to a cardinality constraint. 

Much of the work on multiwinner elections has focused on the question of \emph{proportional} or \emph{diverse representation}: How can we choose a winning committee where every voter feels they have some representation? Indeed, classic voting rules such as Proportional Approval Voting (PAV)~\cite{Thiele} or Single Transferable Voting (STV)~\cite{Tideman} explicitly attempt to enforce such representation. 

In this paper, we consider the question of choosing a committee of fixed cardinality $k$ from a set $\C$ of $m$ candidates, when voters express \emph{ordinal rankings} over these candidates. In many applications, including parliamentary democracies or participatory budgeting, it is reasonable to assume voters can compare candidates or projects and hence can rank them ordinally, while they may not be able to articulate cardinal utilities for the same. 

A classic set of objectives for ensuring diverse representation~\cite{BrillFST19} based on ordinal preferences uses the so-called \emph{Borda score}. In the minimization (or dissatisfaction) version, the Borda score of candidate $c$ for voter $v$, denoted $r_v(c)$, is the ordinal rank of $c$ in $v$'s ranking. Here, the top-ranked candidate has Borda score $1$, and the bottom-ranked candidate has score $m$.\footnote{Existing literature also uses a score of $0$ for the best ranked and $m-1$ for the worst ranked candidates. Since our results concern absolute scores, they carry over to this setting by simply subtracting $1$ from the bounds. We use a minimum score of $1$ since it is the more challenging setting for showing hardness results.}  Let $\V$ denote the set of all voters, with $n = |\V|$. Given a committee (that is, a set of candidates) $T$ of size $k$, the $s$-Borda score of this committee (for $s \le k$) is given by
\begin{equation}
\label{eq:sborda}
r_{\V}(T) = \frac{1}{n} \sum_{v \in \V} \left(\min_{Q \subseteq T, |Q| = s} \sum_{c \in Q} r_v(c)\right).
\end{equation}
Throughout the paper, we will denote the minimum possible score as $\opt = \min_{T \subseteq \C, |T| = k} r_{\V}(T)$.

To interpret the above score, for each voter, consider the $s$ candidates in $T$ whose Borda score is the smallest. Now, take the sum of these scores, and average it over all the voters. Therefore, the $s$-Borda score assumes each voter is represented by the $s$ best candidates in $T$ according to her ranking, so that optimizing this score implies a form of proportional representation, where each voter on average has $s$ ``good'' candidates representing her. 

Our goal in this paper is to study the computational complexity (in $m, n$) of finding good committees according to the $s$-Borda score function. In particular (though not exclusively), we focus on the analysis of greedy algorithms, which are appealing for their simplicity and ease of use, especially in settings involving human decision making, such as parliamentary elections or participatory budgeting with ordinal preferences. 


\subsection{Results for $1$-Borda Score (Chamberlin-Courant Rule)}
Our main results focus on the canonical case where $s = 1$. This case has been extensively studied in computational social choice~\cite{LuB,SkworonFL,SkworonFS15,Heuristics,Elkind17,BrillFST19}, starting with the work of Chamberlin and Courant~\cite{CC}.  Here, each voter $v$ is represented by candidate $\argmin_{c \in T} r_v(c)$, that is, the most preferred candidate from $T$ in $v$'s ordering. The score of the voter is the rank of its representative, and the goal is to minimize the average of this score over the voters. This rule is also called the {\em Chamberlin-Courant} voting rule, though we will henceforth call it the $1$-Borda score for consistency with the generalizations we study later.

The $1$-Borda score is an ordinal version of the celebrated $k$-medians problem~\cite{AryaGKMMP04}. Unfortunately, for the ordinal version, it is not possible to approximate the minimum  score, $\opt$, to any constant factor in polynomial time unless $\p = \np$~\cite{SkworonFS15}.

\paragraph{The \g{} Algorithm.} A natural algorithm for the $1$-Borda score is the \g{} algorithm that iteratively adds the candidate that decreases the $1$-Borda score the most. This algorithm was analyzed in~\cite{LuB} as follows. Consider the maximization (or satisfaction) version where the score of a candidate $c$ for voter $v$ is $m + 1 - r_v(c)$, so that the score of a committee $T$ is $m + 1 - r_{\V}(T)$. Clearly, the maximization and minimization versions have the same optimum solutions, though they are very different from an approximation perspective. It is easy to check that the maximization objective is submodular~\cite{LuB}, so that \g{} is a $\left(1 - \frac{1}{e}\right)$-approximation by the classic result of~\cite{NWF}. However, this analysis only shows that \g{} yields a solution of score at most $m/e$ for the minimization objective.

Our first, and technically most challenging, contribution is an almost-tight analysis of this \g{} heuristic for the minimization version. In Section~\ref{sec:greedy}, we show that it achieves a score (given by Eq.~(\ref{eq:sborda})) of at most $2 \cdot\frac{m+1}{k+1}$ for any instance with $m$ candidates from which we need to choose a committee of size $k$. We complement this analysis by exhibiting an instance where \g{} has score at least $1.962 \cdot\frac{m+1}{k+1}$. 

For the maximization (or satisfaction) version, since the maximum possible score is $m$, the above result directly implies the following theorem.
\begin{theorem}
\label{thm:greedymax}
\g{} is a $\left(1 - \frac{2}{k+1}\right)$-approximation for the maximization version of $1$-Borda score.
\end{theorem}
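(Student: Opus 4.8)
The plan is to obtain Theorem~\ref{thm:greedymax} as an essentially immediate corollary of the minimization-version guarantee already stated above, namely that $\g$ outputs a committee of $1$-Borda score $r_\V(\g) \le 2\cdot\frac{m+1}{k+1}$ on every instance with $m$ candidates and committee size $k$. First I would translate this into the satisfaction setting: since the maximization score of a committee $T$ is $m+1-r_\V(T)$, the bound on $\g$ gives
\[
(m+1)-r_\V(\g) \;\ge\; (m+1) - 2\cdot\frac{m+1}{k+1} \;=\; (m+1)\left(1-\frac{2}{k+1}\right).
\]
Next I would bound the optimum: because every voter's most-preferred member of any committee has rank at least $1$, we have $\opt \ge 1$, so the optimal satisfaction score is $(m+1)-\opt \le m$ (this is just the observation already noted in the excerpt that the maximum possible score is $m$).

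Then I would simply combine the two estimates. Using that $\g$ and the optimum agree on which committees are best, and writing $\alg_{\max} = (m+1)-r_\V(\g)$ and $\opt_{\max} = (m+1)-\opt$, the approximation ratio is
\[
\frac{\alg_{\max}}{\opt_{\max}} \;\ge\; \frac{(m+1)\left(1-\frac{2}{k+1}\right)}{m} \;\ge\; 1-\frac{2}{k+1},
\]
where the last inequality uses $\tfrac{m+1}{m}\ge 1$ and $1-\tfrac{2}{k+1}\ge 0$ (the statement is vacuous for $k=1$). This yields the theorem.

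There is essentially no obstacle in the corollary itself; the entire difficulty is concentrated in the minimization bound $r_\V(\g)\le 2\cdot\frac{m+1}{k+1}$, which the paper proves separately in Section~\ref{sec:greedy} and which I would expect to be the genuinely hard part. The natural route there is a potential-function / amortized argument over the $k$ greedy rounds: one shows that whenever the current average rank is still large, there must exist a not-yet-selected candidate whose addition decreases the average by an amount proportional to how large it is (so a constant-fraction-type recurrence holds), and then one sums these per-round decrements. Since that bound is assumed as given in the excerpt, the present theorem follows with no further work.
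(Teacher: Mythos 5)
Your proof is correct and matches the paper's own derivation of Theorem~\ref{thm:greedymax}: both simply combine the minimization bound $r_\V(\g)\le 2\cdot\frac{m+1}{k+1}$ with the trivial upper bound of $m$ on the maximum satisfaction score and compute the ratio. The parenthetical remark ``$\g$ and the optimum agree on which committees are best'' is not actually used and could be dropped, but this does not affect correctness.
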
 
For $k$ larger than a small constant, this {\em significantly improves} the submodularity-based analysis~\cite{LuB} that only yields a $(1-1/e)$-approximation. Furthermore, it also improves on the best known approximation algorithm for this problem, Algorithm P in~\cite{SkworonFS15}, which achieves an approximation factor of $\left(1 - O\left(\frac{\ln k}{k}\right)\right)$.

At a technical level, the standard analysis of \g{} for maximizing submodular functions shows that the next candidate yields an improvement in objective that is at least $1/k$ fraction of the gap between the current solution and the optimum. We use Cauchy-Schwarz inequality on per-voter improvements to show an overall improvement per step that has a {\em quadratic} dependence on the gap. This yields a significant improvement when the gap is large, and is the crux of why we are able to improve the upper-bound analysis of the maximization version significantly. Our lower bound instance works by carefully choosing per-voter improvements that make Cauchy-Schwarz inequality almost tight. This requires a non-trivial construction where the candidates chosen by \g{} have ranks that lie on a carefully chosen spiral, and these are interspersed with candidates for whom voters' preferences are random. The ranks in each subsequent layer of the spiral decrease by a factor equal to the golden ratio.

\paragraph{A Benchmark and an Optimal Algorithm.} The next natural question we ask is: How much better can we do in polynomial time? In Section~\ref{sec:lower}, we show some hardness results for the minimization version (Eq~(\ref{eq:sborda})). 
Our main result {\em significantly improves} the constant factor hardness of approximation result of~\cite{SkworonFS15} and shows the following: 

\begin{theorem}
\label{thm:main_lb}
\label{thm:hardness}
Unless $\zpp = \np$, no polynomial-time algorithm can distinguish between instances with $\opt \ge (1 - o(1)) \cdot \frac{m+1}{k+1}$ from those with either:
\begin{enumerate}
    \item $\opt \le \left(\frac{m+1}{k+1}\right)^{\delta}$, where $\delta \in (0,1)$ is a constant; or
    \item $\opt \le \frac{1}{k^{\alpha}} \cdot \frac{m+1}{k+1}$, where $\alpha > 0$ is a constant.
\end{enumerate}
\end{theorem}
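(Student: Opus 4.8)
The plan is first to establish (or invoke) the hardness of \rc{} in the following ``polynomial-gap'' form, which in the $\zpp$ regime follows from the PCP machinery underlying the hardness of \textsc{Max Clique} and the chromatic number: under $\zpp \ne \np$ one cannot distinguish \textsc{Yes} instances, in which some $k$ sets cover \emph{all} $N$ elements, from \textsc{No} instances, in which every choice of $k$ sets covers only an $o(1)$-fraction of the $N$ elements; moreover the instance may be taken to be regular, with each element in the same small number $d$ of sets. Given such an instance I would build a $1$-Borda instance: the candidates are the $M$ sets of the system together with $D$ \emph{dummy} candidates, so $m = M + D$, where $D$ is a polynomial in the input size chosen only to inflate the benchmark $\tfrac{m+1}{k+1}$ (regularity keeps $d \ll \tfrac mk$); the voters are the $N$ elements, and voter $e$ ranks the $d$ sets containing $e$ in its top $d$ positions (in any order) and all other $m-d$ candidates --- the sets avoiding $e$ together with all dummies --- in the remaining positions in a \emph{uniformly random} order, drawn independently for each voter. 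The committee size is $k$.

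Completeness is immediate: given a \textsc{Yes} instance, the committee consisting of the $k$ covering sets gives every voter a representative among her top $d$ candidates, so $\opt \le d$, which by the choice of $D$ is at most $\left(\tfrac{m+1}{k+1}\right)^{\delta}$ in the first case and at most $\tfrac1{k^{\alpha}}\cdot\tfrac{m+1}{k+1}$ in the second. For soundness, fix any committee $T$ of $k$ candidates and let $U(T)$ be the set of voters whose element is not covered by any set in $T$. For $e \in U(T)$ every member of $T$ lies outside $e$'s top $d$ positions, so in $e$'s ranking the members of $T$ occupy a uniformly random $k$-subset of $\{d+1,\dots,m\}$; hence $e$'s representative has rank $X_e$ with $\E[X_e] = d + \tfrac{m-d+1}{k+1} \ge \tfrac{m+1}{k+1}$, and $X_e \in [1,m]$ always. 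Since the $X_e$ for $e \in U(T)$ are independent, Hoeffding's inequality shows that $\sum_{e \in U(T)} X_e \ge (1-o(1))\,|U(T)|\,\tfrac{m+1}{k+1}$ except with probability much smaller than $m^{-k}$; a union bound over all at most $m^k$ committees shows that, with high probability over the reduction's randomness, \emph{every} committee $T$ has $1$-Borda score at least $(1-o(1))\,\tfrac{|U(T)|}{N}\cdot\tfrac{m+1}{k+1} \ge (1-o(1))\,\tfrac{m+1}{k+1}$, where the last inequality uses that in a \textsc{No} instance every $k$ sets leave a $(1-o(1))$-fraction of the voters uncovered (even fewer are covered if $T$ contains dummies). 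Thus completeness holds always and soundness holds with high probability, so composing the randomized reduction with a hypothetical polynomial-time distinguisher gives a $\crp$ algorithm for the \rc{} gap problem, whence $\zpp = \np$.

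The main obstacle is obtaining the base hardness in exactly this form. The standard $\left(1-\tfrac1e\right)$-hardness of \textsc{Max K-Cover} is insufficient, because the soundness argument needs the uncovered fraction to tend to $1$ --- a constant uncovered fraction only places a constant, rather than $1-o(1)$, in front of $\tfrac{m+1}{k+1}$ --- so one needs the genuinely polynomial-gap statement that every $k$ sets cover an $o(1)$-fraction, which is where the $\zpp$-regime PCP machinery enters, and one simultaneously needs the system to be regular so that, for an uncovered voter, an arbitrary committee really does land at $k$ uniform positions; reconciling regularity with the polynomial gap takes a separate regularization argument. The remaining ingredients are routine: the random interspersing of the dummies is precisely what prevents a bad committee from still getting good ranks on its uncovered voters, and the Hoeffding-plus-union-bound step and the calibration of $D$ (a fixed polynomial in the input size suffices, a larger one for the second benchmark than for the first) are standard --- which is why a single construction, with different amounts of padding, yields both parts of the theorem.
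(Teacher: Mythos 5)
Your soundness argument needs every committee $T$ to leave a $(1-o(1))$-fraction of voters uncovered, and you obtain this by postulating a ``polynomial-gap'' hardness for regular \textsc{Max K-Cover}: under $\zpp \neq \np$, one cannot distinguish instances where some $k$ sets cover everything from instances where every $k$ sets cover only an $o(1)$-fraction. This proposed base hardness is false, unconditionally: the greedy algorithm for \textsc{Max K-Cover} always produces, in polynomial time, $k$ sets covering at least a $(1-1/e)$-fraction of what the best $k$ sets cover. So if a \textsc{Yes} instance admits full coverage, greedy covers at least $(1-1/e)N$ elements; an instance in which every $k$-subset covers only $o(N)$ elements cannot sit on the \textsc{No} side of such a promise --- greedy alone would separate them. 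This is exactly the obstruction you correctly flagged for a constant-gap version, and invoking stronger PCP machinery does not make it go away: the gap in coverage for \textsc{Max K-Cover} is provably bounded by a constant.

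The paper sidesteps this by keeping Feige's standard \textsc{No} promise --- every $k$ sets cover at most $\tfrac{3}{4}n$ elements --- and extracting the $1-o(1)$ factor from a case analysis on the \emph{composition of the committee} rather than from a $1-o(1)$ uncovered fraction. Write $T = R \cup D$ where $R$ are critical (set-)candidates and $D$ are the $d$ dummies, and let $n - n'$ be the number of uncovered voters. Regularity (each set of size $n/k$) gives $n - n' \ge d\cdot\tfrac{n}{k}$ in addition to $n - n' \ge \tfrac{n}{4}$. Uncovered voters rank all of $R$ at the very bottom, so their $1$-Borda cost is driven by the $d$ dummies and is about $(1-\varepsilon)\tfrac{m+1}{d+1}$. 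If $d+1 \le \tfrac{k+1}{4}$, the $\tfrac14$ uncovered fraction gives $\opt \gtrsim \tfrac14\cdot\tfrac{m+1}{d+1} \ge \tfrac{m+1}{k+1}$; if $d+1 \ge \tfrac{k+1}{4}$, the $\tfrac{d}{k}$ uncovered fraction gives $\opt \gtrsim \tfrac{d}{k}\cdot\tfrac{m+1}{d+1} = \tfrac{d}{d+1}\cdot\tfrac{k+1}{k}\cdot\tfrac{m+1}{k+1} \to \tfrac{m+1}{k+1}$. In both cases the constant coverage loss is traded off against the dummy count, and the $1-o(1)$ emerges as $k \to \infty$. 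Your construction of the preference profile (random dummy placement, Hoeffding, union bound over committees, appeal to regularity) matches the paper's; the missing ingredient is precisely this dummy-count case analysis, not a stronger cover hardness.
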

This construction yielding this theorem is delicate. We require the full power of Feige's hardness proof of {\sc Max Cover}~\cite{Feige}, in particular, that it works on ``regular'' instances where each set has the same size, and where a collection of disjoint sets cover the instance completely in the ``YES'' case.

Theorem~\ref{thm:main_lb} motivates us to define the score $\frac{m+1}{k+1}$ as a reasonable benchmark for this problem, and we call any efficient algorithm achieving this score as an ``optimal algorithm''. Such a benchmark is appealing in that it helps us analyze other simple and natural algorithms that have been proposed in literature, and perform a more fine-grained comparison. As we have already seen, the \g{} algorithm is always within a factor of $2$ of this benchmark. 

We now observe that if we pick a subset of $k$ candidates at random from $\C$, the expected score is exactly the benchmark $\frac{m+1}{k+1}$. We therefore denote the score $\frac{m+1}{k+1}$ as $\rand$. Now, we can design a deterministic optimal algorithm via derandomizing this randomized algorithm. Interestingly, we show that this derandomization yields a greedy algorithm that is exactly the same as the \b{} algorithm proposed in~\cite{Heuristics} as a polynomial time heuristic for this problem. In that work, the \b{} algorithm was derived by viewing the problem as a cooperative game where players are candidates, and coalitions are committees, and adapting the notion of Banzhaf score of coalitions~\cite{Banzhaf,dubeyS}. It was emprically shown to be a very effective heuristic for this problem, beating \g{} on most instances. We justify this empirical observation by viewing the \b{} algorithm instead as a derandomization of an optimal randomized algorithm. 

In summary, we show the following theorem. 
\begin{theorem}
\label{thm:main_banzhaf}
The \b{} algorithm achieves a minimization objective of at most $\rand = \frac{m+1}{k+1}$ in polynomial time, and is a $\left(1 - \frac{1}{k+1}\right)$-approximation to the maximization objective of $1$-Borda.
\end{theorem}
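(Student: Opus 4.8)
The plan is to read off the bound $\rand=\frac{m+1}{k+1}$ from a random committee, derandomize by the method of conditional expectations, observe that this derandomization is exactly the \b{} algorithm, and finally deduce the maximization statement.

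\textbf{Step 1 (the random committee).} I would first show that a uniformly random $k$-subset $T$ of $\C$ satisfies $\E[r_{\V}(T)] = \frac{m+1}{k+1}$. By linearity it suffices to fix a voter $v$; then $c \mapsto r_v(c)$ is a bijection from $\C$ onto $\{1,\dots,m\}$, and $v$'s score $\min_{c\in T} r_v(c)$ is the minimum of a uniformly random $k$-subset of $\{1,\dots,m\}$. Using $\Pr[\min \ge j] = \binom{m-j+1}{k}/\binom{m}{k}$ together with the hockey-stick identity $\sum_{i=k}^{m}\binom{i}{k} = \binom{m+1}{k+1}$ gives $\E[\min] = \binom{m+1}{k+1}/\binom{m}{k} = \frac{m+1}{k+1}$, and averaging over $v \in \V$ proves the claim. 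This is precisely why $\rand = \frac{m+1}{k+1}$ is the right benchmark.

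\textbf{Step 2 (derandomization).} For a partial committee $S$ with $|S| = i \le k$, define $\phi(S) = \E[r_{\V}(T)\mid S \subseteq T]$, where $T$ is a uniformly random $k$-subset of $\C$ containing $S$ (equivalently, $S$ augmented by a uniformly random $(k-i)$-subset of $\C\setminus S$). Then $\phi(\emptyset) = \frac{m+1}{k+1}$ by Step 1, $\phi(T) = r_{\V}(T)$ whenever $|T|=k$, and, since drawing a random $(k-i)$-subset of $\C\setminus S$ is the same as first drawing one element $c$ uniformly and then a random $(k-i-1)$-subset of the remainder, $\phi(S) = \frac{1}{m-i}\sum_{c\in\C\setminus S}\phi(S\cup\{c\})$. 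Hence at every step there is a candidate $c$ with $\phi(S\cup\{c\}) \le \phi(S)$, so the greedy rule that repeatedly adds a candidate minimizing $\phi(S\cup\{c\})$ outputs a committee $T$ with $r_{\V}(T) = \phi(T) \le \phi(\emptyset) = \frac{m+1}{k+1}$. I would also check that $\phi(S)$ is polynomial-time computable: per voter, conditioned on the best rank $\rho$ already attained in $S$, the contribution is $\E\bigl[\min(\rho,\, \text{best rank among the random completion restricted to the } \rho-1 \text{ still-available better-ranked candidates})\bigr]$, which admits a closed form analogous to Step 1; so the algorithm runs in polynomial time.

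\textbf{Step 3 (identification with \b{}).} This is the crux. I would recall the definition of the \b{} score of a candidate $c$ relative to a current committee $S$ from~\cite{Heuristics} and show that $\phi(S) - \phi(S\cup\{c\})$ equals this score up to an additive constant and a positive multiplicative constant, neither depending on $c$; consequently the candidate minimizing $\phi(S\cup\{c\})$ is exactly the one \b{} selects, and the two algorithms coincide step for step, so \b{} inherits the bound $r_{\V}(T) \le \rand$ from Step 2. The work is combinatorial bookkeeping: expanding the conditional expectation over random completions of size $k-i-1$ into an average of marginal contributions $r_{\V}(Q) - r_{\V}(Q\cup\{c\})$ over sub-coalitions $Q \subseteq S$, and matching this (after the appropriate normalization and restriction on coalition sizes) to the Banzhaf-value formula of~\cite{Heuristics}. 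I expect the main obstacle to be making these two weighted averages line up exactly rather than anything conceptually deep.

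\textbf{Step 4 (maximization).} Finally, the committee $T$ above has satisfaction score $m+1 - r_{\V}(T) \ge (m+1)\left(1-\frac{1}{k+1}\right)$, while the optimal satisfaction score is $m+1-\opt \le m+1$; dividing yields that \b{} is a $\left(1-\frac{1}{k+1}\right)$-approximation to the maximization version of $1$-Borda, exactly as Theorem~\ref{thm:greedymax} followed from the greedy minimization bound.
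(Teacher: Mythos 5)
Your proposal is correct and matches the paper's argument: the paper proves $\b \le \rand$ by exactly the induction you sketch, maintaining the invariant $\frac{1}{\binom{m-j}{k-j}}\sum_{S \supseteq T_j, |S|=k} r_{\V}(S) \le \rand$, which is precisely your $\phi(T_j) \le \rand$. The one thing you flag as ``the crux'' (Step 3) is actually trivial in the paper's setup, because the Preliminaries already define \b{} as the rule that at step $j$ picks $c_j$ minimizing $\sum_{S \supseteq T_{j-1}\cup\{c_j\}, |S|=k} r_{\V}(S)$ --- i.e., exactly your $\phi(T_{j-1}\cup\{c_j\})$ up to the constant normalization $\binom{m-j}{k-j}$ --- so no bookkeeping against the raw Banzhaf-value formula is needed; the paper simply cites~\cite{Heuristics} for polynomial-time computability rather than re-deriving the per-voter closed form.
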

To complete the picture, we show that an easy consequence of Theorem~\ref{thm:main_lb} is that the approximation factor of $\left(1 - \frac{1}{k+1}\right)$ is best possible for the maximization version unless $\np = \zpp$.

\paragraph{Committee Monotonicity.} One appealing property of \g{} is that it is \emph{committee-monotone}~\cite{Elkind17}: The committee found for a smaller $k$ is always a subset of a committee found for larger $k$'s. This is immediate because \g{} adds the next candidate to the committee based on the improvement in the $1$-Borda score, and this improvement does not depend on $k$. On the other hand, the \b{} algorithm requires knowledge of $k$ at each greedy step, and is therefore not committee-monotone. We therefore ask: Is there a committee-monotone algorithm that can achieve the benchmark $\rand$? In Section~\ref{sec:monotone}, we answer this question in the negative: There exist instances where any committee-monotone algorithm has score at least $1.015 \cdot \rand$. This shows a separation between committee-monotone algorithms and an optimal algorithm such as the \b{} algorithm.

\paragraph{Connection to the Core.} The notion of core from cooperative game theory is appealing as a notion of fairness, and provides a strong notion of proportionality. Informally, in a core solution, every reasonably large subgroup of voters is happy in the sense that they do not all prefer the same candidate outside the chosen committee. Formally, the work of~\cite{JiangMW20,ChengJMW20} defines an $\alpha$-approximate core as follows. Fix some $\alpha \ge 1$. Given a committee $T$ of size $k$, a candidate $c$ is blocking if at least $\alpha \cdot \frac{n}{k}$ voters prefer $c$ to any candidate in $T$, that is,
\begin{equation}
    \label{eq:core}
\left| \Big\{v \in \V \ \Big| \ r_v(c) < \min_{c' \in T} r_v(c') \Big\} \right| \ge \alpha \cdot \frac{n}{k}.
\end{equation}
A committee $T$ is in the $\alpha$-approximate core if it does not admit  a blocking candidate. The work of~\cite{ChengJMW20,JiangMW20} shows that a $16$-approximate core always exists and can be computed in polynomial time, while a $(2-\varepsilon)$-approximate core is not guaranteed to exist for any constant $\varepsilon > 0$.

In Section~\ref{sec:core}, we show that the core indeed achieves a stronger notion of proportionality than the $1$-Borda score in the following sense: Any $\alpha$-approximate core solution has $1$-Borda score at most $\alpha (1+1/k) \cdot \rand$. The converse of this statement is however false: None of \opt{}, \g{}, or \b{} lies in an $\alpha$-approximate core for any constant $\alpha$. 

\subsection{Results for $s$-Borda Score}
We next consider the $s$-Borda score for $1 < s \le k$. We start with an analysis of the natural extensions to the greedy algorithms considered above for $s = 1$. It is easy to show that choosing a random committee of size $k$ yields expected score $\frac{s(s+1)}{2} \cdot \rand$, where as before, $\rand = \frac{m+1}{k+1}$. This implies its derandomization -- the \b{} algorithm -- has score at most $\frac{s(s+1)}{2} \cdot \rand$. Furthermore, there are instances where the best possible score $\opt \ge \frac{s(s+1)}{2} \cdot \rand$.

\paragraph{Analysis of \g{}.}
The \g{} algorithm extends naturally to this setting. In Section~\ref{sec:sborda_greedy} and Appendix~\ref{app:sborda}, we extend the result in Section~\ref{sec:greedy} to show that \g{} achieves an $s$-Borda score of at most $2s^2 \cdot \rand$, which is within a factor of $\frac{4s}{s + 1}$ of the upper bound for the \b{} algorithm.  

For the maximization version, recall that the score of candidate $c$ for voter $v$ is $m + 1 - r_c(v)$, and the voter's score for a committee is the sum of top $s$ candidate scores. Since the maximum possible score at most $ms$, this directly implies the following theorem.  For $s = o(k)$, this again {\em significantly improves} on the classic submodularity-based analysis that only shows a $\left(1 - \frac{1}{e}\right)$-approximation.
\begin{theorem}
\g{} is a  $\left(1-\frac{2s}{k+1}\right)$-approximation for the maximization version of $s$-Borda.
\end{theorem}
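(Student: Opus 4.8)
The plan is to obtain the theorem as a direct corollary of the dissatisfaction bound $r_{\V}(\g) \le 2s^2 \cdot \rand = 2s^2 \cdot \frac{m+1}{k+1}$ already established for \g{} on the $s$-Borda objective. The first step is to record the exact relationship between the maximization and minimization objectives. Fix any committee $T$ with $|T| = k \ge s$. For a voter $v$, the size-$s$ subset $Q \subseteq T$ minimizing $\sum_{c \in Q} r_v(c)$ is exactly the size-$s$ subset maximizing $\sum_{c \in Q} (m+1-r_v(c))$, i.e.\ $v$'s $s$ most preferred candidates in $T$. Hence, writing $f(T)$ for the maximization score of $T$,
\[
f(T) = \frac{1}{n} \sum_{v \in \V} \max_{Q \subseteq T,\, |Q| = s} \sum_{c \in Q} (m+1-r_v(c)) = s(m+1) - r_{\V}(T).
\]

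Next I would substitute. Applying this identity to the greedy committee and using $r_{\V}(\g) \le 2s^2 \cdot \frac{m+1}{k+1}$ gives
\[
f(\g) = s(m+1) - r_{\V}(\g) \ge s(m+1) - 2s^2 \cdot \frac{m+1}{k+1} = s(m+1)\left(1 - \frac{2s}{k+1}\right).
\]
Each voter's maximization score is at most $s(m+1)$ (in fact at most $sm$), so $f(\opt) \le s(m+1)$, where $\opt$ simultaneously minimizes $r_{\V}$ and maximizes $f$ by the identity above. If $2s \le k+1$, then $1 - \frac{2s}{k+1} \ge 0$ and therefore $f(\g) \ge \left(1-\frac{2s}{k+1}\right) s(m+1) \ge \left(1-\frac{2s}{k+1}\right) f(\opt)$; if $2s > k+1$ the claimed ratio is nonpositive and the statement is trivial. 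This yields the $\left(1 - \frac{2s}{k+1}\right)$-approximation.

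This conversion is routine; the real content lies in the dissatisfaction bound $r_{\V}(\g) \le 2s^2 \cdot \rand$, which I would establish (in Section~\ref{sec:sborda_greedy}/Appendix~\ref{app:sborda}) by extending the Cauchy--Schwarz-based amortized analysis used for $s=1$, and this is where I expect the main obstacle. The difference is that each voter now maintains up to $s$ ``representation slots'' rather than one, so when \g{} adds a candidate the per-voter decrease in $r_{\V}$ is the gap between that voter's current $s$-th best rank in the committee and the rank of the new candidate (when the latter is smaller). Comparing the greedy committee after step $i$ against a fixed size-$k$ witness and applying Cauchy--Schwarz to these per-voter gaps should give a one-step decrease that is quadratic in the current overall gap $r_{\V}(T_i) - \opt$, with the top-$s$ bookkeeping contributing a factor that grows like $s^2$; feeding this into the same discrete differential-inequality argument as in the $s=1$ case then caps the final value at $2s^2 \cdot \frac{m+1}{k+1}$. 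The delicate points will be the first $s-1$ steps (when voters have fewer than $s$ representatives and the objective must be defined and handled carefully), correctly accounting for the ``replacement'' structure of the top-$s$ set when a candidate is inserted, and tracking constants tightly enough that the exponent of $s$ is exactly $2$.
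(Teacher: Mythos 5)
Your conversion argument is correct and essentially identical to the paper's: from $r_{\V}(\g{}) \leq 2s^2\cdot\rand$, the identity $f(T) = s(m+1) - r_{\V}(T)$, and the trivial bound $f(\opt)\leq s(m+1)$, the $\left(1-\frac{2s}{k+1}\right)$-approximation follows immediately (the paper uses $ms$ as its trivial upper bound, which is equally valid), and you correctly observe that the claim is vacuous when $2s>k+1$.

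One remark on your closing sketch of the ingredient $r_{\V}(\g{})\leq 2s^2\cdot\rand$, which you treat as already established but then outline: the paper's proof does not compare the greedy committee against a fixed size-$k$ witness or against $\opt$, and its quadratic term is not in the gap $r_{\V}(T_i)-\opt$. Instead, it lower-bounds the greedy step by the \emph{average} improvement over all $m-k$ not-yet-chosen candidates, which equals $\frac{1}{2n(m-k)}\sum_{v}\rho_{v,k,s}(\rho_{v,k,s}-1)$ where $\rho_{v,k,s}$ is voter $v$'s current $s$-th best rank in $T_k$; Cauchy--Schwarz is applied to the $\rho_{v,k,s}$ to pass to $R_k=\frac{1}{n}\sum_v\rho_{v,k,s}\geq X_k/s$, producing the recurrence $X_{k+1}\leq X_k-\frac{X_k^2}{2(m+1)s^2}$ on $X_k=r_{\V}(T_k)$, which is then closed by induction. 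The resulting bound is absolute, independent of $\opt$ --- which is precisely what your conversion step requires. A witness-based comparison would naturally produce a bound relative to $\opt$, which would not plug into your $f(\opt)\leq s(m+1)$ argument in the same way, so be careful to keep the per-step estimate unconditional if you flesh out that appendix.
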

Note that for the related maximum multi-cover problem~\cite{Barman}, the approximation factor of $(1-1/e)$ is actually tight for \g{}. Therefore, our analysis of \g{} points to fundamental algorithmic differences between {\sc Max Multi-Cover} (resp. {\sc Max Cover}~\cite{Feige}) and $s$-Borda (resp. $1$-Borda), since we obtain significantly better factors for the latter.

\paragraph{Improved Algorithm.}
In contrast with the $s=1$ case, for larger values of $s$, we can obtain a non-trivial improvement over these greedy algorithms (for the minimization version of $s$-Borda) by using the natural LP relaxation for this problem~\cite{Byrka}. In Section~\ref{sec:lp}, we devise a randomized algorithm that is based on carefully combining dependent rounding of this LP solution with choosing a committee by uniform random sampling. We show that this  algorithm achieves expected score 
\[
\alg \le 3 \cdot \opt + O\left(s^{3/2} \log s \right) \cdot \rand,
\]
where $\opt = \min_{T \subseteq \C, |T| = k} r_{\V}(T)$ is the minimum possible $s$-Borda score of any committee. 

This result improves on the aforementioned bounds for greedy algorithms when $\opt$ is small. For instance, if $\opt = O(s^{3/2}) \cdot \rand$, the improvement is $\tilde{\Omega}(\sqrt{s})$.  We note that such an improved bound cannot be achieved by \g{} when $\opt$ is small: In Section~\ref{sec:sborda_g_lb}, we show instances where $\opt = o(1) \cdot \rand$, while the score of \g{} is $\Omega(s^2)\cdot \rand$. On the flip side, our improved bound is based on solving and rounding an LP, and is therefore not as simple or intuitive as the \g{} or \b{} algorithms.


\subsection{Related Work}
The literature on multiwinner elections is too vast to survey here. We present a survey of computational results in this space to place our work in context.

Suppose the candidates and voters are embedded in a metric space, and suppose  the score $r_v(c)$ is not the Borda score, but instead the metric distance between voter $v$ and candidate $c$. Then the objective for the $s = 1$ case is precisely the celebrated $k$-medians objective~\cite{AryaGKMMP04,JainV,LinV,CharikarGTS}, while the general-$s$ case has been studied as fault-tolerant $k$-medians~\cite{fault}. For both these problems, constant-factor approximation algorithms are known. The versions we consider can therefore be viewed as ordinal versions of $k$-medians and fault tolerant $k$-medians respectively. Towards showing better bounds for the ordinal versions, it is tempting to impose a condition such as the ordinal preferences of voters should correspond to distances in some underlying metric space. However, it is easy to show that given any set of ordinal preferences, there is a metric space that can realize these preferences, which means this assumption does not help. Nevertheless, the LP relaxation we use to derive improved bounds for the $s$-Borda score is the same as the standard LP relaxation for the (fault-tolerant) $k$-medians objective~\cite{fault}. It is an interesting open question to explore what other natural assumptions on voter preferences will lead to improved upper bounds for the $1$-Borda and $s$-Borda objectives.

The work of~\cite{Elkind17} considers generalizations of the $1$-Borda and $s$-Borda scores to \emph{committee scoring rules}. A committee scoring rule is a function that for each voter $v$ and committee $T$ of size $k$, maps the set of ordinal ranks $\{r_v(c), c \in T\}$ to a score. The $s$-Borda score we consider is an example of a decomposable rule, meaning that the score can be written as a sum of contributions from the committee members. The work of~\cite{SkworonFL,SkworonFS15} defines a special case of committee scoring rules where the score is a weighted sum of ranks of the committee members. They call these Ordered Weighted Average (OWA) rules. Again, it is easy to see that the $s$-Borda rule is an OWA rule. For the maximization version of committee scoring rules, the \g{} algorithm continues to be a $\left(1 - \frac{1}{e}\right)$-approximation via submodularity. It is an interesting open question to extend the results in this paper to other rules that achieve diverse or proportional representation~\cite{BrillFST19,AzizB20}.

Finally, the work of~\cite{Byrka} considers the variant where the score $r_v(c)$ is an arbitrary cardinal value, which is different from our focus on ordinal preferences. They consider the ``harmonic'' OWA rule where a voter assigns weight $1$ to candidate in the committee with lowest score, weight $\frac{1}{2}$ to the candidate with second lowest score, and so on, till weight $\frac{1}{k}$ to the candidate with highest score. They show this version has a constant-factor approximation algorithm by randomized rounding of the natural LP relaxation, first used in~\cite{cornuejols1983}. The difficulty with the $s$-Borda rule is that the weight jumps discretely from $1$ to $0$ when we move from the top $s$ candidates for a voter to the $(s+1)^{\text{st}}$ candidate. This discontinuity is most pronounced for $s = 1$, and leads to our strong impossibility result. In essence, this discontinuity is what motivates us to consider an alternate benchmark to analyse the performance of natural greedy algorithms assuming ordinal preferences.

\section{Preliminaries}
\label{secLprelim}

We consider the problem of selecting a subset of cardinality $k$ from a set $\C$ of $m$ candidates. We call this subset a \emph{committee}. A set $\V$ of $n$ voters express their preferences on the candidates ordinally. Each voter $v$ has a bijective ranking function $r_v : \C \rightarrow \{1, 2, \ldots, m\}$, and $v$ prefers those $c$'s with smaller $r_v(c)$. For example, the top-ranked candidate of $v$, denoted by $c_{\mathrm{top}(v)}$, satisfies $r_v\left(c_{\mathrm{top}(v)}\right) = 1$, and the bottom-ranked $c_{\mathrm{bot}(v)}$ satisfies $r_v\left(c_{\mathrm{bot}(v)}\right) = m$.

In $s$-Borda score, the cost for a voter $v$ of a committee $T$ is the sum of her ranks of the top $s$ candidates in $T$: $r_v(T) = \min_{Q \subseteq T, |Q| = s} \sum_{c \in Q} r_v(c)$. Further, the $s$-Borda score ($s \leq k$) of a committee $T$ is the average cost for all voters:
\[
r_{\V}(T) = \frac{1}{n} \sum_{v \in \V} r_v(T) = \frac{1}{n} \sum_{v \in \V} \left(\min_{Q \subseteq T, |Q| = s} \sum_{c \in Q} r_v(c)\right).
\]
In particular, when $s = 1$, $r_{\V}(T) = \frac{1}{n} \sum_{v \in \V} \min_{c \in T} r_v(c)$.

Fix a voter $v$ and look at her ranking on $\C$. If we pick a random size-$k$ subset of $\C$, the $t^{\text{th}}$ smallest rank is $t \cdot \frac{m + 1}{k + 1}$ in expectation. (See Appendix~\ref{app:folklore} for a proof of this well-known fact.)  
 Therefore, the expected performance of a random committee is
\[
\E_{T \subseteq \C}[r_{\V}(T)] = \frac{1}{n} \sum_{v \in \V} \E_{T \subseteq \C}[r_{v}(T)] = \sum_{t = 1}^s t \cdot \frac{m + 1}{k + 1} = \frac{s(s + 1)}{2} \cdot \frac{m + 1}{k + 1}.
\]
Define the benchmark $\rand$ to be the expected performance of a random committee when $s = 1$ as $\rand = \frac{m + 1}{k + 1}$. We will justify this benchmark in the subsequent sections.

We consider two simple committee-selection rules: \g{} and \b{}. These algorithms run in $k$ iterations, during which they build sets $\varnothing = T_0 \subsetneq T_1 \subsetneq \cdots \subsetneq T_k$, and declare $T_k$ as the selected committee.

In the $j^{\text{th}}$ iteration, \g{} picks candidate $c_j \in \C \setminus T_{j - 1}$ that minimizes $r_{\V}(T_{j - 1} \cup \{c_j\})$, and let $T_j = T_{j - 1} \cup \{c_j\}$. \b{}~\cite{Banzhaf,Heuristics}, on the other hand, picks candidate $c_j \in \C \setminus T_{j - 1}$ to minimize
\[
\sum_{\substack{S \subseteq \C: |S| = k \\ S \supseteq T_{j - 1} \cup \{c_j\}}} r_{\V}(S)
\]
in the $j^{\text{th}}$ iteration, and then sets $T_j = T_{j - 1} \cup \{c_j\}$. In other words, it greedily picks the candidate that minimizes the final score if the rest of the committee is chosen uniformly at random. Both \g{} and \b{} can run in polynomial time~\cite{Heuristics}.

Throughout the paper, we use \rand{}, \g{} and \b{} to denote either the algorithms or their performances, which should be clear from the context.

\section{Analysis of \g{} for $1$-Borda}
\label{sec:greedy}
In this section, we analyze the performance of \g{}, evaluated with respect to the benchmark \rand{}. Throughout this section, we only consider the $1$-Borda score, i.e., $s = 1$. We first show an upper bound that $\g \leq 2 \cdot \rand$, and then present an almost-matching lower-bound instance where $\g > 1.962 \cdot \rand$.

\subsection{Upper Bound}
\label{sec:greedy_ub}
Now we show $\g \leq 2 \cdot \rand$ as an upper bound. We first present the following lemma, which gives a lower bound on the improvement at each iteration.
\begin{lemma}
Let $T_t$ and $T_{t + 1}$ be the set of candidates produced by \g{} in the $t^{\text{th}}$ and $(t + 1)^{\text{st}}$ iterations, and $r_{\V}(T_t)$, $r_{\V}(T_{t + 1})$ be their respective score. We have:
\[
r_{\V}(T_t) - r_{\V}(T_{t + 1}) \geq \frac{\sum_{v \in \V} r_v(T_t)(r_v(T_t) - 1)}{2n(m - t)}.
\]
\label{lem:minimum_improvement}
\end{lemma}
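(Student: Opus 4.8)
The plan is to fix an iteration $t$ and exhibit a single candidate whose addition to $T_t$ achieves at least the claimed improvement; since \g{} picks the best candidate, the improvement it actually makes is at least this much. First I would argue that there exists a candidate $c \notin T_t$ such that adding $c$ to $T_t$ "helps a lot" of voters, where a voter $v$ is helped by an amount that depends on her current score $r_v(T_t)$. The natural averaging argument is this: for each voter $v$, look at the candidates ranked $1, 2, \ldots, r_v(T_t) - 1$ by $v$ — there are exactly $r_v(T_t) - 1$ of them, all strictly better for $v$ than her current representative, and none of them lies in $T_t$. So among the $m - t$ candidates outside $T_t$, voter $v$ would strictly improve if we picked any one of these $r_v(T_t) - 1$ candidates, and moreover, if we picked the candidate she ranks $i$ (for $1 \le i \le r_v(T_t) - 1$), her score drops from $r_v(T_t)$ to at most $i$, an improvement of at least $r_v(T_t) - i$.

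The key step is to sum these per-voter improvements over a uniformly random choice of $c \notin T_t$. Summing the guaranteed improvement $\max(0, r_v(T_t) - i)$ as $c$ ranges over all $m-t$ candidates outside $T_t$: for a fixed voter, the candidates she ranks $1, \dots, r_v(T_t)-1$ contribute at least $\sum_{i=1}^{r_v(T_t)-1} (r_v(T_t) - i) = \sum_{j=1}^{r_v(T_t)-1} j = \frac{r_v(T_t)(r_v(T_t)-1)}{2}$, and the remaining candidates contribute at least $0$. Hence the total improvement, summed over all candidates $c \notin T_t$ and all voters, is at least $\sum_{v \in \V} \frac{r_v(T_t)(r_v(T_t)-1)}{2}$. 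By averaging, there exists a single candidate $c^\star \notin T_t$ whose addition improves the \emph{total} voter score $\sum_v r_v(\cdot)$ by at least $\frac{1}{m-t} \sum_{v \in \V} \frac{r_v(T_t)(r_v(T_t)-1)}{2}$. Dividing by $n$ converts total score to the average score $r_\V$, and since \g{} picks the candidate minimizing $r_\V(T_t \cup \{c\})$, we get $r_\V(T_t) - r_\V(T_{t+1}) \ge \frac{1}{n(m-t)} \sum_{v \in \V} \frac{r_v(T_t)(r_v(T_t)-1)}{2}$, which is exactly the claimed bound.

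The main obstacle — really the only subtlety — is making sure the per-voter improvement bound is stated correctly when $r_v(T_t) = 1$ (the voter is already perfectly represented): then there are no candidates ranked above her representative, the sum $\sum_{i=1}^{0}$ is empty, and the term $r_v(T_t)(r_v(T_t)-1)/2 = 0$ correctly contributes nothing, so no case analysis is needed. One should also be careful that adding $c$ to $T_t$ never \emph{increases} any voter's score (it can only help or do nothing), so the improvements we discard by writing $\max(0,\cdot)$ are genuinely nonnegative and the averaging step is valid. Everything else is the elementary identity $\sum_{j=1}^{\ell} j = \ell(\ell+1)/2$ applied with $\ell = r_v(T_t)-1$.
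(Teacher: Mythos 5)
Your proof is correct and follows essentially the same approach as the paper: compute the sum of marginal improvements $r_{\V}(T_t) - r_{\V}(T_t \cup \{c\})$ over all $c \notin T_t$ (by swapping the order of summation and observing each voter $v$ contributes $\sum_{j=1}^{r_v(T_t)-1} j$), and then apply an averaging argument over the $m-t$ candidates outside $T_t$. The only cosmetic difference is that the paper notes this sum is an exact equality, whereas you conservatively phrase some steps as inequalities, which does not affect the conclusion.
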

\begin{proof}
For a candidate $c \notin T_t$, define $\Delta_c := r_{\V}(T_t) - r_{\V}(T_t \cup \{c\})$, i.e., the current marginal contribution of $c$ to the $1$-Borda score. Taking the sum of $\Delta_c$ over $c \notin T_t$:
\[
\sum_{c \in \C \setminus T_t} \Delta_c = \frac{1}{n} \sum_{v \in \V} \sum_{j = 1}^{r_v(T_t) - 1} j = \frac{\sum_{v \in \V} r_v(T_t)(r_v(T_t) - 1)}{2n}.
\]

\g{} chooses $c^* = \argmax_c \Delta_c$ at the $(t + 1)^{\text{st}}$ iteration, giving us
\[
r_{\V}(T_t) - r_{\V}(T_{t + 1}) = \Delta_{c^*} \geq \frac{1}{m - t} \sum_{c \in \C \setminus T_t} \Delta_c = \frac{\sum_{v \in \V} r_v(T_t)(r_v(T_t) - 1)}{2n(m - t)}. \qedhere
\]
\end{proof}

Now we prove our upper bound of $2$.
\begin{theorem}
$\g \leq 2 \cdot \rand$.
\label{thm:ub_greedy_1}
\end{theorem}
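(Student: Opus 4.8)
The plan is to convert Lemma~\ref{lem:minimum_improvement} into a scalar one-step recurrence for the score $a_t := r_{\V}(T_t)$ and then telescope a suitably chosen potential. First I would rewrite the lemma's numerator in terms of $a_t$ alone: by Cauchy--Schwarz (equivalently, the power-mean inequality), $\frac1n\sum_{v\in\V} r_v(T_t)^2 \ge \bigl(\frac1n\sum_{v\in\V} r_v(T_t)\bigr)^2 = a_t^2$, so
\[
\sum_{v\in\V} r_v(T_t)\bigl(r_v(T_t)-1\bigr) \;=\; \sum_{v\in\V} r_v(T_t)^2 - \sum_{v\in\V} r_v(T_t) \;\ge\; n\,a_t(a_t-1)\;\ge\; n\,(a_t-1)^2 ,
\]
the last step using $a_t\ge 1$. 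Substituting into Lemma~\ref{lem:minimum_improvement} gives the clean recurrence $a_t - a_{t+1} \ge \dfrac{(a_t-1)^2}{2(m-t)}$. This is exactly the promised quadratic-in-the-gap improvement, with ``gap'' $a_t - 1$.

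Next I would pass to the potential $p_t := \dfrac{1}{a_t - 1}$. If $a_k = 1$ then $r_{\V}(T_k) = 1 \le \rand$ and there is nothing to prove; otherwise $(a_t)_t$ is non-increasing and stays strictly above $1$, so every $p_t$ is positive, with $p_0 = \tfrac1m$ under the natural convention $r_v(\varnothing) = m+1$ (equivalently, one handles the first step by hand using $a_1 \le \tfrac{m+1}{2}$, which is immediate by averaging $r_{\V}(\{c\})$ over $c$, and telescopes the remaining steps). Since $0 < a_{t+1} - 1 \le a_t - 1$,
\[
p_{t+1} - p_t \;=\; \frac{a_t - a_{t+1}}{(a_{t+1}-1)(a_t-1)} \;\ge\; \frac{a_t - a_{t+1}}{(a_t-1)^2} \;\ge\; \frac{1}{2(m-t)},
\]
using first $(a_{t+1}-1)(a_t-1) \le (a_t-1)^2$ and then the recurrence; thus the per-step gain is cleanly additive. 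Telescoping over $t = 0,\dots,k-1$ yields $p_k \ge \frac1m + \frac12\sum_{j=m-k+1}^{m}\frac1j$.

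To finish I would lower-bound the harmonic tail by the AM--HM (again Cauchy--Schwarz) inequality, $\sum_{j=m-k+1}^{m}\frac1j \ge \frac{k^2}{\sum_{j=m-k+1}^{m} j} = \frac{2k}{2m-k+1}$, so that
\[
p_k \;\ge\; \frac1m + \frac{k}{2m-k+1} \;\ge\; \frac{1}{2m-k+1} + \frac{k}{2m-k+1} \;=\; \frac{k+1}{2m-k+1},
\]
where the middle inequality is just $2m-k+1 \ge m$. Inverting, $a_k = 1 + \dfrac{1}{p_k} \le 1 + \dfrac{2m-k+1}{k+1} = \dfrac{2(m+1)}{k+1} = 2\,\rand$, as claimed.

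The main obstacle is pinning down the \emph{constant} $2$ rather than merely an $O(1)$ bound. A naive induction such as $r_{\V}(T_t) \le \frac{2(m+1)}{t+1}$ breaks down for small $t$, where the recursion has not yet begun to contract quickly enough; the fix is to find the right potential — here $1/(r_{\V}(T_t)-1)$ — whose per-step increase is the clean, $a_t$-free quantity $\frac{1}{2(m-t)}$, and then to observe that the two Cauchy--Schwarz applications together with the trivial bound $2m-k+1 \ge m$ leave no room beyond the factor $2$. This tightness is consistent with the almost-matching lower bound $\g > 1.962\,\rand$ established in the next subsection.
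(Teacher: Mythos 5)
Your proof is correct, and it takes a genuinely different route from the paper's. The paper proves the bound by induction on $k$: it assumes $r_{\V}(T_{k-1}) \le \frac{2(m+1)}{k}$, splits into cases according to whether $r_{\V}(T_{k-1})$ is already below $\frac{2(m+1)}{k+1}$, and in the harder case combines Lemma~\ref{lem:minimum_improvement} with Cauchy--Schwarz and the side condition $r_{\V}(T_{k-1}) \ge \frac{2(m+1)}{k+1}$ to turn the denominator $m-k+1$ into $m+1$, after which it maximizes the resulting concave quadratic over the admissible range. You instead weaken the numerator from $a_t(a_t-1)$ to $(a_t-1)^2$ to obtain the self-contained one-step recurrence $a_t - a_{t+1} \ge \frac{(a_t-1)^2}{2(m-t)}$ with no side condition, pass to the reciprocal potential $p_t = \frac{1}{a_t-1}$ so that the gain per step is the clean additive quantity $\frac{1}{2(m-t)}$, telescope, and finish with AM--HM on the harmonic tail plus the elementary bound $2m-k+1 \ge m$. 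The two arguments use the same two ingredients (Lemma~\ref{lem:minimum_improvement} and Cauchy--Schwarz), and neither is shorter overall, but yours is structurally cleaner: it avoids the induction and case split, and it makes visible where the losses occur (the step $a_t(a_t-1)\ge(a_t-1)^2$, the AM--HM step, and $\frac1m \ge \frac{1}{2m-k+1}$), which is a nice complement to the paper's near-matching lower bound. The paper's version, on the other hand, preserves the stronger numerator $a_t^2$ and trades $m-k+1$ for $m+1$ only when it is safe, which is why it lands on exactly the same constant despite the different bookkeeping. One small presentational point: your parenthetical remark that "a naive induction $r_{\V}(T_t) \le \frac{2(m+1)}{t+1}$ breaks down for small $t$" slightly undersells the paper's proof, which does carry out precisely this induction — the case split it introduces is what rescues the small-$t$ regime; the potential-function route sidesteps the issue rather than being the only way around it.
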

\begin{proof}
We prove by induction. As the base case where $k = 1$, $\g \leq m < m + 1 =  2 \cdot \rand$.
Now suppose that the claim holds for some $k - 1$ and we will prove that it also holds for $k$. By induction hypothesis, we have:
\[
r_{\V}(T_{k - 1}) \leq 2 \cdot \frac{m + 1}{k}.
\]

If $r_{\V}(T_{k - 1}) \leq 2 \cdot \frac{m + 1}{k + 1}$, then $r_{\V}(T_k) \leq r_{\V}(T_{k - 1}) \leq 2 \cdot \frac{m + 1}{k + 1}$ finishes the proof. Thus, we only need to consider the following case:
\[
2 \cdot \frac{m + 1}{k + 1} < r_{\V}(T_{k - 1}) \leq 2 \cdot \frac{m + 1}{k}.
\]

We now have the following, where the first inequality is by Lemma~\ref{lem:minimum_improvement} and second by Cauchy-Schwarz inequality:
\begin{align*}
r_{\V}(T_{k - 1}) - r_{\V}(T_k) &\geq \frac{\sum_{v \in \V} r_v(T_{k - 1})(r_v(T_{k - 1}) - 1)}{2n(m - k + 1)} \\
&\geq \frac{\frac{1}{n}(\sum_{v \in \V}r_v(T_{k - 1}))^2 - \sum_{v \in \V}r_v(T_{k - 1})}{2n(m - k + 1)} \\
&= \frac{(\sum_{v \in \V}r_v(T_{k - 1}))^2}{2n^2(m + 1)} \cdot \frac{m + 1}{m - k + 1} \cdot \frac{\sum_{v \in \V}(r_v(T_{k - 1}) - 1)}{\sum_{v \in \V}r_v(T_{k - 1})}.
\end{align*}

Since $r_{\V}(T_{k - 1}) \geq 2 \cdot \frac{m + 1}{k + 1}$ by assumption, we have:
\begin{align*}
\frac{m + 1}{m - k + 1} \cdot \frac{\sum_{v \in \V}(r_v(T_{k - 1}) - 1)}{\sum_{v \in \V}r_v(T_{k - 1})} &\geq \frac{m + 1}{m - k + 1} \cdot \frac{2 \cdot \frac{m + 1}{k + 1} - 1}{2 \cdot \frac{m + 1}{k + 1}}\\
&= \frac{2(m + 1) - k - 1}{2(m + 1) - 2k} \geq 1.
\end{align*}
Combining the previous two inequalities, we therefore have:
\[
r_{\V}(T_{k - 1}) - r_{\V}(T_k) \geq \frac{(\sum_{v \in \V}r_v(T_{k - 1}))^2}{2n^2(m + 1)} = \frac{r_{\V}^2(T_{k - 1})}{2(m + 1)},
\]
which is equivalent to:
\[
r_{\V}(T_k) \leq - \frac{1}{2(m + 1)} r_{\V}^2(T_{k - 1}) + r_{\V}(T_{k - 1}).
\]

Notice that the right hand side is a quadratic function in $r_{\V}(T_{k - 1})$, which is monotonically increasing for $r_{\V}(T_{k - 1}) \leq m + 1$. Since $r_{\V}(T_{k - 1}) \leq 2 \cdot \frac{m + 1}{k} \leq m + 1$, the right hand side reaches its maximum at $2 \cdot \frac{m + 1}{k}$. Thus, we have:
\[
r_{\V}(T_k) \leq - \frac{1}{2(m + 1)}\cdot \left(\frac{2(m + 1)}{k}\right)^2 + \frac{2(m + 1)}{k} \leq \frac{2(m + 1)}{k + 1},
\]
which concludes our induction.
\end{proof}

\paragraph{Proof of Theorem~\ref{thm:greedymax}.} For the maximization version, the above result implies \g{} achieves score at least $(m+1) \cdot \left(1- \frac{2}{k+1} \right)$. Since the maximum possible score is $m$, this implies that \g{} is a $\left( 1 - \frac{2}{k+1} \right)$-approximation.

\subsection{Lower Bound}
\label{sec:greedy_lb}
Now we complement our result with a lower-bound example for \g.

\begin{theorem}
\label{thm:greedy_lb}
There exists an instance in which $r_{\V}(T_k) > 1.962\cdot \rand$.
\end{theorem}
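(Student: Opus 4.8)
The plan is to construct an explicit instance where \g{} is forced, at each step, to pick candidates whose per-voter improvements make the Cauchy-Schwarz inequality used in Theorem~\ref{thm:ub_greedy_1} nearly tight, so that the quadratic improvement bound cannot be beaten by more than a constant. Concretely, I would build an instance with $n$ voters and $m$ candidates, partitioned into two types: a small set of ``spiral'' candidates that \g{} will end up selecting, and a large population of ``decoy'' candidates whose ranks across voters are essentially random (or are arranged so that no decoy ever looks attractive to \g). The spiral candidates are indexed by the greedy step $t = 1, \ldots, k$, and the key design parameter is the sequence of ranks these candidates occupy in each voter's preference order: I want the voters to be split into groups (``layers'') so that after \g{} picks the first $t$ spiral candidates, the voters fall into a controlled profile where $r_v(T_t)$ takes only a few distinct values, and where the marginal gain of the next spiral candidate is exactly the maximum over all remaining candidates.

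The core of the construction is the recursion governing the layer ranks. Mirroring the upper-bound proof, if $r_{\V}(T_t) \approx x_t \cdot \rand$, then the improvement at the next step is roughly $x_t^2/(2(m+1)) \cdot (m+1) \cdot (\text{ratio terms}) \approx x_t^2 / 2 \cdot \rand / (m+1) \cdot (m+1)$, giving $x_{t+1} \approx x_t - x_t^2/2 \cdot (\text{something})$; solving the matching recurrence $x_{t+1} = x_t(1 - x_t/2)$ type relation forces $x_t \to $ a constant multiple of $1/t$, but to make Cauchy-Schwarz tight we need the $r_v(T_t)$ values to be \emph{concentrated}, i.e. all voters should have nearly the same value $r_v(T_t)$. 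That is achieved by choosing, in layer after layer, ranks that shrink geometrically by the golden ratio $\varphi = (1+\sqrt5)/2$: the candidate \g{} picks at step $t$ should, for the ``active'' voters of that layer, sit at rank $\approx (m+1)/(\text{prefix}) \cdot \varphi^{-(\text{layer index})}$, so that summing the geometric contributions across all $k$ steps yields a final score of $c \cdot (m+1)/(k+1)$ with $c$ computed from the golden-ratio geometric series, and one checks numerically that $c > 1.962$. I would verify the two things that make this work: (i) at each step the spiral candidate really is \g's choice, i.e. its $\Delta_c$ beats every decoy's $\Delta_c$ and every not-yet-chosen spiral candidate's $\Delta_c$ --- this is where the decoys must be arranged (e.g. each decoy is top-ranked by only a vanishing fraction of voters, or the ranks are randomized and a union bound over the polynomially many candidates and $k$ steps shows no decoy ever gives more than a $(1+o(1))$ factor of the spiral gain); and (ii) the resulting final score, computed exactly from the rank sequence, exceeds $1.962 \cdot \rand$.

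The main obstacle I expect is item (i): keeping \g{} ``on the spiral.'' It is easy to write down a rank sequence that would give a bad final score \emph{if} \g{} followed it, but one must ensure greedy is not tempted to deviate --- in particular that picking a decoy, or picking a later spiral candidate out of order, never yields a larger marginal improvement. The cleanest way is probably a randomized construction: assign the decoy candidates independent uniformly random positions in each voter's ranking (subject to the spiral candidates occupying their designated ranks), and argue via concentration that with high probability every decoy's marginal contribution is at most $(1+o(1))$ times the average, hence strictly below the spiral candidate's contribution, at every one of the $k$ steps simultaneously. A secondary technical nuisance is handling the boundary/rounding: the layers have to be sized so that the golden-ratio scaling is consistent with integrality of ranks and with $\sum$ of layer sizes equalling $n$, and so that the ``prefix'' denominators $(m - t + 1)$ in Lemma~\ref{lem:minimum_improvement} behave like $m+1$ up to $1+o(1)$; taking $k = \omega(1)$ but $k = o(m)$, and $m, n$ large, absorbs these errors into the $1.962$ versus $2$ slack.
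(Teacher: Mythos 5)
Your proposal captures the paper's high-level strategy accurately: a spiral of critical candidates at geometrically shrinking ranks with golden-ratio base, a large symmetric/random population of dummies, a verification that \g{} stays on the spiral, and a geometric-series computation at the end. You also correctly identify the main obstacle---keeping greedy on the spiral---and propose the right kind of fix (randomized decoys plus concentration, which the paper itself notes in a remark is an acceptable polynomial-size substitute for its deterministic all-permutations construction).

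However, there is a genuine gap at the step you flag as the ``main obstacle,'' and it is not resolved by concentration arguments about decoys. The ordering among \emph{spiral} candidates themselves is where the golden ratio actually does its work, and your sketch never explains why $\varphi = (\sqrt5-1)/2$ (equivalently, ranks shrinking by $1/((1+\sqrt5)/2)$) is forced. The paper's Lemma~\ref{lem:greedyopt2} is the crux: it shows that after picking a given spiral candidate, the marginal gain from a candidate one layer further along is $h(\theta)(g(\theta-1)-g(\theta)) \propto \varphi^{2\theta}$, and that jumping ahead by a layer gives contribution $\le \max(h(2)g(-1),\,h(1)(g(-1)-g(1)))$, which collapses to exactly the in-order gain using the identity $\varphi + \varphi^2 = 1$. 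Without this identity, the geometric scaling does not guarantee the spiral is followed in order. Your proposal treats this verification as routine (``one must ensure greedy is not tempted to deviate'') but the choice of base is dictated by it, not by making Cauchy--Schwarz tight.

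On that last point, your framing is also slightly off: in the actual construction the active ranks $r_v(T_t)$ spread by a factor of $\varphi$ across a layer, so Cauchy--Schwarz is \emph{not} nearly tight, and this is exactly why the achievable constant is $-\frac{4\varphi^2}{(\varphi+1)\ln\varphi}\approx 1.964$ rather than $2$. A construction that concentrated the ranks (to tighten Cauchy--Schwarz) would give a bigger constant but breaks the ordering lemma. The tension between these two desiderata is what the golden ratio optimally balances, and articulating that tension is what your sketch is missing.
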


\paragraph{Construction.} In the sequel, we will prove the above theorem. In the instance we construct, $m$, $n$, and $k$ are all sufficiently large. For convenience of illustration, we scale down the ranks by a factor of $m$: now the ranks are $\frac{1}{m}, \frac{2}{m}, \ldots, \frac{m - 1}{m}, 1$. As $m \to \infty$, $\frac{1}{m} \to 0$, so the set of ranking $\{\frac{1}{m}, \frac{2}{m}, \ldots, 1\}$ will become dense in $[0, 1]$, and thus we regard the ranking as being continuous from $0$ to $1$. Our goal becomes to construct an instance in which \g{} gives $r_{\V}(T_k) > 1.962 \cdot \frac{1}{k + 1}$. 

There are sufficiently many voters, enabling us to view them as a continuum from $0$ to $1$, forming a circle (the base in Fig.~\ref{fig:cylinder}) with angular position ranging from $0$ to $2 \pi$. Imagine that each voter writes down her favorite, her second favorite, \ldots, her least favorite candidate in that order vertically. The result is the side of a cylinder with height $1$, as depicted in Fig.~\ref{fig:cylinder}. Each point on the side identifies a candidate, whose distance to the top, $d$, indicates the corresponding voter ranks him as her $(dm)^{\text{th}}$ favorite candidate (i.e., the candidate has a rank of $d$ in the voter's preference after scaling).


\begin{figure}[H]
\centering
\begin{tikzpicture}[scale = 1.0, yscale = 2.5, xscale = 4.0]
\draw[domain=-1.002:1.002, variable=\x, samples=1000, smooth, very thick, dotted] plot ({\x}, {0.1+0.2*((1-\x*\x)^2)^0.25});
\draw[domain=-1.002:1.002, variable=\x, samples=1000, smooth, very thick] plot ({\x}, {0.1-0.2*((1-\x*\x)^2)^0.25});
\draw[domain=-1.002:1.002, variable=\x, samples=1000, smooth, ultra thick, color3] plot ({\x}, {2+0.2*((1-\x*\x)^2)^0.25});
\draw[domain=-1.002:1.002, variable=\x, samples=1000, smooth, ultra thick, color3] plot ({\x}, {2-0.2*((1-\x*\x)^2)^0.25});

\draw[domain=0.1:2, variable=\y, samples=20, smooth, very thick] plot ({-1}, {\y});
\draw[domain=0.1:2, variable=\y, samples=20, smooth, very thick] plot ({1}, {\y});

\draw[domain=-1:1, variable=\x, color1, samples=200, smooth, very thick] plot ({\x}, {2*(1-0.4*exp(-0.076587*(asin(\x)/180*pi+pi/2))) - 0.2*(1-\x*\x)^0.5});
\draw[domain=-1:1, variable=\x, color1, samples=200, smooth, very thick, dotted] plot ({\x}, {2*(1-0.4*((5^0.5-1)/2)^0.5*exp(-0.076587*(asin(-\x)/180*pi+pi/2))) + 0.2*(1-\x*\x)^0.5});

\draw[domain=-1:1, variable=\x, color1, samples=200, smooth, very thick] plot ({\x}, {2*(1-0.4*((5^0.5-1)/2)^1.0*exp(-0.076587*(asin(\x)/180*pi+pi/2))) - 0.2*(1-\x*\x)^0.5});
\draw[domain=-1:1, variable=\x, color1, samples=200, smooth, very thick, dotted] plot ({\x}, {2*(1-0.4*((5^0.5-1)/2)^1.5*exp(-0.076587*(asin(-\x)/180*pi+pi/2))) + 0.2*(1-\x*\x)^0.5});

\draw[domain=-1:1, variable=\x, color1, samples=200, smooth, very thick] plot ({\x}, {2*(1-0.4*((5^0.5-1)/2)^2.0*exp(-0.076587*(asin(\x)/180*pi+pi/2))) - 0.2*(1-\x*\x)^0.5});
\draw[domain=-1:1, variable=\x, color1, samples=200, smooth, very thick, dotted] plot ({\x}, {2*(1-0.4*((5^0.5-1)/2)^2.5*exp(-0.076587*(asin(-\x)/180*pi+pi/2))) + 0.2*(1-\x*\x)^0.5});

\draw[domain=-1:1, variable=\x, color1, samples=200, smooth, very thick] plot ({\x}, {2*(1-0.4*((5^0.5-1)/2)^3.0*exp(-0.076587*(asin(\x)/180*pi+pi/2))) - 0.2*(1-\x*\x)^0.5});
\draw[domain=-1:1, variable=\x, color1, samples=200, smooth, very thick, dotted] plot ({\x}, {2*(1-0.4*((5^0.5-1)/2)^3.5*exp(-0.076587*(asin(-\x)/180*pi+pi/2))) + 0.2*(1-\x*\x)^0.5});

\draw[domain=-1:1, variable=\x, color1, samples=200, smooth, very thick] plot ({\x}, {2*(1-0.4*((5^0.5-1)/2)^4.0*exp(-0.076587*(asin(\x)/180*pi+pi/2))) - 0.2*(1-\x*\x)^0.5});
\draw[domain=-1:1, variable=\x, color1, samples=200, smooth, very thick, dotted] plot ({\x}, {2*(1-0.4*((5^0.5-1)/2)^4.5*exp(-0.076587*(asin(-\x)/180*pi+pi/2))) + 0.2*(1-\x*\x)^0.5});

\draw[domain=-0.29:0.29, variable=\x, color2, samples=200, smooth, ultra thick] plot ({\x}, {2*(1-0.4*exp(-0.076587*(asin(\x)/180*pi+pi/2))) - 0.2*(1-\x*\x)^0.5});

\draw[domain=-0.02:0.02, variable=\y, color2, samples=20, smooth, very thick] plot ({-0.29}, {\y+2*(1-0.4*exp(-0.076587*(asin(-0.29)/180*pi+pi/2))) - 0.2*(1-(-0.29)*(-0.29))^0.5});

\draw[domain=-0.02:0.02, variable=\y, color2, samples=20, smooth, very thick] plot ({-0.10}, {\y+2*(1-0.4*exp(-0.076587*(asin(-0.10)/180*pi+pi/2))) - 0.2*(1-(-0.10)*(-0.10))^0.5});

\draw[domain=-0.02:0.02, variable=\y, color2, samples=20, smooth, very thick] plot ({0.10}, {\y+2*(1-0.4*exp(-0.076587*(asin(0.10)/180*pi+pi/2))) - 0.2*(1-(0.10)*(0.10))^0.5});

\draw[domain=-0.02:0.02, variable=\y, color2, samples=20, smooth, very thick] plot ({0.29}, {\y+2*(1-0.4*exp(-0.076587*(asin(0.29)/180*pi+pi/2))) - 0.2*(1-(0.29)*(0.29))^0.5});

\draw[domain=-0.188:0.188, variable=\x, color2, samples=200, smooth, ultra thick] plot ({\x}, {2*(1-0.4*((5^0.5-1)/2)^2.0*exp(-0.076587*(asin(\x)/180*pi+pi/2))) - 0.2*(1-\x*\x)^0.5});

\draw[domain=-0.02:0.02, variable=\y, color2, samples=20, smooth, very thick] plot ({-0.188}, {\y + 2*(1-0.4*((5^0.5-1)/2)^2.0*exp(-0.076587*(asin(-0.188)/180*pi+pi/2))) - 0.2*(1-(-0.188)*(-0.188))^0.5});

\draw[domain=-0.02:0.02, variable=\y, color2, samples=20, smooth, very thick] plot ({-0.114}, {\y + 2*(1-0.4*((5^0.5-1)/2)^2.0*exp(-0.076587*(asin(-0.114)/180*pi+pi/2))) - 0.2*(1-(-0.114)*(-0.114))^0.5});

\draw[domain=-0.02:0.02, variable=\y, color2, samples=20, smooth, very thick] plot ({-0.0382}, {\y + 2*(1-0.4*((5^0.5-1)/2)^2.0*exp(-0.076587*(asin(-0.0382)/180*pi+pi/2))) - 0.2*(1-(-0.0382)*(-0.0382))^0.5});

\draw[domain=-0.02:0.02, variable=\y, color2, samples=20, smooth, very thick] plot ({0.0382}, {\y + 2*(1-0.4*((5^0.5-1)/2)^2.0*exp(-0.076587*(asin(0.0382)/180*pi+pi/2))) - 0.2*(1-(0.0382)*(0.0382))^0.5});

\draw[domain=-0.02:0.02, variable=\y, color2, samples=20, smooth, very thick] plot ({0.114}, {\y + 2*(1-0.4*((5^0.5-1)/2)^2.0*exp(-0.076587*(asin(0.114)/180*pi+pi/2))) - 0.2*(1-(0.114)*(0.114))^0.5});

\draw[domain=-0.02:0.02, variable=\y, color2, samples=20, smooth, very thick] plot ({0.188}, {\y + 2*(1-0.4*((5^0.5-1)/2)^2.0*exp(-0.076587*(asin(0.188)/180*pi+pi/2))) - 0.2*(1-(0.188)*(0.188))^0.5});

\draw [very thick, ->] (1.2, 2) -- (1.2, 0.1);

\node [color3] at (-0.9, 2.2) {Voters};
\node [color1] at (-0.5, 1.25) {Critical Candidates};
\node [color2] at (0, 1) {Lower-Layer Candidates};
\node [color2] at (0, 1.45) {Higher-Layer Candidates};
\node at (1.45, 1.4) {Decreasing};
\node at (1.45, 1.28) {Preferences};

\node at (1.40, 2.0) {Rank: $0$};
\node at (1.40, 0.1) {Rank: $1$};

\draw [decorate, very thick, color2, decoration={brace,amplitude=5pt}] (-1, {2-0.8}) -- (-1, {2-0.8*(5^0.5-1)/2});
\draw [decorate, very thick, color2, decoration={brace,amplitude=5pt}] (-1, {2-0.8*(5^0.5-1)/2}) -- (-1, {2-0.8*((5^0.5-1)/2)^2});

\node [color2] at (-1.25, {(2-0.8+2-0.8*(5^0.5-1)/2)/2}) {$1^{\text{st}}$ Layer};
\node [color2] at (-1.25, {(2-0.8*(5^0.5-1)/2+2-0.8*((5^0.5-1)/2)^2)/2}) {$2^{\text{nd}}$ Layer};

\end{tikzpicture}
\caption{Construction of the Bad Instance for \g{}}
\label{fig:cylinder}
\end{figure}
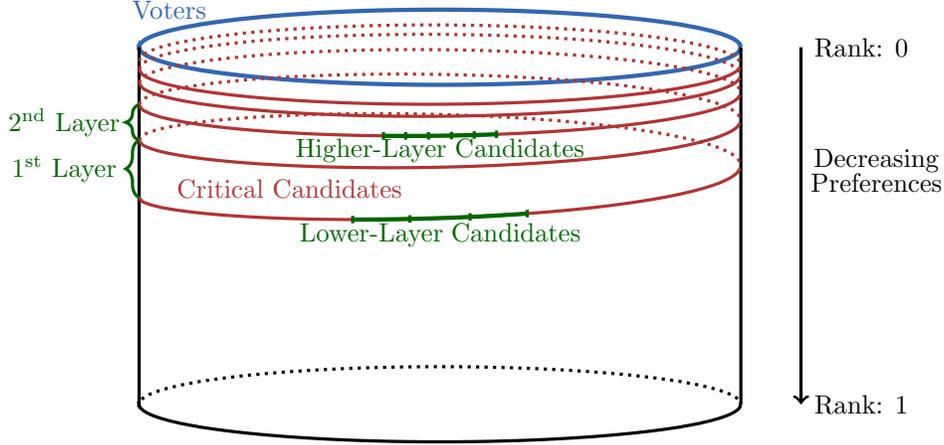

We divide the set of candidates into two types -- {\em critical} and {\em dummy}. The former set has size $k \ll m$, and the latter has size $m-k$. Our proof will show that \g{} will choose the critical candidates in a fixed order, and will not choose any dummy candidate.

The critical candidates are present in $\ell$ ``layers'' as shown in the red spiral in Fig.~\ref{fig:cylinder}, where $\ell$ is sufficiently large. This figure shows the ranks of the critical candidates in the voters' profiles. We parametrize this spiral by $\theta$, which maps to the voter at the corresponding angular position $2\pi\theta$. We place critical candidates in order, where each candidate appears a number of times consecutively on the spiral. Therefore, each voter has one critical candidate from each layer $t = 0, 1, \ldots, \ell$ in the spiral part of its ranking.

In the $t^{\text{th}}$ layer, the parameter $\theta$ lies in $[t - 1, t)$. The critical candidate when the parameter is $\theta$ has rank $g(\theta) = a \varphi^{\theta}$ for the voter at angular position $2\pi\theta$. Here, $\varphi$ denotes the golden ratio $\frac{\sqrt{5} - 1}{2} \approx 0.618$, and $a$ is a sufficiently small constant so that rounding to the nearest integer does not change the analysis. This critical candidate is placed for a certain length $h(\theta)$ on the spiral, which means this candidate appears at rank $g(\theta)$ for voters in the range $\left[2\pi\theta, 2\pi(\theta + h(\theta))\right]$. In our construction, $h(\theta)$ will be very small, so that we will say this candidate appears $h(\theta)$ times at rank $g(\theta)$ for parameter $\theta$. The greater $\theta$ is, the smaller $h(\theta)$ has to be, and we will calculate its expression later. 

For the convenience of analysis, at the layer $t = 0$, that is, for $\theta \in [-1,0)$, there is a special candidate appearing on the spiral throughout the layer. This special candidate is picked first by \g{}. Other than its appearance on the spiral, any critical candidate is placed at the very bottom, i.e., rank $1$, for the other voters. Denote the total number of critical candidates by $k$. Then we have $m - k$ dummy candidates. These dummy candidates are symmetrically placed at other ranks. We copy each voter $(m - k)!$ times, once for each possible permutation of the dummy candidates to place in the remaining ranks. 

The idea of this construction is to trick \g{} into picking every critical candidate on the spiral in order, while in fact, lower-layer critical candidates have no contribution to the objective once higher-layer ones have been selected. The following analysis computes the optimal parameters to realize this plan.

\paragraph{Not Choosing a Dummy Candidate.} We first ensure \g{} does not choose a dummy candidate in this instance by setting $h(\theta)$ properly. We assume that \g{} chooses critical candidates in increasing order of $\theta$, and we will justify this assumption later.

To simplify notation, denote $X = \int_{0}^1 a\varphi^{\theta}\d \theta$ and $Y = \int_{0}^1 a^2\varphi^{2\theta}\d \theta$. Computing these explicitly:
\[
X = \frac{a}{\ln\varphi}(\varphi - 1), \qquad Y = \frac{a^2}{2\ln\varphi}(\varphi^2 - 1) = X^2 \frac{(\varphi + 1)\ln\varphi}{2(\varphi - 1)}.
\]

Using this notation, consider the critical candidate at the beginning of the first layer, that is, at $\theta = 0$. Since \g{} chooses the candidate at layer $t = 0$, the decrease in score due to this critical candidate is:
\begin{equation}
\label{eq:improve_critical}
h(0) \cdot (g(-1) - g(0)) = h(0) \cdot a \cdot \left(\frac{1}{\varphi} - 1\right) = h(0) \cdot a \cdot \varphi.
\end{equation}
where we have used that since $\varphi$ is the golden ratio, $\varphi + \varphi^2 = 1$.

Now consider the dummy candidates. Just after \g{} has chosen the special candidate at layer $t = 0$, each such candidate improves the rank of $g(\theta-1)$ fraction of voters at $\theta \in [0, 1)$. This is because we placed all permutations of the dummy candidates with each voter $\theta$, and \g{} has already chosen the special candidate. By the same reasoning, conditioned on improvement, the average improvement is $g(\theta-1)/2$. Therefore, the decrease in score due to a dummy candidate is: 
\begin{equation}
\label{eq:improve_dummy}
\int_0^{1} \frac{g^2(\theta-1)}{2} \d \theta = \frac{a^2}{2\varphi^2} \int_0^{1} \varphi^{2\theta} \d \theta = \frac{1}{2\varphi^2} \cdot Y.
\end{equation}
Since we want \g{} to choose the critical candidate, we need to set
\[
h(0) = \frac{Y}{2\varphi^3 a}.
\]
By the symmetry of the spiral, an identical calculation now holds for all $\theta > 0$. To make \g{} choose the critical candidate at this location (assuming it has chosen critical candidates for smaller values of $\theta$), we need:
\[
h(\theta) = \frac{Y}{2\varphi^3 a} \varphi^{\theta}.
\]
Note that $h(\theta)$ depends linearly on $a$, so that for very small $a$, we can pretend this set of voters lies exactly at $\theta$. Further, $h(\theta)$ is decreasing with $\theta$.

\paragraph{Choosing Critical Candidates in Order.} We now show that \g{} chooses the critical candidates following the order on the spiral.
\begin{lemma}
\label{lem:greedyopt2}
\g{} chooses the critical candidates in increasing order of $\theta$.
\end{lemma}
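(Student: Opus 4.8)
The plan is to prove the lemma by induction on the greedy step, establishing the stronger invariant that after $j$ iterations \g{} has selected exactly the special candidate together with the $j-1$ critical candidates occupying the initial arc of the spiral (equivalently, the $j-1$ critical candidates with smallest $\theta$), and no dummy candidate. The base case is the first iteration: the special candidate appears at rank $g(\theta-1)=a\varphi^{\theta-1}$ for \emph{every} voter, so (after the $m$-scaling) its marginal gain is $1-o(1)$; a dummy candidate, being at a uniformly random one of the remaining ranks once we average over the $(m-k)!$ copies of each voter, has marginal gain $\approx \tfrac12$; and every spiral critical candidate helps only an $h(\cdot)$-fraction of voters and so has marginal gain $O(h(\cdot))=o(1)$. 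Hence \g{} picks the special candidate first, as required.

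For the inductive step, assume \g{} has chosen the special candidate and all $c(\theta)$ with $\theta<\tau$ for the current frontier $\tau$, and let $c(\tau)$ be the next critical candidate along the spiral; we must show $\Delta_{c(\tau)}\ge\Delta_c$ for every remaining candidate $c$. Dummy candidates are handled by the choice $h(\theta)=\tfrac{Y}{2\varphi^3 a}\varphi^{\theta}$: the current best rank of each voter and the marginal gain of a dummy both scale like $\varphi^{2\tau}$, and the calculation in the ``Not Choosing a Dummy Candidate'' paragraph---which, by the rotational symmetry of the spiral, is valid at an arbitrary frontier $\tau$ and not only just after the special candidate---shows $\Delta_{c(\tau)}$ dominates. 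The substantive case is a not-yet-chosen critical candidate $c(\theta')$ with $\theta'>\tau$. The $h(\theta')$ voters on whose spiral $c(\theta')$ sits at rank $g(\theta')$ currently have, as their best selected critical candidate, the one \emph{one layer further back} along the spiral, since the spiral has only been filled up to $\tau$; writing $\theta'=(t'-1)+\beta'$ and $\tau=(t^{*}-1)+\beta$ in layer/fractional-position coordinates, this best rank equals $a\varphi^{\beta'+\ell_{\max}-1}$ with $\ell_{\max}\in\{t^{*}-1,t^{*}\}$ according to whether $\beta'>\beta$ or $\beta'\le\beta$. Substituting $h(\theta')$ and using the golden-ratio identity $\varphi+\varphi^{2}=1$, the gain $\Delta_{c(\theta')}$ takes a clean closed form, and (in the representative case $\ell_{\max}=t^{*}$) the inequality $\Delta_{c(\tau)}\ge\Delta_{c(\theta')}$ reduces to $\varphi^{\,1-d+2(\beta-\beta')}\ge 1-\varphi^{d}$ with $d=t'-t^{*}\ge 1$ the layer gap; since $\beta-\beta'\le 1$ this follows from $\varphi^{2}=1-\varphi$ when $d\in\{1,2\}$ and from $\varphi^{3-d}\ge 1>1-\varphi^{d}$ when $d\ge3$, with the other case $\ell_{\max}=t^{*}-1$ entirely analogous. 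Thus $c(\tau)$ has the (weakly) largest marginal gain, \g{} picks it, the frontier advances, and the induction closes; as there are exactly $k$ critical candidates, \g{} selects all of them in spiral order and never touches a dummy.

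The main obstacle is the last case of the inductive step: correctly identifying, for a critical candidate several layers ahead of the frontier, which previously selected candidate currently represents its voters (this is the source of the case split on $\beta'$ versus $\beta$ and the value of $\ell_{\max}$), and then checking that the resulting marginal gain never exceeds $\Delta_{c(\tau)}$. This is delicate precisely because the instance is built to make the Cauchy-Schwarz step of the upper bound nearly tight, so \g{} is nearly indifferent between $c(\tau)$ and candidates with $\theta'$ just above $\tau$: the inequality $\varphi^{\,1-d+2(\beta-\beta')}\ge1-\varphi^{d}$ becomes an equality in the limit $\theta'\to\tau$, so one must either adopt the convention that \g{} breaks ties in favor of smaller $\theta$ or, equivalently, note that the discretization parameters $a$ and $h(\cdot)$ are taken small enough that exact ties do not arise in the finite instance. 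A secondary point needing care, already noted above, is re-verifying the dummy comparison at an arbitrary intermediate frontier rather than only at the first step.
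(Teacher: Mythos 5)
Your proof is correct and takes essentially the same approach as the paper's: an induction on greedy steps, with the key step being a direct comparison of marginal gains $\Delta_{c(\tau)}$ versus $\Delta_{c(\theta')}$ for $\theta'>\tau$, exploiting the self-similar spiral and the golden-ratio identity $\varphi+\varphi^2=1$. You go further than the paper in one respect worth noting: the paper carries out the comparison only at $\tau=0$ (just after the special candidate) and then asserts that ``the calculation is identical at any step'' and that the dummy comparison holds for all $\theta$ ``by the symmetry of the spiral,'' whereas you explicitly track the case split on whether the voters of $c(\theta')$ are currently represented by their layer-$t^*$ or layer-$(t^*-1)$ candidate (your $\ell_{\max}$ dichotomy based on $\beta'\lessgtr\beta$) and reduce the comparison to $\varphi^{1-d+2(\beta-\beta')}\ge1-\varphi^d$ --- a genuinely necessary check, since the intermediate state at a generic frontier is not literally a rescaled copy of the initial state (the discontinuity of the best-rank function at $\beta'=\beta$ is what your case split handles), and you correctly flag that the tie-breaking convention matters in the $\theta'\to\tau$ limit.
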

\begin{proof}
The calculation is identical at any step of \g{}, so we focus on the step where \g{} is at the beginning of the first layer, that is, considering the critical candidate at $\theta = 0$. Recall that \g{} has chosen the special candidate at layer $t = 0$. The previous analysis showed that the critical candidate at $\theta = 0$ yields decrease of $\frac{Y}{2 \varphi^2}$. For critical candidates in the same layer $t = 1$ (that is, for $\theta \in [0, 1)$), the contribution of the candidate at $\theta$ is
\[
h(\theta) \cdot (g(\theta-1) - g(\theta)) = \frac{Y}{2\varphi^3} \varphi^{\theta} \left(\varphi^{\theta - 1} - \varphi^{\theta}\right) = \frac{Y}{2\varphi^2} \cdot \varphi^{2\theta},
\]
which decreases with $\theta$, so that the current candidate, $\theta = 0$, offers the best decrease. Here, we have used that since $\varphi$ is the golden ratio, $\varphi^2 + \varphi = 1$.

For $t \ge 1$, suppose we instead considered a candidate $t +\theta$ for $\theta \in [0,1)$ located in layer $t+1$. Conditioned on having chosen layer $t = 0$, this candidate gives a contribution of
\begin{align*}
h(t + \theta) \cdot (g(\theta-1) - g(t + \theta)) &\leq h(t) \cdot (g(-1) - g(t))\\
&\leq \max\big(h(2) \cdot g(-1), \ h(1) \cdot (g(-1) - g(1))\big)\\
&= \max\left(\frac{Y}{2\varphi^2}, \ \frac{Y}{2\varphi^2a} \cdot a\left(\frac{1}{\varphi} - \varphi\right)\right) = \frac{Y}{2\varphi^2},
\end{align*}
where the first inequality uses that $h(\theta)$ is decreasing in $\theta$, and that $\varphi < 1$.

Therefore, \g{} will pick the critical candidate at $\theta = 0$ instead of another candidate at the same or a higher layer. Since the argument is identical at each $\theta$, \g{} picks critical candidates in order on the spiral.
\end{proof}

\paragraph{The Lower Bound.} So far we have shown that \g{} chooses critical candidates in increasing order of layers and does not choose dummy candidates. We finally put it all together and show the following bound, which completes the proof of Theorem~\ref{thm:greedy_lb}. 

\begin{proof}[Proof of Theorem~\ref{thm:greedy_lb}]
The number of critical candidates on the $t^{\text{th}}$ layer ($\theta \in [t - 1, t)$) is
\[
\int_{t - 1}^t \frac{1}{h(\theta)} \d \theta = \frac{2 \varphi^3 a}{Y} \int_{t - 1}^t \varphi^{-\theta} \d \theta = \frac{2 a}{\varphi^{t-3} Y} \int_{-1}^0 \varphi^{-\theta} \d \theta  = \frac{2 a}{\varphi^{t-3} Y} \int_{0}^1 \varphi^{\theta} \d \theta = \frac{2 X}{\varphi^{t - 3} Y}.
\]
Therefore, when it is done with the $\ell^{\text{th}}$ layer, the number of candidates \g{} has picked is
\[
k = \frac{2 X}{\varphi^{\ell - 3} Y} (1 + \varphi + \varphi^2 + \cdots + \varphi^{\ell - 1}) \rightarrow \frac{2 X}{(1 - \varphi)\varphi^{\ell - 3} Y}
\]
when $\ell$ is large. Meanwhile, the $1$-Borda score of \g{} is
\[
r_{\V}(T_k) = \int_{0}^1 g(\ell - 1 + \theta) \d \theta = \varphi^{\ell - 1} X.
\]
Therefore, the approximation ratio is
\[
(k+1) r_{\V}(T_k) \ge \frac{2 X}{(1 - \varphi)\varphi^{\ell - 3} Y} \cdot \varphi^{\ell - 1} X = \frac{2 \varphi^2 X^2}{(1 - \varphi) Y} = \frac{2\varphi^2}{(1 - \varphi)} \cdot \frac{2(\varphi - 1)}{(\varphi + 1) \ln \varphi} = -\frac{4 \varphi^2}{(\varphi + 1) \ln \varphi} > 1.962. \qedhere
\]
\end{proof}

\section{Hardness of $1$-Borda and Optimal Deterministic Algorithm}
\label{sec:lower}
Throughout this section, we focus on $1$-Borda score. We justify our choice of benchmark $\rand = \frac{m + 1}{k + 1}$, and show that a deterministic algorithm, \b{}, achieves this benchmark optimally. First, notice that if the input consists of one voter for each possible preference of $m$ candidates (thus $n = m!$), picking any committee has the same $1$-Borda score by symmetry, so \opt{} is just \rand. Thus, we have the following proposition:
\begin{proposition}
For any $m$ and $k$, there exist instances where $\opt = \rand$.
\end{proposition}

\subsection{Hardness Result for $1$-Borda: Theorem~\ref{thm:main_lb}}
We now show Theorem~\ref{thm:hardness}: Even if $\opt$ is very small, it is computationally hard to significantly beat $\rand$. 
To prove this hardness result, we show a reduction from the decision version of the \rc{} problem. 
\begin{definition}
In \rc{}, these is a universe $U$ of $n$ elements $\{a_1, a_2, \ldots, a_n\}$, and a family $\mathcal{F} = \{S_1, S_2, \ldots, S_z\}$ of subsets of $U$. Each $S_i$ has the same size $\frac{n}{k}$. The {\em value} of an instance is the maximum size of the union of $k$ sets from $\mathcal{F}$. For any constant $\varepsilon > 0$, we consider the following decision version:
\begin{itemize}
    \item ``YES'' instances are those with value $n$. Therefore, there exist $k$ disjoint sets each of size $n/k$ that cover all the elements.
    \item ``NO'' instances are those with value at most $\frac{3}{4}n$.
\end{itemize}
\label{def:k_cover}
\end{definition}



The above problem known to be $\nph$ to approximate via the following lemma that is implicit in the proof of Theorem 5.3 in~\cite{Feige}.

\begin{lemma}[\cite{Feige}]
The decision version of \rc{} from Definition~\ref{def:k_cover} is $\nph$, that is, unless $\p = \np$, there is no polynomial time algorithm that can decide always answers ``YES'' for ``YES'' instances and answers ``NO'' for ``NO'' instances.
\label{lem:cited_hardness}
\end{lemma}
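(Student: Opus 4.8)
The plan is to trace through Feige's proof of the hardness of \textsc{Max Cover}~\cite{Feige} and verify that the instances it produces already have the structural properties required by Definition~\ref{def:k_cover}: a family of sets all of the same size, where YES instances admit a cover by $k$ \emph{pairwise disjoint} sets of size $n/k$, and NO instances cannot cover more than a $\frac{3}{4}$ fraction of the universe with $k$ sets. The starting point is that Feige's reduction goes from a $k$-prover one-round proof system for \textsc{Max 3SAT-5} (equivalently, from a suitable PCP), and the ground set $U$ of the \textsc{Max Cover} instance is a set of \emph{partition systems} attached to the queries of the verifier. The key observation I would make precise is that, in the YES case, the honest provers' strategies induce exactly $k$ sets (one per prover), these sets partition $U$ because each query coordinate is ``hit'' by exactly one prover's answer, and each has size $n/k$ by the symmetry/regularity of the construction. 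This is exactly the ``disjoint exact cover'' promise.

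The main steps, in order, are: (i) recall the structure of Feige's $k$-prover protocol and the induced set system, emphasizing that each prover contributes a collection of equal-size sets (the regularity of \textsc{Max 3SAT-5} instances — every variable in exactly $5$ clauses — is what makes the sets equal-sized, and this is why the paper emphasizes ``regular'' instances); (ii) in the completeness case, show the honest provers yield $k$ disjoint sets whose union is all of $U$, giving value exactly $n$; (iii) in the soundness case, invoke Feige's analysis showing that if more than a $1-1/e+\varepsilon$ (or, for our weaker requirement, more than $\frac34$) fraction of $U$ can be covered by $k$ sets, then one can extract prover strategies succeeding with probability bounded below by a constant, contradicting the PCP soundness after suitable amplification; (iv) conclude that distinguishing the two cases is $\np$-hard, which is the statement of the lemma. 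Since we only need a $\frac34$ gap rather than the tight $1-1/e$ gap, the soundness argument is if anything easier and needs less amplification, so I would remark that any of Feige's parameter settings suffice.

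The main obstacle is purely expository rather than mathematical: Feige's paper states the hardness of \textsc{Max Cover} as an approximation-factor statement ($(1-1/e)$ being optimal) and does not isolate the ``regular, exactly-coverable-by-disjoint-sets'' promise version as a standalone lemma. So the work is to point to the exact place in the proof of his Theorem~5.3 where the set system is built and to extract the two structural facts — equal set sizes, and a disjoint exact cover in the YES case — from the construction rather than from the theorem's statement. I would handle this by citing the construction directly, noting that these properties are visible by inspection of the partition-system gadget, and observing that the $\frac34$ (versus $1-1/e \approx 0.632$) threshold gives us slack, so no delicate re-optimization of Feige's parameters is needed. I do not expect any new ideas to be required; the lemma is genuinely ``implicit in'' Feige's proof, and the proposal is essentially a careful pointer plus the two-line completeness argument for disjointness.
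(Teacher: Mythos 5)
Your proposal matches the paper's treatment: the paper gives no proof of Lemma~\ref{lem:cited_hardness} beyond noting it is implicit in Feige's proof of his Theorem~5.3, and your plan of tracing that reduction and reading the required structural facts directly off the construction is precisely what that citation presupposes. The three properties you target (equal set sizes, disjoint exact cover in the YES case, at most $\frac{3}{4}n$ coverage in the NO case) are exactly what Definition~\ref{def:k_cover} demands, and you are correct that the $\frac{3}{4}$ threshold gives comfortable slack over Feige's tight $1-1/e$, so no re-optimization of his parameters is needed. One caution on step~(ii): in Feige's max $k$-cover instance the honest YES cover is not ``one set per prover'' --- there, $k$ corresponds to the number of (prover, query) pairs, and the honest cover selects one set per such pair, with disjointness and equal sizes following from the partition-system gadget together with the regularity of the underlying $3$-CNF-$5$ instance --- so when you write this out, anchor the disjointness and equal-size claims to the partition-system mechanism (and to the fact that each (prover, query) pair occurs in the same number of random strings) rather than to a per-prover count.
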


Note that if the instance has value $n$, there exist $k$ disjoint sets each of size $n/k$ that cover all the elements. This aspect will be crucial in our reduction. Also needed in our reduction, we state the following lemma for constructing a profile with polynomial number of voters, where the best solution with score $\opt$ has similar performance as $\rand$.

\begin{lemma}
Fix any $\varepsilon > 0$ and let $n \ge \left\lceil\frac{m(k + 1)^2}{\varepsilon^2}\right\rceil$. Consider the instance where the preference of each voter is an independent and uniformly random permutation. Let $\opt'$ denote the expected value of the optimum score, and $\rand' = \frac{m+1}{k+1}$, then $\Pr[\opt \leq (1 - \varepsilon) \cdot \rand] < \frac{1}{2}$, where the probability is over the randomness in the permutations.
\label{lem:random_construction}
\end{lemma}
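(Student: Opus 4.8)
The plan is a standard first-moment-plus-concentration argument, followed by a union bound over committees. \textbf{Fix a committee $T$ of size $k$.} Its $1$-Borda score is $r_{\V}(T)=\frac1n\sum_{v\in\V}\min_{c\in T} r_v(c)$, an average of $n$ independent random variables, one per voter, since the rankings $r_v$ are drawn independently. Each summand $\min_{c\in T} r_v(c)$ lies in $[1,m-k+1]$, and — by the symmetry between ``random committee, fixed ranking'' and ``fixed committee, random ranking'' — it has expectation exactly $\frac{m+1}{k+1}=\rand$, which is precisely the $t=1$ case of the order-statistic fact recorded in the Preliminaries. Hence $\E[r_{\V}(T)]=\rand$, and Hoeffding's inequality gives
\[
\Pr\bigl[r_{\V}(T)\le(1-\varepsilon)\rand\bigr]\le\exp\!\left(-\frac{2n(\varepsilon\rand)^2}{(m-k)^2}\right)<\exp\!\left(-\frac{2n\varepsilon^2}{(k+1)^2}\right),
\]
where the last step uses $(m-k)^2<(m+1)^2$ together with $\rand=\frac{m+1}{k+1}$.

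\textbf{Union bound.} Since $\opt\le(1-\varepsilon)\rand$ exactly when \emph{some} size-$k$ committee attains score at most $(1-\varepsilon)\rand$,
\[
\Pr\bigl[\opt\le(1-\varepsilon)\rand\bigr]\le\binom{m}{k}\exp\!\left(-\frac{2n\varepsilon^2}{(k+1)^2}\right).
\]
It remains to verify that $n\ge\bigl\lceil m(k+1)^2/\varepsilon^2\bigr\rceil$ pushes this below $\tfrac12$: the exponent is then at least $2m$, so the bound is at most $\binom{m}{k}e^{-2m}$, and since $\binom{m}{k}\le 2^m$ we have $\ln 2+\ln\binom{m}{k}\le(m+1)\ln 2<2m$ for every $m\ge 1$, whence $\binom{m}{k}e^{-2m}<\binom{m}{k}/(2\binom{m}{k})=\tfrac12$. (Combined with the elementary fact that $\opt\le\rand$ holds for \emph{every} profile, since a random committee has expected score $\rand$, this shows $\opt\in[(1-\varepsilon)\rand,\rand]$ with probability $>\tfrac12$, which is the ``$\opt$ behaves like $\rand$'' conclusion the lemma is used for.)

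The only place requiring care is the union-bound step. The crude estimate $\binom{m}{k}\le m^k$ would cost $k\ln m$ in the exponent, which exceeds the available budget $2m$ when $k$ is a constant fraction of $m$; the fix is simply to use $\binom{m}{k}\le 2^m$ (equivalently, to note that when $k$ is close to $m$ there are very few committees to union-bound over). Everything else is a routine invocation of Hoeffding's inequality and of the expectation computed earlier in the paper, so in the write-up I would just state the exact form of Hoeffding used and fill in the arithmetic of the two displays.
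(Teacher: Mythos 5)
Your argument is correct and follows the paper's proof essentially step for step: the same expectation computation per committee, the same application of Hoeffding's inequality to the average of the $n$ independent per-voter minima, and the same union bound over the $\binom{m}{k}$ committees with $n$ chosen large enough to dominate the $\binom{m}{k}\le 2^m$ entropy term. The only cosmetic differences are that you use the tighter per-voter range $m-k$ before relaxing to $m+1$ where the paper uses $m$ directly, and you close the final inequality via $\binom{m}{k}e^{-2m}<\tfrac12$ rather than the paper's $e^{-m}<\tfrac12$.
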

\begin{proof}
Fix any committee $T$ of size $k$. Notice that $\E[r_{\V}(T)] = \rand'$ since the preferences are uniformly random. We have
\begin{align*}
\Pr[r_{\V}(T) - \rand' \leq -\varepsilon \cdot \rand'] &= \Pr\left[\frac{1}{n}\sum_{v \in \V} r_v(T) - \rand' > \varepsilon \cdot \rand'\right]\\
& \leq \exp\left(\frac{-2n (\varepsilon \cdot \rand')^2}{m^2}\right) \leq \exp\left(\frac{-2n \varepsilon^2}{(k + 1)^2}\right),
\end{align*}
where the second step comes from Hoeffding's inequality. By union bound,
\begin{align*}
\Pr[\opt' \leq (1 - \varepsilon) \cdot \rand'] &\leq \binom{m}{k} \cdot \Pr[r_{\V}(T) - \rand' \leq -\varepsilon \cdot \rand']\\
&\leq \exp\left(\frac{-2n \varepsilon^2}{(k + 1)^2} + m\right)\leq e^{-m} < \frac{1}{2}. \qedhere
\end{align*}
\end{proof}

Now we are ready to prove Theorem~\ref{thm:hardness}.
\begin{proof}[Proof of Theorem~\ref{thm:hardness}]
Fix a $\varepsilon > 0$ and let $\varepsilon' = 10 \varepsilon$.  We will choose $\varepsilon$ appropriately later. Given any instance of \rc{} with $n$ elements and $z$ sets each of size $n/k$ (as in Definition~\ref{def:k_cover}), we construct the following instance for our problem:
\begin{itemize}
    \item There are $N = n R$ voters $v_{ij}$ where $i \in [n]$ and $j \in [R]$.  We have $m = \frac{2}{\varepsilon'} k z$ candidates. The first $z$ candidates $\{c_1, c_2, \ldots, c_z\}$ are ``critical'' candidates, and the other $m-z$ candidates are ``dummy'' candidates. Each voter corresponds to an element in the universe and each critical candidate corresponds to a set in \rc{}. 
    \item If a set $S_i$ covers $a_j$, then voters $v_{ij}$ for $j \in [R]$ rank $c_j$ within top $\varepsilon'$ fraction. Otherwise, $v_{ij}$'s rank $c_j$ within bottom $\varepsilon'$ fraction.
    \item Independently for each voter, fill the rest of her preferences with the $m - z$ dummy candidates uniformly randomly. 
    \item  The copies of a voter only differ in the ranking of the dummy candidates. We set the number of copies to be $R = \left\lceil \frac{10 m k^2}{n \varepsilon^2} \right\rceil$. These copies are there to ensure Lemma~\ref{lem:random_construction} applies to the dummy candidates.
\end{itemize}

Clearly, the above construction has size $\mbox{poly}(1/\varepsilon,n,z,k)$. Let $\opt$ denote the optimal score on this instance. Recall that $\rand = \frac{m+1}{k+1}$. First suppose the instance of \rc{} has value $n$ (``YES'' instance) so that there are $k$ sets that cover all $n$ elements, then it is easy to check that choosing the corresponding critical candidates as the committee yields $\opt \le z < \varepsilon' \cdot \rand$. 

On the other hand, suppose the instance of \rc{} is such that any collection of sets of size $k$ only covers at most $(1-1/e + \varepsilon)n \le \frac{3}{4} n$ elements (``NO'' instance). Consider any committee $T$ and suppose $T = R \cup D$ where $R$ is a subset of critical candidates and $D$ is a subset of dummy candidates. Let $r = |R|$ and $d = |D| = k - r$. Let $n'$ be the number of elements $R$ covers in the $\rc{}$ instance. By assumption, $n - n' \ge \frac{n}{4}$ since any collection $R$ covers at most $\frac{3}{4} n$ elements. Further, since the instance is regular, $n - n' \le \frac{n}{k} r$, so that $n - n' \ge d \frac{n}{k}$.

Using Lemma~\ref{lem:random_construction}, with probability $> \frac{1}{2}$ over the choice of the ranking of the dummy candidates, the optimal score of $D$ on the $(n-n')R$ uncovered voters using the $m-z$ dummy candidates is greater than $(1-\varepsilon) \frac{m - z}{d + 1}$. Inserting the critical candidates cannot decrease this score for these voters, since the candidates in $R$ appear last in their ordering. Further, we have assumed $m = \frac{2}{\varepsilon'} k z$. Therefore, with probability $> \frac{1}{2}$, we have:
    $$ \opt > \frac{n - n'}{n} (1-\varepsilon) \frac{m - z}{d + 1} \ge \frac{n - n'}{n} \cdot \left(1 - \frac{\varepsilon'}{4}\right) \cdot \frac{m + 1}{d + 1}$$

We now split the analysis into two cases:

\begin{enumerate}
    \item Suppose $d + 1 \le \frac{k+1}{4}$. Since $n - n' \ge \frac{n}{4}$, we have 
    \[
    \opt > \frac{1}{4} \cdot \left(1 - \frac{\varepsilon'}{4}\right) \cdot \frac{m + 1}{d + 1} \ge \left(1 - \frac{\varepsilon'}{4}\right) \cdot \frac{m + 1}{k + 1} \ge  (1-\varepsilon') \rand.
    \]
    \item Suppose $d + 1 \ge \frac{k+1}{4}$. Since $n - n' \ge \frac{d}{k} n$, and since $d,k = \omega(1)$, we have:
    \[
    \opt > \frac{d}{k} \cdot \left(1 - \frac{\varepsilon'}{4}\right) \cdot \frac{m + 1}{d+1} = \frac{d}{d+1}\frac{k+1}{k} \left(1 - \frac{\varepsilon'}{4}\right) \cdot \frac{m + 1}{k+1} \ge (1-\varepsilon') \rand.
    \]
\end{enumerate}

Therefore, our construction ensures that with probability $> \frac{1}{2}$, we have $\opt \ge (1-\varepsilon') \rand$ if the original \rc{} instance has value at most $\frac{3}{4} n$.

Now suppose there is a polynomial time algorithm that can distinguish between instances with $\opt \le \varepsilon' \rand$ and $\opt \ge (1-\varepsilon') \rand$. Then, feeding the output of the above construction to this algorithm implies a $\crp$ algorithm for the decision version of \rc, which by Theorem~\ref{lem:cited_hardness} implies $\np \subseteq \crp$. Since $\rp \subseteq \np$, this implies $\rp \subseteq \crp$, so that $\zpp = \rp \cap \crp = \rp$. Since $\zpp$ is symmetric with respect to ``YES'' and ``NO'' instances, this implies $\zpp = \crp$, so that $\zpp = \np$. 

We now show how to set $\varepsilon$. For the first part of the theorem, we set $\varepsilon = \frac{1}{10} \left( \frac{k}{m} \right)^{1-\delta}$. This can be achieved by choosing $m$ such that $\left(\frac{m}{k}\right)^{\delta} = 20 k z$. Note that this ensures $m = \mbox{poly}(k,z)$ when $\delta$ is a constant, so that the construction runs in polynomial time. For this setting, we have $\varepsilon' \cdot \frac{m+1}{k+1} \le \left( \frac{m+1}{k+1} \right)^{\delta}$, while  $\varepsilon' = \left( \frac{k}{m} \right)^{1-\delta} = \left( \frac{1}{20kz} \right)^{\frac{1-\delta}{\delta}} = o(1)$, completing the proof.

For the second part of the theorem, we set $m = 20 k^{1+\alpha} z$, and $\varepsilon = \frac{1}{10} \frac{1}{k^{\alpha}}$. Again, we have $m = \mbox{poly}(k,z)$, and $\varepsilon' = o(1)$, completing the proof.
\end{proof}

Theorem~\ref{thm:hardness} now implies the following easy corollaries.
\begin{corollary}
Unless $\np = \zpp$, there is no $k^{\alpha}$-approximation to the $1$-Borda score for any constant $\alpha > 0$. Similarly, there is no $\left( \frac{m+1}{k+1} \right)^{1-\delta}$-approximation for any constant $\delta \in (0,1)$.
\end{corollary}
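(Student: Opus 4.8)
Both statements are immediate consequences of Theorem~\ref{thm:hardness}: the point is that a good approximation algorithm for the minimization problem would let us separate the two instance families that Theorem~\ref{thm:hardness} shows to be indistinguishable (unless $\zpp = \np$). The only wrinkle is to invoke Theorem~\ref{thm:hardness} with a slightly more aggressive exponent so that, after applying the approximation factor, the ``YES''-instance bound is still a lower-order term compared to the ``NO''-instance bound $(1-o(1))\rand$.

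For the first claim, suppose toward a contradiction that some polynomial-time algorithm $\alg$ always returns a committee of $1$-Borda score at most $k^{\alpha}\cdot\opt$ for a constant $\alpha > 0$. Instantiate part~2 of Theorem~\ref{thm:hardness} with the constant $2\alpha$ in place of $\alpha$: this produces polynomial-size instances that cannot be classified in polynomial time, where ``NO'' instances have $\opt \ge (1-o(1))\rand$ and ``YES'' instances have $\opt \le k^{-2\alpha}\rand$. Running $\alg$ on a ``YES'' instance yields score at most $k^{\alpha}\cdot k^{-2\alpha}\rand = k^{-\alpha}\rand$, while on a ``NO'' instance it yields score at least $\opt \ge (1-o(1))\rand$. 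Since $\alpha$ is constant and $k$ may be taken large in the reduction (the reduction already assumes $k = \omega(1)$), we have $k^{-\alpha}\rand < \tfrac12\rand < (1-o(1))\rand$ for large enough instances, so comparing $\alg$'s score against the threshold $\tfrac12\rand$ is a polynomial-time procedure that distinguishes the two families. This contradicts Theorem~\ref{thm:hardness} unless $\zpp = \np$.

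The second claim is handled the same way, using part~1 of Theorem~\ref{thm:hardness} with the constant $\delta/2$ in place of $\delta$. Writing $R = \frac{m+1}{k+1} = \rand$, a $R^{1-\delta}$-approximation turns a ``YES'' instance (with $\opt \le R^{\delta/2}$) into a committee of score at most $R^{1-\delta}\cdot R^{\delta/2} = R^{1-\delta/2} = o(R)$, whereas on a ``NO'' instance it returns score at least $(1-o(1))R$; since $R$ can be taken large in the construction, the threshold $\tfrac12 R$ again separates the cases, giving the same contradiction. I do not expect any genuine obstacle here — the entire argument is the standard ``approximation $\Rightarrow$ gap decision'' reduction, and the only care needed is the choice of the strengthened exponents ($2\alpha$, resp.\ $\delta/2$) together with the observation that $k$ and $\tfrac{m+1}{k+1}$ are large in the instances produced by Theorem~\ref{thm:hardness}, which ensures the requisite slack between the two regimes.
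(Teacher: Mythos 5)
Your proof is correct and is exactly the standard ``approximation implies gap-decision'' argument the paper has in mind when it calls the corollary an ``easy'' consequence of Theorem~\ref{thm:hardness} (the paper itself gives no proof). The one real observation — that you must invoke the theorem with a strictly more aggressive constant ($2\alpha$ in place of $\alpha$, resp.\ $\delta/2$ in place of $\delta$) so that the product of the approximation factor and the YES-instance bound is still $o(\rand)$ — is handled correctly; any constant exceeding $\alpha$ (resp.\ below $\delta$) would do equally well.
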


The next corollary adapts the hardness proof to the maximization version of the problem.

\begin{corollary}
For the maximization version of $1$-Borda, there is no polynomial time $\left(1 - \frac{1 - \varepsilon}{k+1} \right)$-approximation for constant $\varepsilon \in (0,1/2)$ unless $\np = \zpp$.
\end{corollary}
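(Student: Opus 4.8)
The plan is to reduce directly from the minimization hardness of Theorem~\ref{thm:hardness}, exploiting the exact bijection between the two objectives: a committee $T$ has maximization score $m+1-r_{\V}(T)$, and hence the maximization optimum on any instance equals $m+1-\opt$, with a committee realizing maximization value $\nu$ having minimization value $m+1-\nu$. I will invoke part~(1) of Theorem~\ref{thm:hardness} with a fixed $\delta$ bounded away from $1$ (say $\delta=\tfrac12$). That theorem produces, in (randomized) polynomial time, two kinds of instances that no polynomial-time algorithm can tell apart unless $\zpp=\np$: ``YES'' instances with $\opt\le\left(\frac{m+1}{k+1}\right)^{\delta}$, and ``NO'' instances in which \emph{every} committee $T$ satisfies $r_{\V}(T)\ge(1-\eta)\frac{m+1}{k+1}$ for some $\eta=\eta(m)=o(1)$.

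Next I translate these two cases to the maximization side. On a ``YES'' instance, the maximization optimum is at least $(m+1)-\left(\frac{m+1}{k+1}\right)^{\delta}=(m+1)(1-\xi)$, where $\xi:=\frac{1}{m+1}\left(\frac{m+1}{k+1}\right)^{\delta}=\left(\frac{k+1}{m+1}\right)^{1-\delta}\cdot\frac{1}{k+1}$. Since $\delta<1$ is constant and, in the construction underlying Theorem~\ref{thm:hardness}, $m$ is polynomial in the \rc{} parameters while $m/k\to\infty$, a one-line check gives $\xi=o(1/k)$. On a ``NO'' instance, every committee has maximization score at most $(m+1)-(1-\eta)\frac{m+1}{k+1}=(m+1)\left(1-\frac{1-\eta}{k+1}\right)$.

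Now suppose, for contradiction, there is a polynomial-time $\left(1-\frac{1-\varepsilon}{k+1}\right)$-approximation for the maximization version, with $\varepsilon\in(0,1/2)$ a fixed constant. Running it on a ``YES'' instance returns a committee of maximization score at least $\left(1-\frac{1-\varepsilon}{k+1}\right)(m+1)(1-\xi)\ge(m+1)\left(1-\frac{1-\varepsilon}{k+1}-\xi\right)$. This strictly exceeds the ``NO'' ceiling $(m+1)\left(1-\frac{1-\eta}{k+1}\right)$ exactly when $\frac{\varepsilon-\eta}{k+1}>\xi$, which holds for all sufficiently large instances because $\varepsilon$ is a positive constant while $\eta=o(1)$ and $\xi=o(1/k)$. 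Hence, thresholding the returned committee's maximization value at, say, $(m+1)\left(1-\frac{1-\varepsilon/2}{k+1}\right)$ — a quantity computable from $m$ and $k$ — distinguishes ``YES'' from ``NO'' instances of \rc{} in co-randomized polynomial time (the randomness entering only through Lemma~\ref{lem:random_construction} when building the hard instances, exactly as in Theorem~\ref{thm:hardness}), which by the same chain of inclusions used there forces $\np=\zpp$.

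The only real work is the bookkeeping in the last step: keeping $\delta$ away from $1$ so that the ``YES'' maximization optimum is within a $(1-o(1/k))$ factor of $m+1$, and checking that the additive slack $\frac{\varepsilon}{k+1}$ that the approximation guarantee concedes on the maximization side dominates both the rounding error $\xi$ and the $o(1)$ term $\eta$ inherited from Theorem~\ref{thm:hardness}. Everything else is immediate from the min--max bijection and the cited theorem.
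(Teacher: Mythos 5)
Your proof is correct and follows essentially the same route as the paper: reduce from the hardness of the minimization version of $1$-Borda via the exact bijection $\nu \mapsto m+1-\nu$, and check that the $\frac{\varepsilon}{k+1}$ slack in the maximization guarantee dominates the error terms. The only cosmetic difference is that you invoke the published statement of Theorem~\ref{thm:hardness} (part (1), with a fixed $\delta<1$, giving a YES bound $\opt \le \rand^{\delta} = o(\rand)$), whereas the paper's proof reuses the proof of Theorem~\ref{thm:hardness} with a small \emph{constant} gap parameter $\varepsilon'$ to get YES $\le \varepsilon'\rand$ and NO $\ge (1-\varepsilon')\rand$; both versions of the gap are more than enough to separate, so the arguments are interchangeable.
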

\begin{proof}
Set $\varepsilon > 0$ to be a small constant in the proof of Theorem~\ref{thm:main_lb}. Then, in the ``NO'' instance, the maximization score is at most $(m+1) \left(1 - \frac{1 - \varepsilon}{k+1} \right)$, while for the ``YES'' instance, the score is at least $(m+1) \left( 1 - \frac{\varepsilon}{k+1} \right)$. For $\varepsilon \in (0,1/2)$, the approximation factor achievable is therefore at most $\left( 1 - \frac{1-2\varepsilon}{k+1} \right)$, completing the proof.
\end{proof}




\subsection{An Optimal Deterministic Algorithm}
Given the lower bound and the hardness result, an immediate question is whether there is a deterministic rule to achieve the benchmark \rand{}. We answer in the affirmative: The \b{} algorithm~\cite{Banzhaf,dubeyS,Heuristics} can be viewed as a derandomization of \rand{}: Instead of randomly picking a candidate at each iteration, it picks the candidate that gives the best expected performance if the rest of the committee is randomly constructed. It is shown in~\cite{Heuristics} that this algorithm runs in polynomial time. The following theorem implies Theorem~\ref{thm:main_banzhaf}.

\begin{theorem}
$\b \leq \rand$.
\label{thm:banzhaf_better_than_rand}
\end{theorem}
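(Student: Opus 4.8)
## Proof Proposal for Theorem~\ref{thm:banzhaf_better_than_rand}

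\textbf{Approach.} The plan is to prove Theorem~\ref{thm:banzhaf_better_than_rand} by exhibiting \b{} as a derandomization of the random algorithm via the method of conditional expectations. Concretely, for a partial committee $A \subseteq \C$ with $|A| = j \le k$, define the potential
\[
\Phi(A) = \E\left[ r_{\V}(S) \;\middle|\; S \supseteq A,\ |S| = k \right],
\]
where $S$ ranges over a uniformly random $k$-subset of $\C$ containing $A$; equivalently, $\Phi(A)$ is the expected $1$-Borda score when we complete $A$ to a size-$k$ committee by choosing the remaining $k - j$ candidates uniformly at random from $\C \setminus A$. The goal is to show that \b{}'s greedy step never increases $\Phi$, so that $\b = \Phi(T_k) \le \Phi(T_0) = \rand$.

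\textbf{Key steps.} First I would record two endpoint identities: $\Phi(T_0) = \Phi(\varnothing) = \E_{S \subseteq \C}[r_{\V}(S)] = \rand$ by the folklore computation in Section~\ref{secLprelim} (the $s = 1$ case), and $\Phi(T_k) = r_{\V}(T_k) = \b$ since the only size-$k$ superset of a size-$k$ set is the set itself. Second, I would establish the averaging identity: for any $A$ with $|A| = j < k$,
\[
\Phi(A) = \frac{1}{m - j} \sum_{c \in \C \setminus A} \Phi(A \cup \{c\}).
\]
Third, I would observe that \b{}'s choice at iteration $j$ — the candidate $c_j$ minimizing $\sum_{S \supseteq T_{j-1} \cup \{c_j\}} r_{\V}(S)$ — is exactly $\argmin_{c \in \C \setminus T_{j-1}} \Phi(T_{j-1} \cup \{c\})$, because the number of completions $\binom{m-j}{k-j}$ is the same for every candidate $c \notin T_{j-1}$, so minimizing the sum is the same as minimizing the average. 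Combining the averaging identity with the fact that a minimum is at most an average, $\Phi(T_j) = \min_{c} \Phi(T_{j-1} \cup \{c\}) \le \Phi(T_{j-1})$ for every $j$, and telescoping from $j = k$ down to $j = 0$ gives $\b = \Phi(T_k) \le \Phi(T_0) = \rand$.

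\textbf{Main obstacle.} The only step requiring genuine care is the averaging identity, which rests on the elementary but essential fact that a uniformly random $(k-j)$-subset of $\C \setminus A$ has the same distribution as the process ``pick one element $c$ uniformly at random from $\C \setminus A$, then pick a uniformly random $(k-j-1)$-subset of $\C \setminus (A \cup \{c\})$''; conditioning on the first-picked element and applying the tower rule then yields the identity. Everything else is bookkeeping. I would also note that the same template proves the $s$-Borda statements quoted in the introduction verbatim, with $\Phi(\varnothing) = \frac{s(s+1)}{2}\cdot\rand$ in place of $\rand$, and that polynomial-time computability of \b{} is already cited from~\cite{Heuristics}, so it need not be re-derived here.
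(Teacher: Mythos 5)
Your proposal is correct and is exactly the paper's argument: the paper's inductive claim that
\[
\frac{1}{\binom{m-j}{k-j}} \sum_{\substack{S \subseteq \C:\, |S| = k \\ S \supseteq T_j}} r_{\V}(S) \le \rand
\]
for each $j$ is precisely your potential $\Phi(T_j)$ being bounded by $\rand$, and the paper's key inequality (minimum $\le$ average over $c \in \C \setminus T_{j-1}$, followed by the binomial identity $\binom{m-j+1}{k-j+1} = \tfrac{m-j+1}{k-j+1}\binom{m-j}{k-j}$) is your averaging identity and the min-vs-mean step in the method of conditional expectations. The only cosmetic difference is that you phrase it as $\Phi$ non-increasing along the greedy path and telescope, whereas the paper carries the explicit bound $\Phi(T_j) \le \rand$ through the induction.
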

\begin{proof}
Recall that \b{} builds sets $\varnothing = T_0 \subsetneq T_1 \subsetneq \cdots \subsetneq T_k$, where at step $j$, \b{} picks $c_j \in \C \setminus T_{j - 1}$ such that:
\begin{equation}
    \label{eq:b_opt}
c_j = \mbox{argmin}_{c \in \C \setminus T_{j - 1} } \ \sum_{\substack{S \subseteq \C: |S| = k \\ S \supseteq T_{j - 1} \cup \{c\}}} r_{\V}(S).
\end{equation}
We now use induction to show that for any $j \in \{0, 1, \ldots, k\}$,
\[
\frac{1}{\binom{m - j}{k - j}} \sum_{\substack{S \subseteq \C: |S| = k \\ S \supseteq T_j}} r_{\V}(S) \leq \rand,
\]
which is clearly true when $j = 0$, and gives the desired result $\b \leq \rand$ when $j = k$.

For the inductive step, assume it holds for some $j - 1$. Now in the $j^{\text{th}}$ iteration, we have the following inequalities that complete the proof. Here, the first step follows since \b{} picks $c_j$ in step $j$. The second step follows since \b{} solves Eq~(\ref{eq:b_opt}), so that the score from adding $c_j$ beats the average score of adding one of the $m-j+1$ candidates in $C \setminus T_{j-1}$.  The final equality follows since $\binom{m - j + 1}{k - j + 1} = \frac{m-j+1}{k-j+1} \binom{m - j}{k - j}$, and by observing that for any $S \supseteq T_{j-1}$, there are $k-j+1$ choices of $c \in S \setminus T_{j-1}$. 
\begin{align*}
\frac{1}{\binom{m - j}{k - j}} \sum_{\substack{S \subseteq \C: |S| = k \\ S \supseteq T_j}} r_{\V}(S) &= \frac{1}{\binom{m - j}{k - j}} \sum_{\substack{S \subseteq \C: |S| = k \\ S \supseteq T_{j - 1} \cup \{c_j\}}} r_{\V}(S)\\
\leq &\frac{1}{\binom{m - j}{k - j}} \cdot \left(\frac{1}{m-j+1} \sum_{c \in C \setminus T_{j - 1}} \sum_{\substack{S \subseteq \C: |S| = k \\ S \supseteq T_{j - 1} \cup \{c\}}} r_{\V}(S) \right)\\
= &  \frac{1}{\binom{m - j}{k - j}} \cdot \frac{k - j + 1}{m - j + 1} \cdot \left(\frac{1}{k-j+1} \sum_{c \in C \setminus T_{j - 1}} \sum_{\substack{S \subseteq \C: |S| = k \\ S \supseteq T_{j - 1} \cup \{c\}}} r_{\V}(S) \right)\\
= &\frac{1}{\binom{m - j + 1}{k - j + 1}} \sum_{\substack{S \subseteq \C: |S| = k \\ S \supseteq T_{j - 1}}} r_{\V}(S) \leq \rand. \qedhere
\end{align*}
\end{proof}

To complete the proof of Theorem~\ref{thm:main_banzhaf}, for the maximization objective, \b{} achieves a value at least $(m+1) \left(1 -  \frac{1}{k+1} \right)$. Since the maximum possible value is $m$, this implies a $\left(1 - \frac{1}{k+1}\right)$-approximation.

\section{Lower Bound on Committee-Monotone Algorithms for $1$-Borda}
\label{sec:monotone}
Consider the $1$-Borda score. A nice property of \g{} is that it is committee-monotone: In each iteration, the candidate chosen by \g{} only depends on which candidates have been chosen in previous iterations and not on $k$, and thus when $k$ increases, the committee selected by \g{} includes all the candidates \g{} used to select. On the other hand, \b{} does not satisfy committee monotonicity, as the candidates chosen by \b{} does depend on $k$.

This naturally brings up the question: Is there a committee-monotone algorithm which is optimal with respect to the benchmark \rand{}? We answer this question in the negative, by presenting a lower bound of $1.015$ for all committee-monotone algorithms.


\begin{theorem}
For any large enough $m$, there exist instances with $m$ candidates where any committee-monotone algorithm \alg{} satisfies $r_{\V}(T_k) > 1.015 \cdot \rand$ for some value $k \in \{1, 2\}$. Here, $T_k$ is the set of candidates \alg{} chooses when the size of this set is $k$.
\label{thm:monotonicity}
\end{theorem}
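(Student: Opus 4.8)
The plan is to exhibit a single instance with $m$ candidates (and polynomially many voters) on which, for $k=1$, no committee of size $1$ does much better than $\rand_1 = \frac{m+1}{2}$, while for $k=2$, the unique near-optimal committee of size $2$ is a specific pair $\{a,b\}$ — and, crucially, neither $a$ nor $b$ is an acceptable singleton. Since a committee-monotone algorithm must output a size-$1$ committee $\{x\}$ that is contained in its size-$2$ committee $\{x,y\}$, it is forced to either (i) pick a bad singleton $\{x\}$ with $r_{\V}(\{x\}) > 1.015\cdot\frac{m+1}{2}$ at $k=1$, or (ii) pick a good singleton $\{x\}\in\{a,b\}$ at $k=1$ but then be unable to extend $\{x\}$ to the good pair $\{a,b\}$ at $k=2$ — wait, $\{a\}$ does extend to $\{a,b\}$, so the real tension must be that \emph{every} singleton is bad, i.e. $r_{\V}(\{x\})>1.015\cdot\rand_1$ for all $x$, while some pair is good, $r_{\V}(\{a,b\})$ close to $\rand_2=\frac{m+1}{3}$. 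Then committee-monotonicity is irrelevant for the \emph{lower bound on $k=1$ alone}; so the instance must instead be engineered so that the \emph{only} way to be good at $k=1$ is incompatible with being good at $k=2$.

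Concretely, I would build the instance so that there is a distinguished candidate $c_0$ with $r_{\V}(\{c_0\})$ exactly at the benchmark $\frac{m+1}{2}$ (the unique optimal singleton), but such that no pair containing $c_0$ has score close to $\frac{m+1}{3}$; meanwhile there is a pair $\{a,b\}$, with $a,b\neq c_0$, achieving score close to $\frac{m+1}{3}$, while each of $\{a\}$ and $\{b\}$ alone has score strictly above $1.015\cdot\frac{m+1}{2}$. A committee-monotone $\alg$ must pick some singleton $T_1=\{x\}$ at $k=1$ and then $T_2\supseteq\{x\}$ at $k=2$. If $x\neq c_0$ then $r_{\V}(T_1)\ge \min_{x\neq c_0} r_{\V}(\{x\})$, which I arrange to exceed $1.015\cdot\rand_1$. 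If $x=c_0$, then every feasible $T_2=\{c_0,y\}$ has score exceeding $1.015\cdot\rand_2$ by construction, so the bad case occurs at $k=2$. Either way $r_{\V}(T_k)>1.015\cdot\rand$ for some $k\in\{1,2\}$, which is exactly the claim.

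The construction itself I would do by a continuum/profile argument analogous to the bad instance for \g{} in Section~\ref{sec:greedy_lb}: model voters as a measure on the circle and candidates by their rank-profiles (functions into $[0,1]$ after scaling ranks by $1/m$), choose $c_0$ to be a candidate that is ``mediocre for everyone'' (rank $\approx 1/2$ uniformly), pick $a,b$ to be two candidates each of which is excellent (rank $\approx 0$) for half the voters and terrible (rank $\approx 1$) for the other half, with complementary supports so that $\min(r_a,r_b)\approx 0$ everywhere hence $r_{\V}(\{a,b\})\approx 0 \cdot$, — too strong; instead tune the ``excellent'' rank and the mixing so that $r_{\V}(\{a,b\})$ lands just above $\frac{m+1}{3}$, $r_{\V}(\{a\})=r_{\V}(\{b\})$ lands just above $1.015\cdot\frac{m+1}{2}$, and $r_{\V}(\{c_0,y\})$ for every $y$ is at least $1.015\cdot\frac{m+1}{3}$ (this is the step that pins down $c_0$'s profile: $c_0$ must be slightly better than the benchmark singleton so it is \emph{forced} at $k=1$, yet so flat that no partner rescues it at $k=2$). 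Then replace the continuum by $\mathrm{poly}(m)$ sampled voters via a Hoeffding/union-bound argument exactly as in Lemma~\ref{lem:random_construction}, so that all relevant scores are preserved up to $o(1)$ factors and the constant $1.015$ survives. The main obstacle is the delicate simultaneous tuning: the numbers $1.015$, $\frac12$, and $\frac13$ must be made consistent with a genuinely realizable set of rank-profiles (each voter's profile must be a bijection to $\{1,\dots,m\}$, so the candidates compete for rank slots), and I expect this to require padding with $m - O(1)$ dummy candidates placed in the remaining slots symmetrically — as in Section~\ref{sec:greedy_lb} — together with a short optimization to verify that $1.015$ is the largest constant the construction yields; getting the three constraints to hold at once with a clean closed-form optimum is where the real work lies.
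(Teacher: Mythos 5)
Your proposal captures exactly the right structural idea and it matches the paper's: build an instance where the good singleton is incompatible with the good pair, so any committee-monotone algorithm fails at $k=1$ or at $k=2$. The paper realizes this not with distinguished candidates $c_0, a, b$ but with two symmetric \emph{classes}: type-$X$ candidates are ranked uniformly in a band $[am, bm]$ by every voter, type-$Y$ candidates are ranked in $[1, am] \cup [bm, m]$, and full symmetry is imposed by permuting within classes. Then only three scores matter: $r_{\V}(Y)$, $r_{\V}(XX)$, $r_{\V}(XY)$, and one maximizes $\min\bigl(\tfrac{2}{m+1}r_{\V}(Y),\tfrac{3}{m+1}r_{\V}(XX),\tfrac{3}{m+1}r_{\V}(XY)\bigr)$ over $a,b$, getting $1.015$ at $a\approx 0.377$, $b\approx 0.552$. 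Your sketch has a concrete pitfall you half-notice but do not resolve: if $c_0$ has a point rank $r$ for every voter and some $y$ is excellent (rank $\approx 0$) for half the voters, then $r_{\V}(\{c_0,y\}) \approx r/2$, and forcing this to exceed $1.015\cdot\frac{m+1}{3}$ requires $r > 0.677(m+1)$, which would make $\{c_0\}$ a bad singleton too, collapsing the construction. The paper avoids this precisely by letting the $X$-type have a \emph{band} of ranks rather than a single rank, so that $\E[\min(X,Y)]$ can be pushed above the $k=2$ threshold without pushing $\E[X]$ above the $k=1$ threshold; the two free endpoints $a,b$ then give enough slack to hit the optimum. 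You would also want to go via symmetric classes rather than three named candidates, because otherwise you must separately verify that no dummy singleton and no dummy-containing pair is good, and symmetry handles that for free. So: same plan, but the paper's two-parameter band construction is what makes the constraints consistent, and your version as written would need exactly that repair.
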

\begin{proof}
The construction goes as follows: There are two types of candidates, $X$ and $Y$. Candidates of type $X$ are ranked between $[am, bm]$ by every voter and candidates of type $Y$ are ranked between $[1, am] \cup [bm, m]$ by every voter, where $0 < a < b < 1$ are two parameters. We construct sufficiently many voters so that all candidates of the same type are symmetric (by having all permutations of candidates of type $X$ and those of type $Y$). We want to find proper $a$ and $b$, so that when $k = 1$, the optimal choice is to choose a candidate of type $X$, while when $k = 2$, the optimal choice is to choose two candidates both of type $Y$. This means that no committee-monotone algorithm can produce optimal choice for both the first iteration and the second iteration. We optimize over $a$ and $b$ to find the maximum lower bound.

In particular, the search procedure goes as follows. Let $r_{\V}(Y)$ denote the $1$-Borda score of choosing a candidate of type $Y$; $r_{\V}(XX)$ denote the score of choosing two candidates both of type $X$; and $r_{\V}(XY)$ denote the score of choosing a candidate of type $X$ and a candidate of type $Y$. We can easily see that, when $m$ goes to infinity, up to an $o(1)$ additive error:
\[
\begin{cases}
\frac{1}{m + 1}r_{\V}(Y) = \frac{a}{2} \cdot \Pr[Y \text{ is at } [1, am]] + \frac{1 + b}{2} \cdot \Pr[Y \text{ is at } [bm, m]] = \frac{a}{2} \cdot \frac{a}{1 - (b - a)} + \frac{1 + b}{2} \cdot \frac{1 - b}{1 - (b - a)}\\
\frac{1}{m + 1}r_{\V}(XX) = \frac{2a + b}{3}\\
\frac{1}{m + 1}r_{\V}(XY) = \frac{a}{2} \cdot \Pr[Y \text{ is at } [1, am]] + \frac{a + b}{2} \cdot \Pr[Y \text{ is at } [bm, m]] = \frac{a}{2} \cdot \frac{a}{1 - (b - a)} + \frac{a + b}{2} \cdot \frac{1 - b}{1 - (b - a)}
\end{cases}.
\]

A committee-monotone algorithm either chooses $Y$ in the first iteration, or chooses $XX$ or $XY$ in the first two iterations. Thus, we maximize $\min\left(\frac{2}{m + 1}r_{\V}(Y), \frac{3}{m + 1}r_{\V}(XX), \frac{3}{m + 1}r_{\V}(XY)\right)$ (note that the value on the numerator corresponds to the value of $k + 1$) over $0 < a < b < 1$, and find that, for $a = 0.377$ and $b = 0.552$, it achieves a lower bound greater than $1.015$.
\end{proof}

\section{Connection to the Core}
\label{sec:core}
We now consider the relationship between the core and $1$-Borda score. In particular, we show that the core achieves a stronger notion of proportionality than $1$-Borda: any $\alpha$-approximate core solution has $1$-Borda score at most $\alpha \cdot \frac{k + 1}{k} \cdot \rand$, while neither the optimal solution $\opt$ nor the algorithms \g{} and \b{} lies in an $o(k)$-approximate core.

Recall that we say a committee $T$ of size $k$ is in the $\alpha$-approximate core if there is no blocking candidate strictly preferred by at least $\alpha \cdot \frac{n}{k}$ voters over $T$. See Eq~(\ref{eq:core}) for a formal definition. In this section, we investigate the relationship between $1$-Borda and the core.

First, we present in the following theorem the implication of the core for $1$-Borda score.

\begin{theorem}
Any committee $T$ in the $\alpha$-approximate core satisfies $r_{\V}(T) \leq \alpha \cdot \frac{k + 1}{k} \cdot \rand$.
\label{thm:core_to_borda}
\end{theorem}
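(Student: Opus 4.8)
The plan is to bound the $1$-Borda score of a core committee $T$ by controlling, for each possible rank value $j$, how many voters have $r_v(T) \ge j$. Fix an $\alpha$-approximate core committee $T$ of size $k$. For a threshold $j \in \{1, 2, \ldots, m\}$, let $B_j = \{v \in \V : r_v(T) \ge j\}$ be the set of voters whose best representative in $T$ has rank at least $j$; equivalently, these are the voters none of whose top $j-1$ candidates lie in $T$. Then $r_{\V}(T) = \frac{1}{n}\sum_{v} r_v(T) = \frac{1}{n}\sum_{j=1}^{m} |B_j|$, so it suffices to show $\sum_{j=1}^m |B_j| \le \alpha \cdot \frac{k+1}{k} \cdot \frac{(m+1)n}{k+1} = \alpha \cdot \frac{(m+1)n}{k}$.

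The key step is to bound $|B_j|$ using the no-blocking-candidate property. Consider the voters in $B_j$: each such voter has $j-1$ candidates ranked strictly above everything in $T$. By a counting/averaging argument over these top-$(j-1)$ slots — summing over all $v \in B_j$ the indicator that candidate $c$ is among $v$'s top $j-1$ choices, this total is exactly $(j-1)|B_j|$, and it is distributed over the $m$ candidates — there must exist some candidate $c^\star \notin T$ (it cannot be in $T$ since those candidates are ranked below the top $j-1$ by every voter in $B_j$) that appears in the top $j-1$ positions of at least $\frac{(j-1)|B_j|}{m}$ voters of $B_j$. Every such voter strictly prefers $c^\star$ to all of $T$, so $c^\star$ is preferred over $T$ by at least $\frac{(j-1)|B_j|}{m}$ voters. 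Since $T$ is in the $\alpha$-approximate core, this quantity is less than $\alpha \cdot \frac{n}{k}$ (using the non-strict form of \eqref{eq:core}, so it is at most, giving $\le$), i.e. $|B_j| \le \frac{\alpha n m}{k(j-1)}$ for $j \ge 2$, and trivially $|B_j| \le n$ for all $j$.

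Finally I would sum these bounds. Choosing a cutoff $J$, write $\sum_{j=1}^m |B_j| \le \sum_{j=1}^{J} n + \sum_{j=J+1}^{m} \frac{\alpha n m}{k(j-1)}$. The first sum is $Jn$; the second is at most $\frac{\alpha n m}{k}\left(\frac{m}{J} \right)$ crudely, or more carefully $\frac{\alpha n m}{k}(H_{m-1} - H_{J-1})$, which is too lossy — the harmonic sum gives a logarithmic factor, not the clean $\frac{k+1}{k}$. So the crude bound $|B_j| \le \frac{\alpha nm}{k(j-1)}$ is not tight enough; the right approach is instead to note that the bound $|B_j| < \alpha n/k$ fails only because of the $(j-1)/m$ dilution, and one should pick $c^\star$ more carefully or iterate. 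The cleanest route: observe that the sets of top-$(j-1)$ candidates, as $j$ ranges, are nested, so a single repeated-blocking argument shows that after the first $\lceil \alpha(m+1)/k \rceil$ ranks are "used up" the remaining voters shrink, yielding a geometric-type decay that sums to the stated bound.

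\textbf{Main obstacle.} The delicate part is extracting exactly the constant $\frac{k+1}{k}$ (equivalently the bound $\alpha\frac{m+1}{k}$) rather than a logarithmic or off-by-constant version; getting the averaging argument to interact correctly with the nested structure of the "top-$j$" sets — so that the blocking candidate's coverage accumulates rather than being re-paid at each threshold — is where the real work lies. I expect one must argue about a single well-chosen blocking candidate (or a small family) whose presence forces $r_{\V}(T)$ down, rather than applying the core condition afresh for each $j$.
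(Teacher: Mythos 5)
You correctly diagnose that your layer-cake decomposition by rank thresholds $j$ is too lossy: bounding each $|B_j|$ separately by $\frac{\alpha nm}{k(j-1)}$ and summing yields a harmonic factor of $\Theta(\log m)$, not the clean $\frac{k+1}{k}$. What you do not supply is the fix, and the final paragraph's gestures toward nested sets and geometric decay do not constitute an argument. The actual missing step is simpler than anything you sketch: do not apply the core condition once per threshold $j$. Instead sum over $j$ \emph{first}. Observe that $\sum_{j\ge 2}|B_j| = \sum_{v}\bigl(r_v(T)-1\bigr)$, and this quantity counts exactly the pairs $(v,c)$ with $c\notin T$ and $r_v(c) < \min_{c'\in T} r_v(c')$ (every candidate a voter ranks strictly above all of $T$ is necessarily outside $T$). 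Averaging this count over the $m-k$ candidates outside $T$ (not over all $m$, as you wrote) produces a \emph{single} candidate $c^\star\notin T$ strictly preferred to all of $T$ by at least $\frac{1}{m-k}\sum_v\bigl(r_v(T)-1\bigr)$ voters. One application of the core condition then gives $\sum_v\bigl(r_v(T)-1\bigr) \le \alpha\,\frac{n(m-k)}{k}$, hence $r_{\V}(T) \le 1 + \alpha\frac{m-k}{k} \le \alpha\frac{m+1}{k} = \alpha\frac{k+1}{k}\cdot\rand$, where the last step uses $\alpha\ge 1$. So your instinct --- ``argue about a single well-chosen blocking candidate rather than applying the core condition afresh for each $j$'' --- is exactly right, but the realization is a one-line double-counting argument, not an iteration or a geometric-decay scheme; your proposal stops short of producing it and therefore does not prove the theorem.
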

\begin{proof}
As $T$ is in the $\alpha$-approximate core, there is no deviation of size $\frac{\alpha n}{k}$, i.e., there is no candidate ranked above all candidates in $T$ for $\frac{\alpha n}{k}$ voters. Therefore,
\[
\frac{1}{m - k} \sum_{v \in \V} (r_v(T) - 1) \leq \frac{\alpha n}{k}
\]
by a counting argument. Thus,
\[
r_{\V}(T) = \frac{1}{n} \sum_{v \in \V} r_v(T) \leq (m - k) \cdot \frac{\alpha}{k} = \alpha \cdot \frac{k + 1}{k} \cdot \frac{m + 1}{k + 1}. \qedhere
\]
\end{proof}

Naturally we ask: Does the reverse statement -- a good approximation to $\rand$ for the $1$-Borda score gives a good approximation to the core -- hold as well? It turns out that the answer is no.

\begin{example}
Let $n = 3 \cdot (m - 2)!$, where $m$ is sufficiently large. $c_1$ and $c_2$ are two ``critical'' candidates, and the remaining $m - 2$ are ``dummy'' candidates. For the first $\frac{n}{3}$ voters, $c_1$ is their top choices and $c_2$ is their second choices. For the second $\frac{n}{3}$ voters, $c_1$ is their bottom choices and $c_2$ is their top choices. For the last $\frac{n}{3}$ voters, $c_1$ is their bottom choice and $c_2$ is their second bottom choice. We fill the rest of their preferences with all permutations of the dummy candidates. This example is illustrated in Figure~\ref{fig:not_core}.
\label{ex:not_core}
\end{example}

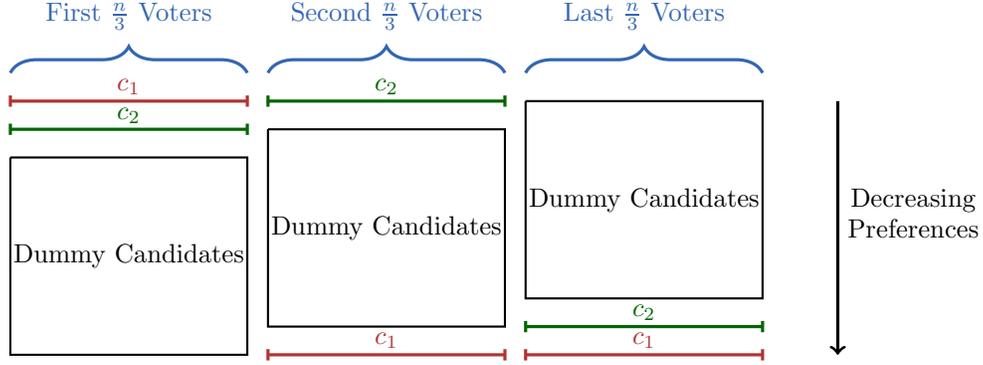
\begin{figure}
\centering
\begin{tikzpicture}[yscale = 3.75, xscale = 5]

\draw [decorate, very thick, color3, decoration={brace,amplitude=10pt}] (-1, 2) -- (-0.37, 2);
\draw [decorate, very thick, color3, decoration={brace,amplitude=10pt}] (-0.315, 2) -- (0.315, 2);
\draw [decorate, very thick, color3, decoration={brace,amplitude=10pt}] (0.37, 2) -- (1, 2);

\draw [very thick, color1] (-1, 1.9) -- (-0.37, 1.9);
\draw [very thick, color1] (-1, 1.88) -- (-1, 1.92);
\draw [very thick, color1] (-0.37, 1.88) -- (-0.37, 1.92);

\draw [very thick, color2] (-1, 1.8) -- (-0.37, 1.8);
\draw [very thick, color2] (-1, 1.78) -- (-1, 1.82);
\draw [very thick, color2] (-0.37, 1.78) -- (-0.37, 1.82);

\draw [thick] (-1, 1.7) -- (-0.37, 1.7) -- (-0.37, 1.0) -- (-1, 1.0) -- (-1, 1.7);

\draw [very thick, color1] (-0.315, 1.0) -- (0.315, 1.0);
\draw [very thick, color1] (-0.315, 0.98) -- (-0.315, 1.02);
\draw [very thick, color1] (0.315, 0.98) -- (0.315, 1.02);

\draw [very thick, color2] (-0.315, 1.9) -- (0.315, 1.9);
\draw [very thick, color2] (-0.315, 1.88) -- (-0.315, 1.92);
\draw [very thick, color2] (0.315, 1.88) -- (0.315, 1.92);

\draw [thick] (-0.315, 1.8) -- (0.315, 1.8) -- (0.315, 1.1) -- (-0.315, 1.1) -- (-0.315, 1.8);

\draw [very thick, color1] (0.37, 1.0) -- (1, 1.0);
\draw [very thick, color1] (0.37, 0.98) -- (0.37, 1.02);
\draw [very thick, color1] (1, 0.98) -- (1, 1.02);

\draw [very thick, color2] (0.37, 1.1) -- (1, 1.1);
\draw [very thick, color2] (0.37, 1.08) -- (0.37, 1.12);
\draw [very thick, color2] (1, 1.08) -- (1, 1.12);

\draw [thick] (0.37, 1.9) -- (1, 1.9) -- (1, 1.2) -- (0.37, 1.2) -- (0.37, 1.9);

\draw [very thick, ->] (1.2, 1.9) -- (1.2, 1);

\node [color3] at (-0.685, 2.2) {First $\frac{n}{3}$ Voters};
\node [color3] at (0, 2.2) {Second $\frac{n}{3}$ Voters};
\node [color3] at (0.685, 2.2) {Last $\frac{n}{3}$ Voters};

\node [color1] at (-0.685, 1.95) {$c_1$};
\node [color2] at (-0.685, 1.85) {$c_2$};
\node at (-0.685, 1.35) {Dummy Candidates};

\node [color1] at (0, 1.05) {$c_1$};
\node [color2] at (0, 1.95) {$c_2$};
\node at (0, 1.45) {Dummy Candidates};

\node [color1] at (0.685, 1.05) {$c_1$};
\node [color2] at (0.685, 1.15) {$c_2$};
\node at (0.685, 1.55) {Dummy Candidates};

\node at (1.40, 1.45) {Preferences};
\node at (1.40, 1.55) {Decreasing};

\end{tikzpicture}
\caption{Illustration of Voters' Preferences in Example~\ref{ex:not_core}}
\label{fig:not_core}
\end{figure}

\begin{theorem}
The solutions of \opt{}, \g{} and \b{} do not lie in an $o(k)$-approximate core in Example~\ref{ex:not_core}. 
\label{thm:borda_to_core}
\end{theorem}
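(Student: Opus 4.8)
The plan is to analyze Example~\ref{ex:not_core} directly, showing that each of $\opt$, $\g$, and $\b$ selects exactly the committee $T = \{c_1, c_2\}$ (after possibly noting that with $k=2$ this is forced, and for larger $k$ the remaining slots are filled with dummies that do not help block), and then exhibiting a dummy candidate that is blocking for a $\frac{n}{k}$-sized coalition. First I would compute the $1$-Borda scores. Choosing $\{c_1,c_2\}$ gives, up to $o(1)$ scaling by $m$: the first $n/3$ voters are represented by $c_1$ at rank $1$; the second $n/3$ by $c_2$ at rank $1$; the last $n/3$ by $c_2$ at rank roughly $m-1$. So $r_{\V}(\{c_1,c_2\}) \approx \frac{m}{3}$, which is within a constant factor of $\rand \approx \frac{m+1}{3}$ when $k = 2$. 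I would argue that any committee of size $k$ that the three algorithms pick must (for small $k$, or by symmetry among dummies for larger $k$) effectively be $\{c_1,c_2\}$ plus dummies: $c_1$ is top-ranked by a $\frac{1}{3}$-fraction and $c_2$ is top-ranked by another $\frac{1}{3}$-fraction, so they give the largest possible marginal gains at the first two greedy steps, and the same expectation-over-random-completion argument shows $\b$ picks them as well; dummies are symmetric and interchangeable.

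Next I would establish the blocking deviation. With $T = \{c_1, c_2\}$, consider the last $\frac{n}{3}$ voters. For these voters $c_1$ is bottom and $c_2$ is second-from-bottom, so every dummy candidate except the one they rank last is strictly preferred to both members of $T$. Since the dummy candidates are placed in all permutations among these voters, by averaging there is some dummy candidate $c$ that is ranked in the top (say) half by at least a constant fraction of these $\frac{n}{3}$ voters — in particular by $\Theta(n)$ voters, which is $\Theta(n) \ge \frac{n}{k}$ for $k = \omega(1)$, and hence also $\ge \alpha \cdot \frac{n}{k}$ for any $\alpha = o(k)$. These voters all strictly prefer $c$ to everything in $T$, so $c$ is a blocking candidate and $T$ is not in the $\alpha$-approximate core. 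The same argument applies verbatim to the committees picked by $\g$ and $\b$ since those committees also contain $c_1,c_2$ and at most $k-2$ dummies, and we can choose the blocking dummy $c$ to lie outside that committee (there are $m-2 \gg k$ dummies to choose from).

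I would handle $\opt$ identically: any optimal committee for the $1$-Borda score must, by the same marginal-contribution reasoning, pick $c_1$ and $c_2$ (they uniquely give rank $1$ to two disjoint $\frac{1}{3}$-blocks of voters, and any committee omitting one of them pays a $\Theta(m)$ penalty on that block with no compensating gain from dummies, which only ever contribute their random average), and then the blocking-dummy argument applies. The main obstacle will be the careful bookkeeping in the last step: making precise, via a counting/averaging argument over the $(m-2)!$ permutations of dummies attached to the last $\frac{n}{3}$ voters, that some fixed dummy candidate outside the chosen committee is preferred to $c_1$ and $c_2$ by $\Omega(n)$ of those voters — one must be slightly careful that conditioning on "not in the committee" still leaves enough symmetric permutations, but since $m - 2 - (k-2) = m - k$ dummies remain and $m = \omega(k)$, a uniformly random such dummy lands in the top half for a constant fraction of these voters in expectation, so a fixed good choice exists. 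Everything else is a routine rank computation.
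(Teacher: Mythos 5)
Your proposal rests on a claim that is false, and the error defeats the whole argument. You assert that $\opt$, $\g$, and $\b$ each select a committee containing both $c_1$ and $c_2$. In fact the entire point of Example~\ref{ex:not_core} is that they select $c_2$ and then \emph{only dummies}, never $c_1$. Check it directly: after $c_2$ is chosen, the first third of voters already has score $2$, so adding $c_1$ reduces the overall average by only $\frac{1}{3}$, whereas adding the first dummy reduces the third third's score from $m-1$ to about $\frac{m-1}{2}$, an average gain of about $\frac{m-1}{6}\gg\frac{1}{3}$. This persists until roughly $\sqrt{m}$ candidates have been chosen, which is why the paper fixes $k=\sqrt m -1$. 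Your heuristic ``$c_1$ is top-ranked by a third, so it has a large marginal gain'' fails exactly because those same voters are already served by $c_2$ at rank $2$; this cancellation is the designed pathology of the example.

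The omission of $c_1$ is not incidental — it is what supplies the blocking candidate. Since the entire first third ($n/3$ voters) ranks $c_1$ first, every one of them strictly prefers $c_1$ to all members of the chosen committee, so $c_1$ alone violates the $\alpha$-approximate core for any $\alpha\le k/3$. Your alternative blocking argument via dummies does not work even under your (incorrect) assumption that the committee is $\{c_1,c_2\}$ plus $k-2$ dummies: for a voter in the last third, $\min_{c'\in T}r_v(c')$ is achieved by the best \emph{in-committee} dummy, and by the symmetry of the permutations a fixed dummy $c\notin T$ outranks all $k-2$ in-committee dummies for only a $\frac{1}{k-1}$ fraction of those voters, i.e.\ about $\frac{n}{3(k-1)} < \frac{n}{k}$ supporters. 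That rules out even a $1$-approximate core violation via dummies, let alone $\omega(1)$. So both the structural claim about the selected committee and the identification of a blocking candidate are wrong; the missing idea is that $c_1$ is deliberately unattractive to all three rules yet unanimously first-ranked by a $\Theta(n)$-size coalition.
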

\begin{proof}
Let $k = \sqrt{m} - 1$ in Example~\ref{ex:not_core}. We show all of \opt{}, \g{} and \b{} chooses $c_2$ and $k - 1$ dummy candidates. In this solution, $\frac{n}{3}$ voters could deviate to $c_1$, showing that it does not lie in a $\frac{k}{3}$-approximate core.

\paragraph{Proof for \opt{}}  
We compare the resulting $s$-Borda score for all possible schemes: choosing $c_1$, $c_2$, and $k - 2$ dummy candidates; choosing $c_1$ and $k - 1$ dummy candidates; choosing $c_2$ and $k - 1$ dummy candidates; and choosing $k$ dummy candidates. Let $D_j$ be a set consisting of $j$ dummy candidates. Then, we have:
\begin{align*}
r_{\V}(D_{k - 2} \cup \{c_1\} \cup \{c_2\}) = \frac{m + 1}{3(k - 1)} + \frac{2}{3}, &\quad r_{\V}(D_{k - 1} \cup \{c_1\}) = \frac{2(m + 1)}{3k} + \frac{1}{3},\\
r_{\V}(D_{k - 1} \cup \{c_2\}) = \frac{m + 1}{3k} + 1, &\quad r_{\V}(D_k) = \frac{m + 1}{k + 1}.
\end{align*}

For $k = \sqrt{m} - 1$, we have:
$$r_{\V}(D_{k - 1} \cup \{c_2\}) < r_{\V}(D_{k - 2} \cup \{c_1\}\cup\{c_2\}) <  r_{\V}(D_{k - 1} \cup \{c_1\}) < r_{\V}(D_k).$$

Thus, \opt{} chooses $c_2$ and $k - 1$ dummy candidates without choosing $c_1$.

\paragraph{Proof for \g{}}
For the first iteration, \g{} chooses $c_2$. We will show that, for the next $\sqrt{m} - 2$ iterations, \g{} chooses the dummy candidates and does not choose $c_1$. Suppose we have chosen $j - 1$ candidates, where $j - 1 \leq \sqrt{m} - 1$, and the current set of candidates is $T_{j - 1}$. Then, we have:
\[
r_{\V}(T_{j - 1}) - r_{\V}(T_{{j - 1}} \cup \{c_1\}) = \frac{1}{3},
\]
\[
r_{\V}(T_{j - 1}) - r_{\V}(T_{j - 1} \cup \{c_j\}) = \frac{1}{3}\left(\frac{m + 1}{j} - \frac{m + 1}{j - 1}\right) = \frac{m + 1}{3j(j - 1)} > \frac{1}{3}, \forall c_j \in \C \setminus T_{j - 1}, c_j \neq c_1.
\]
which shows that for the $\sqrt{m} - 2$ iterations after the first iteration, \g{} chooses dummy candidates.

\paragraph{Proof for \b{}}
Let $T_j$ be the set of candidates produced by \b{} after $j$ iterations. Recall that by \b{}, in the $j^{\text{th}}$ iteration, we pick $c_j \in \C \setminus T_{j - 1}$ that minimizes:
\[
\sum_{\substack{S \subseteq \C: |S| = k \\ S \supseteq T_{j - 1} \cup \{c_j\}}} r_{\V}(S).
\]

Clearly, \b{} chooses $c_2$ in the first iteration, because, as we have shown in the proof for \opt{}, for $k = \sqrt{m} - 1$, choosing $c_2$ always yields better result than not choosing $c_2$. 

Then, we show that \b{} chooses dummy candidates for the next $\sqrt{m} - 2$ iterations. Assume at $(j - 1)^{\text{th}}$ iteration, we have chosen $j - 2$ dummy candidates and $c_2$. As we have shown in the proof for \opt{}, for $k = \sqrt{m} - 1$, we have $r_{\V}(T_{k - 1} \cup \{c_j\}) < r_{\V}(T_{k - 1} \cup \{c_1\})$, $\forall c_j \in \C \setminus T_{k - 1}, c_j \neq c_1$, where $T_{k - 1}$ is a set consisting of $c_2$ and $k - 2$ dummy candidates. This implies that the candidate that minimizes the above objective is dummy candidate but not $c_1$. Thus, for the $j^{\text{th}}$ iteration, \b{} also chooses a dummy candidate, and by inductive principle, \b{} chooses $c_2$ and $k - 1$ dummy candidates in $k$ iterations.
\end{proof}

Theorem~\ref{thm:core_to_borda} and Theorem~\ref{thm:borda_to_core} together establishes that the core achieves a stronger notion of proportionality than $1$-Borda.

\section{The $s$-Borda Score}
In this section, we extend our analysis of the greedy algorithms to $s$-Borda score, and show how to significantly improve on the \g{} and \b{} rules via linear programming. 

Recall that $\rand = \frac{m+1}{k+1}$ and choosing a random committee of size $k$ yields expected score $\frac{s(s + 1)}{2} \cdot \rand$. As a derandomization, \b{} has score at most this value similar to Theorem~\ref{thm:banzhaf_better_than_rand}. Let $\opt$ denote the best possible $s$-Borda score. Considering the instance with one voter for each permutation of candidates as its preference ordering, we have the following proposition:

\begin{proposition}
For any $s$, $m$ and $k$, there exists instances where $\opt =\frac{s(s + 1)}{2} \cdot \rand$.
\end{proposition}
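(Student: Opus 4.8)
The plan is to exhibit a single symmetric instance and show that every size-$k$ committee achieves exactly the claimed score $\frac{s(s+1)}{2}\cdot\rand$, so that in particular $\opt$ equals this value. Following the hint in the statement, I would take $\V$ to consist of exactly one voter for each of the $m!$ permutations of $\C$; that is, $n = m!$ and every bijective ranking $r_v:\C\to\{1,\dots,m\}$ appears once. The key observation is that this voter profile is invariant under the action of the symmetric group $S_m$ permuting the candidates, so any two committees $T,T'$ of the same size are related by a relabeling that preserves the profile, and hence $r_{\V}(T) = r_{\V}(T')$. Thus the $s$-Borda score is the same for every size-$k$ committee, and $\opt$ equals this common value.

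Next I would compute that common value. Fix an arbitrary committee $T$ with $|T|=k$. By symmetry it suffices to evaluate $\frac{1}{n}\sum_{v\in\V} r_v(T)$ where $r_v(T) = \min_{Q\subseteq T,|Q|=s}\sum_{c\in Q} r_v(c)$ is the sum of the $s$ smallest ranks that $v$ assigns to members of $T$. Since the profile is the uniform distribution over all permutations, drawing a uniformly random voter is the same as drawing a uniformly random bijection, and the induced distribution of the multiset of ranks $\{r_v(c): c\in T\}$ is exactly the distribution of a uniformly random $k$-subset of $\{1,2,\dots,m\}$. Therefore
\[
r_{\V}(T) \;=\; \E\!\left[\text{sum of the } s \text{ smallest elements of a random } k\text{-subset of }\{1,\dots,m\}\right]
\;=\; \sum_{t=1}^{s} \E[\text{$t$-th smallest element}].
\]
Here I would invoke the folklore fact already cited in the Preliminaries (and proved in Appendix~\ref{app:folklore}): the $t$-th smallest of a uniformly random $k$-subset of $\{1,\dots,m\}$ has expectation $t\cdot\frac{m+1}{k+1}$. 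Summing over $t=1,\dots,s$ gives $\frac{s(s+1)}{2}\cdot\frac{m+1}{k+1} = \frac{s(s+1)}{2}\cdot\rand$, which is exactly the expression already derived for $\E_{T\subseteq\C}[r_{\V}(T)]$ in the Preliminaries — consistent because here \emph{every} committee attains the average.

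Since every size-$k$ committee has score $\frac{s(s+1)}{2}\cdot\rand$, we conclude $\opt = \frac{s(s+1)}{2}\cdot\rand$ on this instance, proving the proposition. There is no real obstacle here: the only mild subtlety is justifying that the rank-multiset of a fixed $k$-set under a uniformly random permutation is distributed as a uniform random $k$-subset of $\{1,\dots,m\}$, which is immediate since a uniform random bijection restricted to any fixed $k$-element domain yields a uniform random injection into $\{1,\dots,m\}$, whose image is a uniform random $k$-subset. Everything else is the bookkeeping already carried out in the Preliminaries, so the proof is essentially a one-line appeal to symmetry plus the folklore expectation computation. (One can also reduce the instance to polynomial size, as remarked elsewhere in the paper, by sampling $\mathrm{poly}(m)$ random permutations and applying a Hoeffding/union-bound argument as in Lemma~\ref{lem:random_construction}, but for an existence statement the exponential-size symmetric instance suffices.)
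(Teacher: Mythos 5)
Your proposal is correct and matches the paper's approach exactly: the paper (just before the proposition) states that the instance with one voter per permutation of the candidates realizes $\opt = \frac{s(s+1)}{2}\cdot\rand$, which is precisely your symmetric construction, and your symmetry-plus-folklore-expectation argument is the (unstated) justification the paper relies on.
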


We first consider a natural extension of \g{} in the $1$-Borda case. In Appendix~\ref{app:sborda}, we show that it achieves an $s$-Borda score at most $2s^2\cdot \rand$ (Theorem~\ref{thm:s_borda}), which is within a factor of $\frac{4s}{s + 1}$ of the \b{} rule. We then show that this bound cannot be improved even when $\opt$ is small. However, unlike the $1$-Borda case, there is no fundamental barrier to an improved algorithm when $\opt$ is small, and we present such an algorithm in Section~\ref{sec:lp}.

\subsection{The \g{} Algorithm} 
\label{sec:sborda_greedy}
\label{sec:sborda_g_ub}
\label{sec:sborda_g_lb}
The \g{} algorithm follows exactly the same procedure as for $1$-Borda, except that we now compute the score based on $s$-Borda. We present an upper bound for \g{} in the following theorem. Since the proof is very similar to the $s=1$ case, we present it in Appendix~\ref{app:sborda}.

\begin{theorem}[Proved in Appendix~\ref{app:sborda}]
\label{thm:s_borda}
$\g{} \leq 2s^2\cdot \rand$.
\end{theorem}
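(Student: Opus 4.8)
The plan is to mimic the proof of Theorem~\ref{thm:ub_greedy_1} for $s=1$, replacing the per-voter score $r_v(T_t)$ (a single rank) with the $s$-Borda per-voter cost $r_v(T_t) = \min_{Q \subseteq T_t, |Q|=s} \sum_{c \in Q} r_v(c)$, and tracking the extra polynomial-in-$s$ factors that this substitution introduces. The two ingredients I would aim to reprove in this setting are: (i) an analogue of Lemma~\ref{lem:minimum_improvement} lower-bounding the one-step improvement $r_{\V}(T_t) - r_{\V}(T_{t+1})$ by something quadratic in $r_{\V}(T_t)$ divided by $m+1$ (up to an $s$-dependent constant), and (ii) the same quadratic-recurrence induction on $k$ that was used to turn such a per-step bound into the global bound $2 \cdot \rand$.

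For step (i), fix $t$ and a voter $v$ with current $s$-Borda cost $\rho_v := r_v(T_t)$. When we add a new candidate $c \notin T_t$, the voter's cost drops by at least the gain from swapping $c$ in for the worst of her current top-$s$ committee members whenever $r_v(c)$ is smaller than that worst rank; summing $\Delta_c := r_{\V}(T_t) - r_{\V}(T_t \cup \{c\})$ over all $c \in \C \setminus T_t$ and lower-bounding each voter's contribution, I would get $\sum_{c \notin T_t} \Delta_c \ge \frac{1}{n}\sum_v g(\rho_v)$ for some function $g$ that is $\Omega(\rho_v^2/s)$ when $\rho_v$ is large (the division by $s$ arising because the $s$ ranks summing to $\rho_v$ are spread out, so the largest is at least $\rho_v/s$ and there are at least roughly $\rho_v/s$ candidates ranked below it — giving a quadratic-in-$\rho_v/s$ count, times the factor of $2$ from averaging the improvement conditioned on it being positive, times $s$ candidate choices that realize a swap). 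Since \g{} picks the best $c$, $r_{\V}(T_t) - r_{\V}(T_{t+1}) \ge \frac{1}{m-t}\sum_{c\notin T_t}\Delta_c \ge \frac{c_0}{s^2}\cdot\frac{r_{\V}(T_t)^2}{m-t}$ for an absolute constant $c_0$, using convexity (Cauchy–Schwarz/Jensen) to pass from $\frac1n\sum_v \rho_v^2$ to $(\frac1n\sum_v\rho_v)^2 = r_{\V}(T_t)^2$, exactly as in the $s=1$ argument; the lower-order subtractive terms are handled just as in Theorem~\ref{thm:ub_greedy_1} using the case assumption that $r_{\V}(T_{k-1})$ is not already below the target.

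For step (ii), I would run the same induction on $k$: the base case $k = s$ gives $\g \le s m \le 2s^2\cdot\frac{m+1}{s+1}$ trivially (any committee containing all of a voter's... actually $r_v(T) \le sm$ always), and for the inductive step, if $r_{\V}(T_{k-1}) \le 2s^2 \cdot \frac{m+1}{k+1}$ we are done, otherwise we feed $r_{\V}(T_{k-1}) \in (2s^2\frac{m+1}{k+1}, 2s^2\frac{m+1}{k}]$ into the quadratic recurrence $r_{\V}(T_k) \le r_{\V}(T_{k-1}) - \frac{c_0}{s^2(m+1)} r_{\V}(T_{k-1})^2$ (after checking the $\frac{m+1}{m-k+1}$ and the $(\rho-\text{const})/\rho$ correction factors are at least the appropriate constant, which needs $r_{\V}(T_{k-1})$ bounded below by a constant times $\frac{m+1}{k+1}$ — guaranteed by the case we are in — and $k = \omega(1)$ or a direct check otherwise). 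The right-hand side is increasing in $r_{\V}(T_{k-1})$ on the relevant range, so plugging in the upper endpoint $2s^2\frac{m+1}{k}$ and simplifying yields $r_{\V}(T_k) \le 2s^2\frac{m+1}{k+1}$, provided the constant $c_0$ is chosen compatibly (this is where calibrating the constants in step (i) matters — one may need to be slightly careful that the constant extracted in (i) is large enough to close the induction, possibly absorbing slack by strengthening the $g(\rho_v) = \Omega(\rho_v^2/s)$ estimate).

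The main obstacle I expect is step (i): getting the per-voter improvement bound with the \emph{right} power of $s$ (namely $s^2$ overall, i.e.\ $\Omega(\rho_v^2/s)$ summed improvement before averaging, not $\Omega(\rho_v^2/s^2)$). The subtlety is that a new candidate $c$ only helps voter $v$ if its rank beats the \emph{largest} of her current top-$s$ ranks, and the amount it helps is $r_v(c_{\max}) - r_v(c)$ where $c_{\max}$ is that worst committee member; one has to argue that summing this over all candidates $c$ with small rank, and using that $r_v(c_{\max}) \ge \rho_v/s$ together with the fact that ranks below $r_v(c_{\max})$ that are not already in the committee number at least $r_v(c_{\max}) - s \ge \rho_v/s - s$, recovers a clean $\Omega(\rho_v^2/s)$ — and then that the Cauchy–Schwarz step does not lose another factor of $s$. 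Making this counting argument tight (rather than off by a factor of $s$) is the crux, and is presumably why the theorem is deferred to the appendix; everything else is a faithful transcription of the $s=1$ proof.
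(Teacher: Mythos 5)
Your high-level plan — lower-bound the one-step improvement by something quadratic in the current score, then run the same quadratic-recurrence induction as in Theorem~\ref{thm:ub_greedy_1} — is the paper's plan. But your calibration of the $s$-dependence in step~(i) is off, and that is exactly the place where you flagged uncertainty, so let me be precise about it.

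You want $\sum_{c\notin T_t}\Delta_c \gtrsim \frac{1}{n}\sum_v \rho_v^2/s$ (with $\rho_v = r_v(T_t)$ the $s$-Borda cost), which after Cauchy--Schwarz and dividing by $m-t$ would give a per-step improvement $\gtrsim X_t^2/\bigl(s(m+1)\bigr)$. That is \emph{too strong}: it would close an induction for the bound $2s\cdot\rand$, contradicting Theorem~\ref{thm:sbordalb} (there are instances where $\g{} = \Omega(s^2)\cdot\rand$). The per-voter total improvement you can actually prove is $\Omega(\rho_v^2/s^2)$ (not $\Omega(\rho_v^2/s)$), and — this is the part you did not verify — that weaker bound already suffices: feeding $X_{k-1}\in \bigl(2s^2\tfrac{m+1}{k+1}, 2s^2\tfrac{m+1}{k}\bigr]$ into $X_k \le X_{k-1} - \frac{X_{k-1}^2}{2s^2(m+1)}$ closes the induction exactly as in the $s=1$ case (the relevant identity being $\frac{j-1}{j^2}\le\frac{1}{j+1}$). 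So the factor of $s$ you feared losing is in fact supposed to be lost, and losing it is harmless.

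The reason the paper's proof avoids your ``tight counting'' obstacle is that it does not apply Cauchy--Schwarz to the per-voter \emph{cost} $\rho_v$, but to the per-voter $s$\emph{-th rank} $\rho_{v,t,s}$ (the rank of the worst of $v$'s top-$s$ committee members). With that choice the total improvement from all non-committee candidates is an \emph{exact} quadratic identity: for each voter, $\sum_{c:\, r_v(c)<\rho_{v,t,s}}(\rho_{v,t,s}-r_v(c)) = \binom{\rho_{v,t,s}}{2}$, and subtracting the contribution of the $s-1$ committee members below $\rho_{v,t,s}$ gives $Y_t + (sR_t - X_t) = \frac{1}{2n}\sum_v \rho_{v,t,s}(\rho_{v,t,s}-1)$, where $R_t = \frac1n\sum_v\rho_{v,t,s}$. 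Cauchy--Schwarz is then applied to $\rho_{v,t,s}$ (lossless), yielding $Y_t\ge \tfrac12 R_t^2 - O(sR_t)$, and a separate one-line lemma $R_t\ge X_t/s$ converts this into the $\Omega(X_t^2/s^2)$ you need. This cleanly separates the (exact) quadratic counting from the (lossy, but exactly one-factor-of-$s$) conversion from $s$-th rank to $s$-Borda cost, which is where you were trying to do both at once and worrying about slack. In short: your architecture is right, but you should target $\Omega(\rho_v^2/s^2)$ rather than $\Omega(\rho_v^2/s)$, and you should apply Cauchy--Schwarz to the $s$-th rank rather than to the $s$-Borda cost to make the counting exact.
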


\paragraph{Lower Bound for Small $\opt$.} 
In general, $\opt = \Omega(s^2) \cdot \rand$, in which case the analysis of greedy is tight to within a constant factor. The question we now ask is: Does \g{} always perform better when $\opt$ is small? We answer this in the negative.

\begin{theorem}
\label{thm:sbordalb}
There exists an instance where $\opt = O(s^2) = o(1) \cdot \rand$, while the score of \g{} is $\Omega(s^2)\cdot \rand$.
\end{theorem}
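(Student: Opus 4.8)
The plan is to construct an instance where $\opt$ is tiny (on the order of $s^2$, hence $o(1)\cdot\rand$) while \g{} is fooled into committing its entire budget to candidates that behave well in the $1$-Borda sense but poorly for $s$-Borda. The natural source of the gap between $\opt$ and \g{} is the discreteness discussed in the introduction: for $s$-Borda a voter needs $s$ good candidates, not just one, so an adversary can present candidates that each look attractive to \g{} at the margin (they improve the "best candidate" coordinate for many voters) but collectively fail to give any voter a full set of $s$ cheap representatives. Concretely, I would split the candidates into an "optimal" group $O$ of exactly $k$ candidates witnessing $\opt$, and a large group of "trap" candidates, arranged so that each trap candidate gives a slightly larger marginal $s$-Borda improvement than any member of $O$ at every step of \g{}, while the union of $k$ trap candidates leaves each voter with $s$-Borda cost $\Omega(s^2)\cdot\rand$.

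The key steps, in order. First, I would set parameters: pick $k$, let $m$ be polynomially larger, take $\opt$-witnessing set $O = \{o_1,\dots,o_k\}$ and a partition of the voters into $k$ blocks, where block $i$ ranks $o_i,\dots$ and $s-1$ further copies (or a dedicated sub-bundle of $O$) in its top $O(s)$ positions, so $r_v(O)=O(s^2)$ for every voter and hence $\opt \le O(s^2) = o(1)\cdot\rand$ once $m/k \to \infty$. Second, I would design the trap candidates so that — as in the $1$-Borda lower bound of Section~\ref{sec:greedy_lb} — \g{} picks them in a fixed order and never touches $O$: each trap candidate sits at a carefully chosen rank for a carefully chosen fraction of voters, tuned so its marginal decrease in $s$-Borda score strictly exceeds that of any $o_i$ given the current partial committee, and so that trap candidates in later "layers" are dominated by the current one (the same domination argument as Lemma~\ref{lem:greedyopt2}). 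The crucial difference from the $1$-Borda construction is that I must track the top-$s$ sum rather than the minimum, so the marginal improvement of a trap candidate depends on how many of a voter's current top-$s$ slots it displaces; I would engineer the traps so that at each step they only ever improve the $s$-th (worst) of the current top-$s$ ranks, keeping the per-step improvements small and comparable, which is exactly the regime where \g{} can be misled. Third, once \g{} has used all $k$ picks on traps, I would compute its $s$-Borda score directly from the construction and show it is $\Omega(s^2)\cdot\rand$ — e.g. by arranging that after all $k$ trap picks, a constant fraction of voters still have all $s$ of their top-$s$ representatives at rank $\Theta(\rand)$, giving per-voter cost $\Theta(s\cdot\rand)$, or more strongly $\Theta(s^2\cdot\rand)$ if the $s$ slots are spread over ranks $\rand, 2\rand,\dots,s\cdot\rand$. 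Finally, I would invoke the standard sampling reduction (the Remark after Theorem~\ref{thm:greedy_lb}, via Lemma~\ref{lem:random_construction}-style concentration) to make the instance polynomial-size while preserving \g{}'s choices up to $(1+\varepsilon)$ factors.

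The main obstacle I expect is the second step: ensuring \g{} strictly prefers every trap candidate over every $o_i$ at \emph{every} one of the $k$ steps, while simultaneously keeping $\opt$ small. These requirements pull in opposite directions — making $O$ cheap means its members occupy many voters' top positions, which tends to make $o_i$'s marginal improvement large and thus attractive to \g{}. Resolving this likely requires placing the $O$-candidates so that each one is good for only a $1/k$-fraction of voters (so its \emph{marginal} contribution, averaged over all voters, is small even though its per-block contribution is large), and then showing the traps beat this small average at each step — which is delicate because \g{}'s state changes as traps accumulate, so the comparison must be re-verified inductively, analogous to but more intricate than the spiral analysis in Section~\ref{sec:greedy_lb}. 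I would handle this by making the $O$-candidates mutually "disjoint" across voter blocks and the traps "uniform" across all voters, so that the marginal-improvement comparison reduces to a clean per-layer inequality of the same flavor as $\varphi^2+\varphi=1$, with constants chosen to leave the required $\Omega(s^2)$ gap.
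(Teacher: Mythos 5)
Your core construction idea is right and essentially matches the paper: partition the $k$ optimal candidates into blocks so that each one is good for only an $s/k$ fraction of the voters (and ranked at the very \emph{bottom} for all other voters, so it never helps them), and use candidates that are ``uniform across voters'' as the traps. That is exactly what Example~\ref{example:greedy_bad} does. But you significantly over-anticipate the difficulty of the second step, and the analysis model you propose (a spiral-like layered construction with a $\varphi^2+\varphi=1$-flavored domination argument, as in Section~\ref{sec:greedy_lb}) is not needed and is the wrong template here. The $1$-Borda lower bound is fighting for a tight constant, which forces the delicate spiral; this theorem is only after a coarse $\Omega(s^2)$ gap, and that permits a much blunter trap.

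In the paper, the traps are simply the $m-k$ dummy candidates placed in all permutations for each voter block. No rank-tuning, no layers, no ``each trap only improves the $s$-th worst slot.'' The step-by-step argument that \g{} never takes a critical candidate is a single clean marginal comparison: after $j-1$ dummies are chosen, choosing a critical candidate can only improve the $s/k$ fraction of voters in its block, giving a marginal drop of at most $\frac{s}{k}\cdot\frac{s}{j}(m+1)$, while the next dummy gives $\frac{s(s+1)}{2j}(m+1) - \frac{s(s+1)}{2(j+1)}(m+1) = \frac{s(s+1)}{2j(j+1)}(m+1)$. The latter strictly dominates for all $j\le k/2$, so \g{} wastes at least $k/2$ picks on dummies — no inductive re-verification against an accumulating state is needed beyond this one inequality. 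For the final lower bound, the paper does not argue that ``a constant fraction of voters still have all $s$ top slots at rank $\Theta(\rand)$''; instead it lets $x_i$ be the number of critical candidates chosen from block $i$, uses $\sum_i x_i \le k/2$, and applies Cauchy--Schwarz to $\sum_i (s-x_i)^2$ to get $r_\V(T_k)\ge \frac{s^2}{8}\cdot\rand$. That per-block averaging is the right way to turn ``$\g{}$ spent half its budget on dummies'' into an $\Omega(s^2)\rand$ bound, and it is cleaner than the ``constant fraction of voters are fully unserved'' picture you sketch (which would require extra work to establish).

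So: your construction-level intuition is correct, and the obstacle you flag (marginal attractiveness of $o_i$ versus traps) is genuinely the crux, but your proposed resolution is heavier than necessary. You do not need layers, you do not need to engineer traps to only touch the $s$-th slot, and you do not need a golden-ratio identity. Place the critical candidates at the bottom for out-of-block voters, use fully symmetric random dummies as traps, compare marginals directly, and finish with Cauchy--Schwarz over the blocks.
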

To prove this lower bound, we use the following instance.
\begin{example}
Let $m = \omega(k)$, $k = \omega(s)$, and $n = \frac{k}{s}(m - k)!$. There are $k$ ``critical'' candidates $c_1, c_2, \ldots, c_k$, while the remaining $m - k$ are ``dummy'' candidates. Candidate $c_{i(k/s) + j}$ is the $(i + 1)^{\text{st}}$ choice of the $j^{\text{th}}$ $\frac{k}{s}$ voters, $\forall i \in \{0, 1, \ldots, s - 1\}, j \in \{1, 2, \ldots, \frac{k}{s}\}$. Aside from the first $s$ rows, the critical candidates lie at the very bottom. For each group of $\frac{k}{s}$ voters, we fill the rest of the preferences with all permutations of the dummy candidates. This example is illustrated in Figure~\ref{figure:greedy_bad}. 
\label{example:greedy_bad}
\end{example}

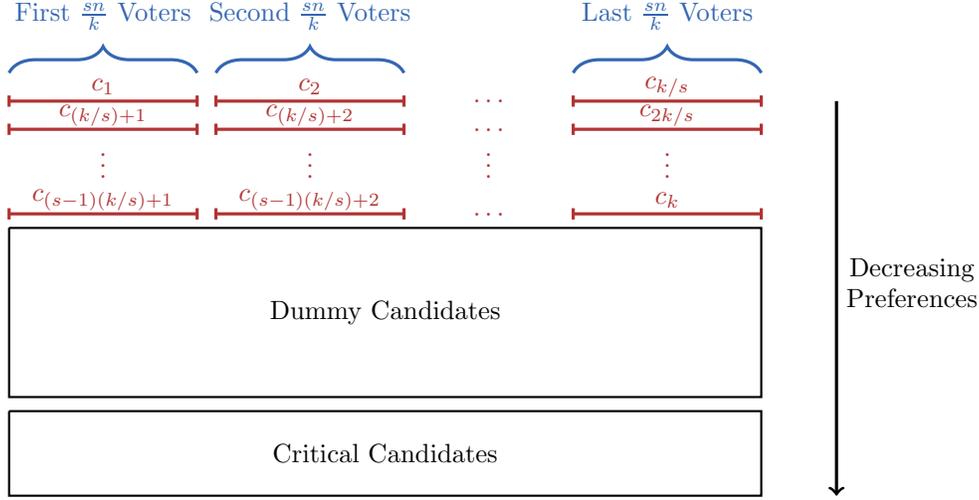
\begin{figure}
\centering
\begin{tikzpicture}[yscale = 3.75, xscale = 5]

\draw [decorate, very thick, color3, decoration={brace,amplitude=10pt}] (-1, 2) -- (-0.5, 2);
\node [color3] at (-0.75, 2.2) {First $\frac{sn}{k}$ Voters};
\draw [decorate, very thick, color3, decoration={brace,amplitude=10pt}] (-0.45, 2) -- (0.05, 2);
\node [color3] at (-0.2, 2.2) {Second $\frac{sn}{k}$ Voters};
\draw [decorate, very thick, color3, decoration={brace,amplitude=10pt}] (0.5, 2) -- (1, 2);
\node [color3] at (0.75, 2.2) {Last $\frac{sn}{k}$ Voters};

\draw [very thick, color1] (-1, 1.9) -- (-0.5, 1.9);
\draw [very thick, color1] (-1, 1.88) -- (-1, 1.92);
\draw [very thick, color1] (-0.5, 1.88) -- (-0.5, 1.92);
\node [color1] at (-0.75, 1.95) {$c_1$};

\draw [very thick, color1] (-0.45, 1.9) -- (0.05, 1.9);
\draw [very thick, color1] (-0.45, 1.88) -- (-0.45, 1.92);
\draw [very thick, color1] (0.05, 1.88) -- (0.05, 1.92);
\node [color1] at (-0.2, 1.95) {$c_2$};

\draw [very thick, color1] (0.5, 1.9) -- (1, 1.9);
\draw [very thick, color1] (0.5, 1.88) -- (0.5, 1.92);
\draw [very thick, color1] (1, 1.88) -- (1, 1.92);
\node [color1] at (0.75, 1.95) {$c_{k/s}$};

\node [color1] at (0.275, 1.9) {$\ldots$};

\draw [very thick, color1] (-1, 1.8) -- (-0.5, 1.8);
\draw [very thick, color1] (-1, 1.78) -- (-1, 1.82);
\draw [very thick, color1] (-0.5, 1.78) -- (-0.5, 1.82);
\node [color1] at (-0.75, 1.85) {$c_{(k/s)+1}$};

\draw [very thick, color1] (-0.45, 1.8) -- (0.05, 1.8);
\draw [very thick, color1] (-0.45, 1.78) -- (-0.45, 1.82);
\draw [very thick, color1] (0.05, 1.78) -- (0.05, 1.82);
\node [color1] at (-0.2, 1.85) {$c_{(k/s) + 2}$};

\draw [very thick, color1] (0.5, 1.8) -- (1, 1.8);
\draw [very thick, color1] (0.5, 1.78) -- (0.5, 1.82);
\draw [very thick, color1] (1, 1.78) -- (1, 1.82);
\node [color1] at (0.75, 1.85) {$c_{2k/s}$};

\node [color1] at (0.275, 1.8) {$\ldots$};

\node [color1] at (-0.75, 1.7){$\vdots$};

\node [color1] at (-0.2, 1.7){$\vdots$};

\node [color1] at (0.75, 1.7){$\vdots$};

\node [color1] at (0.275, 1.7){$\vdots$};

\draw [very thick, color1] (-1, 1.5) -- (-0.5, 1.5);
\draw [very thick, color1] (-1, 1.48) -- (-1, 1.52);
\draw [very thick, color1] (-0.5, 1.48) -- (-0.5, 1.52);
\node [color1] at (-0.75, 1.55) {$c_{(s-1)(k/s)+1}$};

\draw [very thick, color1] (-0.45, 1.5) -- (0.05, 1.5);
\draw [very thick, color1] (-0.45, 1.48) -- (-0.45, 1.52);
\draw [very thick, color1] (0.05, 1.48) -- (0.05, 1.52);
\node [color1] at (-0.2, 1.55) {$c_{(s-1)(k/s) + 2}$};

\draw [very thick, color1] (0.5, 1.5) -- (1, 1.5);
\draw [very thick, color1] (0.5, 1.48) -- (0.5, 1.52);
\draw [very thick, color1] (1, 1.48) -- (1, 1.52);
\node [color1] at (0.75, 1.55) {$c_{k}$};

\node [color1] at (0.275, 1.5) {$\ldots$};

\draw [thick] (-1, 1.45) -- (1, 1.45) -- (1, 0.85) -- (-1, 0.85) -- (-1, 1.45);
\node at (0, 1.15) {Dummy Candidates};

\draw [thick] (-1, 0.8) -- (1, 0.8) -- (1, 0.5) -- (-1, 0.5) -- (-1, 0.8);
\node at (0, 0.65) {Critical Candidates};

\draw [very thick, ->] (1.2, 1.9) -- (1.2, 0.5);

\node at (1.40, 1.30) {Decreasing};
\node at (1.40, 1.20) {Preferences};

\end{tikzpicture}
\caption{Illustration of Voters' Preferences in Example~\ref{example:greedy_bad}}
\label{figure:greedy_bad}
\end{figure}

In this instance, \opt{} is clearly $O(s^2)$ by choosing all the critical candidates. We now show that \g{} achieves its worst-case bound even on this instance.

\begin{proposition}
In Example~\ref{example:greedy_bad}, $\g{} = \Omega(s^2)\cdot \rand$.
\end{proposition}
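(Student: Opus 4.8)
The plan is to show that \g{} squanders a constant fraction of its budget of $k$ on dummy candidates, which then forces a constant fraction of the voters to fill a constant fraction of their top-$s$ slots with dummy candidates of large rank. Since \opt{} picks all $k$ critical candidates, giving every voter the top-$s$ ranks $1,2,\ldots,s$, we have $\opt \le \binom{s+1}{2} = O(s^2)$; the claim is that \g{}, by contrast, picks $\Omega(k)$ dummy candidates and hence misses $\Omega(k)$ critical candidates. Because the critical candidate $c_{i(k/s)+j}$ is a top-$s$ candidate for the $\frac{sn}{k}$ voters of group $j$, missing $\Omega(k)$ criticals means an $\Omega(1)$ fraction of the voters must replace $\Omega(s)$ of their top-$s$ critical candidates by dummy candidates, each of expected rank $\Omega\!\left(\frac{s\,m}{k}\right)$, for a total cost of $\Omega(s^2)\cdot\rand$.

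The first step is to compute the marginal gains at an intermediate state. Consider a partial committee consisting of $D$ dummy candidates together with, for each group $j\in\{1,\ldots,\frac ks\}$, exactly $t_j$ of that group's $s$ designated critical candidates; let $R=\sum_j(s-t_j)$ be the number of missing criticals. With the usual convention that the cost of a committee of size below $s$ is obtained by padding with rank $m$, \g{} picks a dummy candidate in each of the first $s$ steps, since when $m=\omega(k)$ dummy candidates have smaller average rank than critical candidates; hence $D\ge s$ from then on, and for a group-$j$ voter the $s$ smallest ranks in the committee are its $t_j$ designated criticals (ranks in $\{1,\ldots,s\}$) together with the $s-t_j$ smallest of its $D$ dummy ranks. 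As each group realizes all permutations of the dummy candidates, the $\ell$-th smallest dummy rank of such a voter has expectation $s+\frac{\ell(m-k+1)}{D+1}$. A short order-statistics computation then shows that every dummy candidate has the same marginal gain, equal up to a $(1\pm o(1))$ factor to $\frac{s(m-k)}{2k(D+1)^2}\sum_j(s-t_j)^2$, while the best remaining critical candidate has marginal gain at most $(1+o(1))\frac{s(m-k)}{k(D+1)}\max_j(s-t_j)\le(1+o(1))\frac{s^2(m-k)}{k(D+1)}$.

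Next I would show \g{} picks at least $k/4$ dummy candidates. If \g{} never picks a $\lceil k/4\rceil$-th critical candidate it has picked more than $3k/4$ dummy candidates and we are done. Otherwise, examine the step at which it picks that critical candidate: just before it, $R=k-(\lceil k/4\rceil-1)\ge 3k/4$, so by Cauchy--Schwarz $\sum_j(s-t_j)^2\ge\frac{sR^2}{k}\ge\frac{9sk}{16}$ (there are $k/s$ groups), and the common dummy gain is at least $(1-o(1))\frac{9s^2(m-k)}{32(D+1)^2}$. Since \g{} chose a critical candidate over every dummy candidate, its gain is at least the dummy gain; combined with the upper bound on critical gains this forces $\frac{1}{k(D+1)}\ge(1-o(1))\frac{9}{32(D+1)^2}$, i.e.\ $D\ge(1-o(1))\frac{9k}{32}\ge\frac k4$ for large $k$. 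Either way \g{} picks at least $k/4$ dummy candidates, hence at most $3k/4$ critical candidates, hence misses at least $k/4=\frac s4\cdot\frac ks$ designated criticals overall. A routine averaging argument (e.g.\ with parameters $\beta=\gamma=\tfrac19$, which satisfy $\beta+\gamma-\beta\gamma<\tfrac14$) then yields that at least a $\gamma$-fraction of the $\frac ks$ groups miss at least $\beta s$ of their designated criticals.

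Finally, for a voter $v$ in such a ``bad'' group $j$, the $s$ smallest ranks in the final committee $T_k$ are $v$'s $t_j\le(1-\beta)s$ designated criticals together with the $s-t_j\ge\beta s$ smallest of $v$'s $D^*\le k$ dummy ranks (every other critical in $T_k$ lies below all dummy candidates in $v$'s ranking). Hence $r_v(T_k)\ge\sum_{\ell=1}^{\beta s}(\ell\text{-th smallest dummy rank of }v)$, whose expectation over the voters of group $j$ is at least $\sum_{\ell=1}^{\beta s}\frac{\ell(m-k+1)}{D^*+1}\ge\frac{m-k+1}{k+1}\cdot\frac{\beta^2s^2}{2}=(1-o(1))\frac{\beta^2s^2}{2}\cdot\rand$. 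Averaging over the $\ge\gamma$-fraction of voters in bad groups gives $\g{}=r_{\V}(T_k)=\Omega(s^2)\cdot\rand$. The main obstacle is the marginal-gain computation of the second paragraph: one must handle the padding for committees of size below $s$, check that $D\ge s$ throughout the relevant part of the execution (so that the $s$-th smallest rank of a voter is governed by dummy candidates rather than bottom-ranked critical candidates), and control the order-statistic concentration and the $m\gg k$ error terms tightly enough that the explicit constants ($9/32$ against $1/4$) are preserved.
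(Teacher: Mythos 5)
Your proposal is correct and follows the same basic strategy as the paper: compare the marginal gain of a dummy candidate against that of a critical candidate, conclude that \g{} must expend an $\Omega(k)$ budget on dummies, and then lower-bound the resulting score via a convexity/averaging argument. The ingredients match: your marginal-gain formulas (dummy gain $\approx \frac{s(m-k)}{2k(D+1)^2}\sum_j (s-t_j)^2$, critical gain $\approx \frac{s(m-k)}{k(D+1)}\max_j(s-t_j)$) reduce to the paper's $\frac{s(s+1)}{2j(j+1)}(m+1)$ and $\frac{s^2}{kj}(m+1)$ when no criticals have been chosen.

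The route differs in how you get to ``$\Omega(k)$ dummies chosen.'' The paper proves the stronger inductive claim that \g{} selects \emph{only} dummies in the first $k/2$ iterations, by maintaining $t_j = 0$ for all $j$ throughout the induction and comparing the two gains at each step. You instead give a ``witness-step'' argument: you look at the iteration in which the $\lceil k/4\rceil$-th critical is picked (if ever) and derive from the marginal-gain comparison that $D$ must already exceed $\approx 9k/32$ at that moment. This is a valid argument and is arguably more robust since it does not rely on the clean all-dummies-so-far invariant, but it is also somewhat more involved, requiring the marginal-gain formulas in the general mixed state. Your final step also differs cosmetically: the paper applies Cauchy--Schwarz directly to $\sum_i (s-x_i)^2$ with $\sum_i(s-x_i)\geq k/2$ to get the constant $1/8$, whereas you pass through an averaging argument (at least a $\gamma$-fraction of groups miss at least $\beta s$ criticals) which yields a smaller constant but the same $\Omega(s^2)\cdot\rand$ bound. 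Both you and the paper are a bit informal about the $s$-Borda score of committees of size below $s$ and about the first $s$ iterations; you explicitly flag this as a loose end, which is appropriate.
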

\begin{proof}
For the first $s$ iterations, \g{} chooses dummy candidates: as $m \gg k$, choosing a critical candidate adds $\frac{s - 1}{s}(m + 1)$ to the score, while choosing a dummy candidate adds only $\frac{1}{2}(m + 1)$.

Then, we show that, for the first $\frac{k}{2}$ iterations, \g{} chooses dummy candidates. Assume at $(j - 1)^{\text{th}}$ iteration, where $s \leq j - 1 < \frac{k}{2}$, we have chosen $j - 1$ dummy candidates, and the set of candidates is $T_{j - 1}$. Then, we have:
\[
r_{\V}(T_{j - 1}) - r_{\V}(T_{j - 1} \cup \{c_\mathrm{critical}\}) \leq \frac{s}{k}\cdot \frac{s}{j}(m + 1) = \frac{s^2}{kj}(m + 1),
\]
and
\[
r_{\V}(T_{j - 1}) - r_{\V}(T_{j - 1} \cup \{c_\mathrm{dummy}\}) = \frac{s(s + 1)}{2j}(m + 1) - \frac{s(s + 1)}{2(j + 1)}(m + 1) = \frac{s(s + 1)}{2j(j + 1)}(m + 1),
\]
where $c_\mathrm{critical}$ is some critical candidate and $c_\mathrm{dummy}$ is some dummy candidate. This is because if we choose a critical candidate, then for $\frac{s}{k}$ fraction of the voters, the bottom-ranked dummy candidate will be dropped, while the critical candidate will be added. Since we have chosen $j - 1$ dummy candidates, the bottom-ranked dummy candidate has average rank $\frac{s}{j}(m + 1)$. In other words, for $\frac{s}{k}$ fraction of the voters, we drop a candidate at rank $\frac{s}{j}(m + 1)$ and gain a candidate at the top, while for the other voters, the top $s$ candidates remain unchanged. If we choose a dummy candidate instead, the average score goes from $\frac{s(s + 1)}{2j}(m + 1)$ to $\frac{s(s + 1)}{2(j + 1)}(m + 1)$.

For $j \leq \frac{k}{2}$, we have:
$$\frac{s(s + 1)}{2j(j + 1)}(m + 1) > \frac{s^2}{kj}(m + 1),$$
and thus \g{} chooses a dummy candidate in the $j^{\text{th}}$ iteration as well. Thus, by inductive principle, \g{} chooses dummy candidates for at least $\frac{k}{2}$ iterations.

However, this implies that we can choose at most $\frac{k}{2}$ critical candidates. Suppose for the $i^{\text{th}}$ $\frac{s}{k}$ voters, there are $x_i$ critical candidates among the top $s$ candidates. We have:
$$\sum_{i = 1}^{k/s}x_i \leq \frac{k}{2}.$$

Let $T_k$ denote the final set of candidates. As we choose at most $\frac{k}{2}$ critical candidates, at least $\frac{k}{2}$ candidates must be chosen, and we derive a lower bound for $r_\V(T_k)$ based on this. We have:
\begin{align*}
r_{\V}(T_k) &\geq \frac{s}{k}\left(\sum_{i = 1}^{k/s}\sum_{j = 1}^{s - x_i}j\right)\cdot \rand \geq \frac{s}{2k}\left(\sum_{i = 1}^{k/s}(s - x_i)^2\right) \cdot \rand \\
& \geq \frac{s}{2k} \frac{\left(\sum_{i = 1}^{k/s}(s - x_i)\right)^2}{k/s}\cdot \rand \geq \frac{s^2}{8}\cdot \rand.
\end{align*}

Recall that given $n$ voters whose preference structures include all permutations of the $m$ candidates, when we choose $k$ candidates out of them, the average contribution of the $i^{\text{th}}$-ranked candidates for each voter to $r_\V(T_k)$ is $i\cdot \rand$. The first inequality is by applying the above fact on each set of $\frac{s}{k}$ voters whose preference structures include all permutations. The third inequality is by Cauchy-Schwarz inequality. The last inequality is because $\sum_{i = 1}^{k/s}x_i \leq \frac{k}{2}$. Thus, we can conclude that $r_{\V}(T_k) = \Omega(s^2)\cdot \rand$.
\end{proof}

This shows that \g{} can perform as bad as random even when \opt{} is small and thus motivates the improved guarantee in Section~\ref{sec:lp}.

\subsection{An Improved Algorithm via LP Rounding}
\label{sec:lp}
As mentioned above, \g{} can hit its worst-case bound of $\Omega(s^2)\cdot \rand$ even when \opt{} is actually small. We know that for the case of $1$-Borda, no polynomial-time algorithm can do better. Now the question is, can a different algorithm do better in the case of $s$-Borda for $s = \omega(1)$? We answer this question in the affirmative by presenting an algorithm that is based on dependent rounding of an LP relaxation combined with uniform random sampling, which provides nontrivial improvement when $\opt$ is small. In particular, it achieves expected score at most $3\cdot \opt + O(s^{3/2}\log s)\cdot \rand$. 

\subsubsection{LP-Rounding-Based Algorithm}
The following linear program (based on~\cite{cornuejols1983,LinV,CharikarGTS,fault,Byrka}) is a natural relaxation for the $s$-Borda problem.
\begin{mini*}
{}{\sum_{i = 1}^n\sum_{\ell = 1}^s\sum_{j=1}^{m} x_{ij}^{\ell} \cdot r_{v_i}(c_j),}{}{}
\addConstraint{\sum_{j=1}^m\:} {y_j}{=k}
\addConstraint{\sum_{\ell = 1}^k\:}{x_{ij}^{\ell}}{\leq y_j,}{\mkern46mu i \in \{1, \ldots, n\}, \:\:j\in\{1 ,\ldots, m\}} 
\addConstraint{\sum_{j = 1}^m\:}{x_{ij}^{\ell}}{\geq 1,}{\mkern53mu i \in \{1, \ldots, n\},\:\: \ell\in \{1, \ldots, k\}}
\addConstraint{y_j, \:}{x_{ij}^{\ell}}{\in [0, 1],}{\mkern26mu i \in \{1, \ldots, n\}, \:\: j \in \{1, \ldots, m\}, \:\: \ell \in \{1, \ldots, k\}.}
\end{mini*}

Variable $y_j$ denotes how much candidate $c_j$ is chosen; integral values $1$ and $0$ mean choosing and not choosing candidate $c_j$, respectively. The first constraint encodes choosing exactly $k$ candidates. We copy each voter $k$ times, and the $\ell^{\text{th}}$ copy of the voter $v_i$ is assigned to the $\ell^{\text{th}}$-preferred chosen candidate. Variable $x_{ij}^{\ell}$ denotes how much the $\ell^{\text{th}}$ copy of voter $v_i$ is assigned to candidate $c_j$. The second constraint prevents a voter from being assigned to a candidate that is not chosen. The third constraint ensures that each copy of the voter is assigned to some candidate. The objective function computes the $s$-Borda score.

We will use dependent rounding~\cite{dependent} to round this LP solution. There is a catch though: Dependent rounding can cause a deficit in around $\tilde{O}(\sqrt{s})$ candidates from the top $s$ that are fractionally chosen by the LP. Since any solution must account for the top $s$ scores, we need to ensure these ``deficit'' candidates do not increase the score too much. Towards this end, we scale down the LP solution, and choose enough candidates uniformly at random so that these candidates can absorb the deficit. However, such scaling creates a further deficit that will have to be absorbed by random sampling. We find that the right trade-off is achieved by scaling down by a factor of $(1-\frac{1}{\sqrt{s}})$.

Without further ado, the overall algorithm works as follows:
\begin{enumerate}
    \item Solve the above linear program and let $\tilde{y}$ denote the optimal solution.
    \item For $j = 1,2,\ldots, m$, let $y_j = (1 - \frac{1}{\sqrt{s}}) \tilde{y}_j$. Note that $\sum_{j=1}^m y_j = k (1 - \frac{1}{\sqrt{s}})$.
    \item Apply dependent rounding~\cite{dependent} on the variables $\{y_j\}$ so that exactly $k (1-\frac{1}{\sqrt{s}})$ candidates are chosen. Let $T_1$ denote the set of chosen candidates.
    \item Finally choose a set $T_2$ of $\frac{k}{\sqrt{s}}$ candidates uniformly at random from $\C \setminus S$ and output $T = T_1 \cup T_2$. 
\end{enumerate}

We will show the following theorem:

\begin{theorem}
\label{thm:lp}
When $m = \omega(k)$, $k = \omega(s^{3/2})$, and $s = \omega(1)$, we have:
$$\E[r_{\V}(T)] \leq 3\opt +O(s^{3/2}\log s)\cdot \rand.$$
\end{theorem}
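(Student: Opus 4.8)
The plan is to prove a per-voter bound $\E[r_v(T)] \le 3\,\mathrm{lp}_v + O(s^{3/2}\log s)\cdot\rand$ (plus negligible terms), where $\mathrm{lp}_v := \sum_{\ell=1}^{s}\sum_{j}\tilde x_{vj}^{\ell}\,r_v(c_j)$ is voter $v$'s contribution to the optimal LP objective, and then average over $v\in\V$. Since the LP relaxes the $s$-Borda problem, its optimum is at most $n\cdot\opt$, hence $\tfrac1n\sum_{v}\mathrm{lp}_v\le\opt$ and the theorem follows. Fix $v$ and relabel the candidates $c^1,\dots,c^m$ in increasing order of $r_v$ (so $r_v(c^\tau)=\tau$); write $\tilde y$ for the optimal LP selection, $\tilde Y_\tau=\sum_{\tau'\le\tau}\tilde y_{c^{\tau'}}$ for its prefix masses, and $g$ for the nondecreasing ``rank reached at cumulative $\tilde y$-mass $t$'' function, so that with the optimal sorted assignment of the LP's first $s$ copies one has $\mathrm{lp}_v=\int_0^{s}g(t)\,\d t$; since $\tilde y\le 1$, $g(t)\ge\lceil t\rceil$, so $\mathrm{lp}_v=\Omega(s^2)$.

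After scaling, $c^\tau\in T_1$ with probability $(1-\tfrac1{\sqrt s})\tilde y_{c^\tau}$, and dependent rounding makes the inclusion indicators negatively correlated, so Chernoff-type lower tails hold for all prefix counts $|T_1\cap\{c^1,\dots,c^\tau\}|$ at once (union bound over prefixes). First I would bound the cost of the cheap part of $T_1$: for $\ell\le s$, let $\tau_\ell$ be the least index with $\tilde Y_{\tau_\ell}\ge\ell$; then $\{c^1,\dots,c^{\tau_\ell}\}$ has $y$-mass at least $(1-\tfrac1{\sqrt s})\ell$, so with high probability (simultaneously over all $\ell$) it contains at least $\psi(\ell):=(1-\tfrac1{\sqrt s})\ell-O(\sqrt{\ell\log s})$ elements of $T_1$. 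Writing $c^{(j)}$ for the $j$-th cheapest element of $T_1$, this gives $r_v(c^{(j)})\le g\bigl(\psi^{-1}(j)\bigr)$ for every $j\le\psi(s)=s-D$, where $D=O(\sqrt{s\log s})$; since $\psi$ is increasing with $\psi'\le1$, a change of variables yields $\sum_{j\le s-D}r_v(c^{(j)})\le\int_0^{s}g(u)\,\d u=\mathrm{lp}_v$, up to lower-order contributions from the boundary levels $\ell=O(\log s)$ (covered crudely by enlarging the relevant prefix, which costs $o(\mathrm{lp}_v)$ by monotonicity of $g$) and from the rare event that concentration fails (handled via $r_v(\cdot)\le sm$ and a failure probability driven to $1/\mathrm{poly}$, contributing $o(\rand)$). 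Absorbing also the scale-up loss $1/(1-\tfrac1{\sqrt s})$ gives the bound $3\,\mathrm{lp}_v$ for the part of $r_v(T)$ served by the cheapest $s-D$ candidates of $T_1$.

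Next I would pay for the $D=O(\sqrt{s\log s})$ ``deficit'' slots. The top-$s$ candidates of $T=T_1\cup T_2$ cost at most $\min(A,B)$, where $A=\sum_{j=1}^{s}r_v(c^{(j)})$ uses $T_1$ alone and $B=\sum_{j=1}^{s-D}r_v(c^{(j)})+\sum_{i=1}^{D}R_i$ fills the deficit with the $D$ cheapest candidates of $T_2$ (these give $s$ distinct members of $T$), where $R_i$ is the $i$-th smallest rank in $T_2$. Since $T_2$ is a uniform random $(k/\sqrt s)$-subset of $\C\setminus T_1$, and $m=\omega(k)$ makes $\C\setminus T_1$ a $(1-o(1))$-fraction of $\C$ while $k=\omega(s^{3/2})$ makes $|T_2|=k/\sqrt s=\omega(s)\gg D$, the folklore order-statistics fact (Appendix~\ref{app:folklore}) gives $\E[R_i]\le O(i\sqrt s)\cdot\rand$ up to a correction of $|T_1|$ arising because $T_1$ was removed; this correction is harmless because it only matters when $T_1$ has densely captured $v$'s cheap candidates, in which case the competing bound $A$ is already $O(\mathrm{lp}_v)$ — an interleaving argument on the sorted candidate lists of $T_1$ and $\C\setminus T_1$ makes this precise. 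Summing the main term over $i\le D$ gives $O(D^2\sqrt s)\cdot\rand=O(s^{3/2}\log s)\cdot\rand$, so $\E[\min(A,B)]\le 3\,\mathrm{lp}_v+O(s^{3/2}\log s)\cdot\rand$; averaging over $v\in\V$ completes the proof.

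The step I expect to be the main obstacle is exactly this deficit analysis: calibrating the scale-down factor $1-\tfrac1{\sqrt s}$ against the $\Theta(\sqrt s)$ expected shortfall and $O(\sqrt{s\log s})$ fluctuation of dependent rounding so that only $O(\sqrt{s\log s})$ slots go uncovered; invoking the negative-correlation Chernoff bound uniformly over the nested family of prefix sets; and taming the ``$T_1$ was removed'' correction in the cost of $T_2$'s cheap candidates via the trade-off between using $T_1$ alone and routing the deficit to $T_2$ — it is here that both hypotheses $m=\omega(k)$ and $k=\omega(s^{3/2})$ are needed, and the $\log s$ in the final bound is the price of the union bound in the concentration step. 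The bookkeeping of boundary levels and low-probability concentration failures is routine but is what forces the loose constant $3$.
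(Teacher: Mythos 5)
Your proposal follows the paper's architecture quite closely: a per-voter LP cost $\mathrm{lp}_v$ (the paper's $\opt_i$), scaling by $(1-1/\sqrt s)$ followed by dependent rounding, Chernoff-type concentration applied to nested prefix sets of $v$'s preference order, a resulting deficit of $D=\Theta(\sqrt{s\log s})$ slots, and repayment of that deficit via order statistics of the uniformly random $T_2$. The aggregation scheme is the one real difference: you union-bound concentration over all $s$ prefix levels with deviation $O(\sqrt{\ell\log s})$ (the $\log s$, not $\log\ell$, is what makes the union bound over $s$ events work), while the paper concentrates over only $O(\log s)$ geometric buckets $\C_1,\dots,\C_\eta$ of scaled mass $s/2,s/4,\dots,2\sqrt s$ and compares each bucket's count against the \emph{next} bucket's average LP rank $\mu_{h+1}$; the factor $3=\tfrac32\cdot2$ falls out of that telescoping.

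There are two genuine gaps. First, your change of variables pushes the top of the window to LP mass $s$: the $(s-D)$th cheapest element of $T_1$ gets charged $g(s)$, the rank at which the LP's cumulative $\tilde y$-mass for $v$ reaches $s$. This quantity is not controlled by $\mathrm{lp}_v$ — the inequality $\mathrm{lp}_v=\int_0^s g\ge\epsilon\,g(s-\epsilon)$ bounds $g(s-\epsilon)$, not $g(s)$, and $g$ can jump just below $s$ (cheap candidates carrying mass $s-\epsilon$, followed by an expensive one). The paper deliberately stops its buckets at scaled mass $s-2\sqrt s$ so that the largest rank it ever charges is $\mu_\eta$, which satisfies $\mu_\eta\cdot 2\sqrt s\le\opt_i$. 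Your argument can be repaired by pulling $\psi^{-1}(J)$ back to $s-\Theta(\sqrt s)$ rather than $s$ (this only enlarges $D$ by $\Theta(\sqrt s)$, still $O(\sqrt{s\log s})$), but as written the last summand is unbounded in terms of $\mathrm{lp}_v$.

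Second, the ``interleaving argument'' meant to tame the additive $|T_1|$ correction to $\E[R_i]$ is where the real subtlety sits, and it is only a sketch; as you correctly observe, $|T_1|\approx k$ can dominate $i\sqrt s\cdot\rand$ for small $i$ under the stated hypotheses, so the correction is not harmless by default. (The paper's own ``the score will at most double under $m=\omega(k)$'' claim at this point does not actually follow from the hypotheses, so you are flagging a real soft spot.) The clean repair, which makes the whole $\min(A,B)$ case split unnecessary, is a coupling: sample a uniform permutation of $\C$, let $T_2'$ be its first $k/\sqrt s$ elements; then $T_2'\cap(\C\setminus T_1)\subseteq T_2$, hence $T_1\cup T_2\supseteq T_1\cup T_2'$ and $r_v(T)\le r_v(T_1\cup T_2')$. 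Since $T_2'$ is drawn from all of $\C$, its order statistics obey the folklore lemma with no removal correction at all, and $\E[|T_2'\cap T_1|]=O(k^2/(m\sqrt s))=o(k/\sqrt s)$ under $m=\omega(k)$, so $T_2'$ contributes far more than $D$ fresh candidates with high probability. With this coupling, and the window retraction above, your approach does go through and matches the paper's bound; without them it does not close.
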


\subsubsection{Analysis: Proof of Theorem~\ref{thm:lp}}
First consider dependent rounding on $\{y_j\}$. Let $Y_j$ denote the random variable which returns $1$ if $y_j$ is rounded to $1$ and $0$ if $y_j$ is rounded to $0$. Note that $\E[Y_j] = y_j$ for all candidates $j \in \C$. The following lemma is an easy consequence of Chernoff bounds:

\begin{lemma}
\label{lem:bound}
For any subset of candidates $\{c_{j_1}, \ldots, c_{j_\ell}\}$, let $W = \sum_{t = 1}^\ell y_{j_t}$. If $W = \omega(1)$, then 
$$\Pr\left[\sum_{t = 1}^\ell Y_{j_t} \in \left(W - 9\sqrt{W\log W}, W + 9\sqrt{W\log W}\right)\right] \geq 1 - \frac{2}{W^3}.$$
\end{lemma}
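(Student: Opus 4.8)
The plan is to apply a standard multiplicative Chernoff bound to the sum $\sum_{t=1}^{\ell} Y_{j_t}$, using the fact that dependent rounding preserves the marginals $\E[Y_{j_t}] = y_{j_t}$ and, crucially, that the rounded variables are negatively associated (this is the defining guarantee of the dependent rounding scheme of~\cite{dependent}). Negative association is exactly what allows the Chernoff–Hoeffding tail bounds to go through verbatim for sums of $\{0,1\}$ variables that are not independent. So the first step is to invoke negative correlation / negative association of $\{Y_{j_t}\}$ and state the two-sided multiplicative Chernoff bound: for $\delta \in (0,1)$,
\[
\Pr\!\left[\left|\sum_{t=1}^{\ell} Y_{j_t} - W\right| \ge \delta W\right] \le 2\exp\!\left(-\frac{\delta^2 W}{3}\right).
\]

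Next I would choose the deviation parameter so that the additive slack $\delta W$ becomes $9\sqrt{W \log W}$, i.e. set $\delta = 9\sqrt{(\log W)/W}$. Note that since $W = \omega(1)$ this $\delta$ is $o(1)$, so it lies in $(0,1)$ for $W$ large enough, and the Chernoff bound above applies. Plugging in, the exponent becomes $-\delta^2 W/3 = -27\log W$, giving a failure probability of at most $2\exp(-27\log W) = 2 W^{-27} \le 2 W^{-3}$. This immediately yields
\[
\Pr\!\left[\sum_{t=1}^{\ell} Y_{j_t} \in \left(W - 9\sqrt{W\log W},\, W + 9\sqrt{W\log W}\right)\right] \ge 1 - \frac{2}{W^3},
\]
which is the claimed statement. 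The constant $9$ is generous precisely so that there is no need to be careful — any constant whose square exceeds $9$ (to beat the $1/3$ in the exponent and still clear exponent $3$) works, and $81/3 = 27 \ge 3$.

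I expect the only real subtlety — the "main obstacle," though it is minor — to be the justification that a Chernoff-type bound is even legitimate here, since the $Y_{j_t}$ are dependent (dependent rounding enforces $\sum_j Y_j$ to be an exact integer, which is a hard global constraint). The resolution is to cite the known property that dependent rounding produces negatively associated indicator variables, for which Chernoff bounds hold with the same constants as in the independent case; this is standard and is part of what makes the scheme of~\cite{dependent} useful. Everything else is a one-line substitution. A small bookkeeping point: one should restrict attention to $W$ large enough that $\delta < 1$ and that $27 \log W \ge 3 \log W$ (trivially true), both of which are absorbed by the hypothesis $W = \omega(1)$; for the finitely many small values of $W$ the statement can be made vacuous or handled by a trivial bound, but since the lemma is only ever applied with $W \to \infty$ this is not a concern.
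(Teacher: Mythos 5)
Your proposal is correct and matches the paper's intended argument, which simply asserts that the lemma ``is an easy consequence of Chernoff bounds'' without spelling out the details. You correctly identify the one nontrivial ingredient — that dependent rounding yields negatively correlated indicators so the multiplicative Chernoff bound applies unchanged — and the substitution $\delta = 9\sqrt{(\log W)/W}$ with exponent $-27\log W \le -3\log W$ is exactly the right calculation.
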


 We now compute the expected $s$-Borda score for each voter. Towards this end, we partition the candidates into buckets with geometrically decreasing sum of $y_i$ values, and account for the expected score generated by dependent rounding in each bucket against the LP value of the subsequent bucket. Lemma~\ref{lem:bound} will ensure the number of candidates chosen from each bucket is close to the LP value, and the deficit gets taken care of by the uniformly randomly chosen candidates.

For simplicity of notation, let $\eta = \log_2 \frac{\sqrt{s}}{2}$. Fix a voter $v_i$, and suppose its preference order is $c_{i_1} \succ c_{i_2} \succ \ldots \succ c_{i_m}$. Recall that $\{\tilde{y}, x\}$ is the LP solution. The values $x_{ij}$ in the LP are set as follows: Consider the prefix of the ordering such that $\sum_{t=1}^{\ell} \tilde{y}_{i_t} \le s$ and $\sum_{t=1}^{\ell+1} \tilde{y}_{i_t} > s$. The LP sets $x_{i i_t} = \tilde{y}_{i_t}$ for $t \le \ell$, and sets $x_{i i_{\ell+1}} = s - \sum_{t=1}^{\ell} \tilde{y}_{i_t}$. The contribution of $v_i$ to the LP objective is therefore
\begin{equation}
    \label{eq:opt1}
\opt_i =  \sum_{t=1}^{\ell} r_{v_i}(c_{i_t}) \tilde{y}_{i_t} + r_{v_i}(c_{i_{\ell+1}}) \left(s - \sum_{t=1}^{\ell} \tilde{y}_{i_t} \right) \ge \sum_{t=1}^{\ell} r_{v_i}(c_{i_t}) \tilde{y}_{i_t}.
\end{equation}

Consider the first $\ell$ candidates in the above ordering. We have $\sum_{j=1}^{\ell} \tilde{y}_j \ge s - 1$, so that
\begin{equation}
    \label{eq:sumy}
\sum_{j=1}^{\ell} y_j \ge (s-1) \left(1 - \frac{1}{\sqrt{s}} \right) \ge s - (\sqrt{s}+1) \ge s - 2 \sqrt{s}.
\end{equation}
We split these $\ell$ candidates into sets $\C_1, \ldots, \C_{\eta}$ as follows: We walk down the preference order of $v_i$. We take $\C_1$ as the set of candidates whose $y$-values sum to $\frac{s}{2}$;  $\C_2$ as the next set of candidates whose  $y$-values sum to $\frac{s}{4}$, and so on until $\C_{\eta}$, whose sum of $y$-values is $\frac{s}{2^\eta} = 2\sqrt{s}$.  Now the sum of $y$-values of all candidates in $\{\C_1, \ldots, \C_{\eta}\}$ is exactly $s - 2\sqrt{s}$. Formally, we define 
\[
\theta_0 = 0, \qquad \theta_h = \min\left\{q \ \bigg| \ \sum_{t = 1}^q y_{i_t} \geq \left(1 - \frac{1}{2^h}\right)s \right\}, \:\:\forall h \in \{1, \ldots, \eta\},
\]
and correspondingly define the sets $\{\C_1, \ldots, \C_{\eta}\}$ as:
$$\C_h = \{c_{i_{q}} \mid \theta_{h - 1} < q \leq \theta_h\}, \:\: \forall h \in \:\{1, \ldots, \eta\}.$$

For all $h \in \{1, \ldots, \eta\}$, let $y_{\C_h} = \sum_{c_j \in \C_h}y_j$ and  $Y_{\C_h} = \sum_{c_j \in \C_h}Y_j$. Note that $y_{\C_h}$ decreases by a factor of $2$ as $h$ increases. Now consider the outcome of the dependent rounding procedure for each of the sets $\C_1, \ldots \C_{\eta}$. We say the rounding fails for $v_i$ if there exists $h \in \{1, \ldots, \eta-1\}$ such that the number $Y_{\C_h}$ of chosen candidates  in $\C_h$ is not in range $y_{\C_h} \pm 9 \sqrt{y_{\C_h}\log y_{\C_h}}$. We will not consider $\C_{\eta}$ when defining failure, and will deal with this set separately. 

Let $\F$ denote the failure event. We now bound the probability of the event $\F$ for voter $v_i$.
\begin{lemma}
$\Pr[\F] \leq s^{-3/2}.$
\end{lemma}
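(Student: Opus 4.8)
The plan is to bound $\Pr[\F]$ by a union bound over the $\eta - 1$ buckets $\C_1, \ldots, \C_{\eta-1}$, using Lemma~\ref{lem:bound} on each one separately. For a fixed $h \in \{1, \ldots, \eta-1\}$, recall that $y_{\C_h} = \frac{s}{2^h}$, which is at least $y_{\C_{\eta-1}} = \frac{s}{2^{\eta-1}} = 4\sqrt{s}$; since $s = \omega(1)$, this means $y_{\C_h} = \omega(1)$, so Lemma~\ref{lem:bound} applies with $W = y_{\C_h}$. The lemma gives that the number $Y_{\C_h}$ of chosen candidates in $\C_h$ falls outside the interval $\left(y_{\C_h} - 9\sqrt{y_{\C_h}\log y_{\C_h}},\, y_{\C_h} + 9\sqrt{y_{\C_h}\log y_{\C_h}}\right)$ with probability at most $\frac{2}{y_{\C_h}^3}$. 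This is exactly the event whose occurrence (for some $h \le \eta-1$) defines $\F$.

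Next I would sum these failure probabilities over $h = 1, \ldots, \eta - 1$. Since $y_{\C_h} = \frac{s}{2^h}$ is decreasing in $h$, the largest per-bucket failure probability is the one for $h = \eta - 1$, namely $\frac{2}{(4\sqrt{s})^3} = \frac{2}{64 s^{3/2}} = \frac{1}{32 s^{3/2}}$. More carefully, $\sum_{h=1}^{\eta-1} \frac{2}{y_{\C_h}^3} = \sum_{h=1}^{\eta-1} \frac{2 \cdot 2^{3h}}{s^3} = \frac{2}{s^3}\sum_{h=1}^{\eta-1} 8^h$, and this geometric sum is at most $\frac{2}{s^3} \cdot \frac{8^{\eta}}{7} \le \frac{2}{s^3} \cdot \frac{8^{\eta}}{4}$. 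Using $2^{\eta} = \frac{\sqrt{s}}{2}$, we get $8^{\eta} = (2^{\eta})^3 = \frac{s^{3/2}}{8}$, so the bound becomes $\frac{2}{s^3} \cdot \frac{s^{3/2}}{32} = \frac{1}{16 s^{3/2}} \le s^{-3/2}$, as desired. (The precise constants may need minor adjustment, but there is slack; the geometric sum is dominated by its last term up to a constant factor, and that last term is $\Theta(s^{-3/2})$.)

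The main obstacle — though it is a mild one — is just keeping the bookkeeping straight: verifying that $y_{\C_h}$ is genuinely $\omega(1)$ for every $h \le \eta - 1$ (so Lemma~\ref{lem:bound} is applicable across all buckets in the union bound, not merely the large ones), and checking that the geometric sum $\sum_h 8^h$ is controlled by its top term $8^{\eta-1}$, which in turn is $\Theta(s^{3/2})$, so that dividing by the $s^3$ from the cube gives exactly the target order $s^{-3/2}$. One also needs to be slightly careful that failure is defined only over $h \in \{1,\ldots,\eta-1\}$ — the bucket $\C_{\eta}$ is deliberately excluded and handled separately later — so the union bound runs only over those $\eta-1$ indices and no boundary term is missed.
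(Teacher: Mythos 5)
Your proof is correct and matches the paper's approach: a union bound over the buckets $\C_1, \ldots, \C_{\eta-1}$ using Lemma~\ref{lem:bound}, followed by observing that the resulting sum is a geometric series dominated by its largest ($h = \eta-1$) term, with $y_{\C_{\eta-1}} = 4\sqrt{s}$ giving the order $s^{-3/2}$. The only difference is cosmetic — you substitute $y_{\C_h} = s/2^h$ and compute $\sum_h 8^h$ explicitly, whereas the paper compresses the geometric-sum step into the single bound $\sum_{h=1}^{\eta-1} 2/y_{\C_h}^3 \le 3/y_{\C_{\eta-1}}^3$.
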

\begin{proof}
By union bound applied to Lemma~\ref{lem:bound}, we have:
\begin{align*}
\Pr[\F] & \leq \sum_{h = 1}^{\eta-1}\frac{2}{y_{\C_h}^3} \leq \frac{3}{y_{\C_{\eta-1}}^3} \le s^{-3/2},
\end{align*}
where we have used that $\{y_{\C_h}\}$ is a geometrically decreasing sequence, and that $y_{\C_{\eta-1}} = 4 \sqrt{s}$. 
\end{proof}

We are now ready to compute the expected score for $v_i$ in our algorithm. Recall that $T$ denotes the set of chosen candidates and $\opt_i$ denotes the $s$-Borda score for $v_i$ in the LP solution. Let $\mathsc{Bad}$ denote the expected $s$-Borda score for $v_i$ in the event $\F$, and $\mathsc{Good}$ denote the expected score otherwise. We will bound these separately below.

\begin{lemma}
\label{lem:bad}
$\mathsc{Bad} \le O(s^{5/2})\cdot \rand$.
\end{lemma}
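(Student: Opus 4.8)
The plan is to bound the expected $s$-Borda score for voter $v_i$ conditioned on the failure event $\F$ by the unconditional expected score in the case where $v_i$'s preferences are worst possible, namely when the $k$ chosen candidates are drawn uniformly at random from $\C$. The key observation is that in \emph{any} solution our algorithm produces (whether or not $\F$ occurs), voter $v_i$'s cost is at most $\sum_{t=1}^{s} r_{v_i}(c_{i_t})$ where $c_{i_1} \succ c_{i_2} \succ \cdots$ is her preference order, but that is only $O(s^2)$ and gives nothing useful against $\rand$ when $m = \omega(k)$. So instead I would use the trivial but sufficient bound: the $s$-Borda cost of $v_i$ for committee $T$ is at most the cost of $v_i$'s $s$ \emph{worst}-ranked candidates among any $s$ of them --- bounded crudely by $s \cdot m$. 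Then since $\Pr[\F] \le s^{-3/2}$, the contribution of the $\F$ branch is at most $s^{-3/2} \cdot s \cdot m = s^{-1/2} m$. Comparing to $\rand = \frac{m+1}{k+1}$, this is $s^{-1/2}(k+1) \cdot \rand \le O(s^{-1/2} k)\cdot \rand$, and since $k = \omega(s^{3/2})$, this is \emph{worse} than the claimed $O(s^{5/2})\cdot\rand$ unless we are more careful --- so the crude bound $s \cdot m$ is too lossy.

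The right approach is to be precise about the committee structure conditioned on $\F$. Conditioned on $\F$, the set $T_1$ from dependent rounding still contains $k(1-1/\sqrt{s})$ candidates and $T_2$ contains $k/\sqrt{s}$ uniformly random candidates from $\C \setminus T_1$. The cost of $v_i$ is at most the cost computed using only the candidates in $T_2$, since $v_i$'s $s$-Borda cost is monotone (adding candidates only helps). Now $T_2$ is $k/\sqrt{s}$ candidates chosen uniformly at random from a set of at least $m - k$ candidates, so by the folklore fact (Appendix~\ref{app:folklore}), the expected rank of the $t^{\text{th}}$-best candidate in $T_2$, from $v_i$'s perspective, is at most $t \cdot \frac{m+1}{k/\sqrt{s}+1} \le t \sqrt{s} \cdot \frac{m+1}{k+1}\cdot\frac{k+1}{k/\sqrt{s}}$; more simply, $\E[\text{rank of }t^{\text{th}}\text{-best in }T_2] \le t\sqrt{s}\cdot\rand\cdot(1+o(1))$. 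Summing over $t = 1,\ldots,s$ gives expected $s$-Borda cost at most $\frac{s(s+1)}{2}\sqrt{s}\cdot\rand \cdot (1+o(1)) = O(s^{5/2})\cdot\rand$. Crucially, this bound holds \emph{unconditionally} on the dependent rounding outcome (hence in particular conditioned on $\F$), because $T_2$ is drawn independently of the rounding of $\{y_j\}$, and restricting to $T_2$ only increases the cost. Therefore $\mathsc{Bad} \le O(s^{5/2})\cdot\rand$ directly, without even needing to multiply by $\Pr[\F]$.

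The main obstacle is making the ``$T_2$ is uniform over $\C \setminus T_1$, and $|T_1|$ is random'' step rigorous: one must check that conditioning on the dependent-rounding outcome (and on $\F$) leaves $T_2$ distributed as a uniform $(k/\sqrt{s})$-subset of the remaining $\ge m - k$ candidates, so the folklore expectation bound applies. This is immediate from the algorithm's description --- step 4 samples $T_2$ after and independently of steps 1--3 --- but the paper should state it cleanly. A secondary point: one should confirm the $(1+o(1))$ factors are absorbed using $m = \omega(k)$ and $k = \omega(s^{3/2})$, so that $\frac{m+1}{k/\sqrt{s}+1} \le (1+o(1))\sqrt{s}\cdot\frac{m+1}{k+1}$; this follows since $k/\sqrt{s} = \omega(s) = \omega(1)$. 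With these in hand the lemma follows.
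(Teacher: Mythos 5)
Your proposal is correct and follows essentially the same route as the paper: drop $T_1$ and bound the $s$-Borda cost by that of $T_2$ alone, use the folklore expectation $t\cdot\frac{m+1}{k/\sqrt{s}+1}\leq t\sqrt{s}\cdot\rand$ for the $t$-th best rank in a uniform $\frac{k}{\sqrt{s}}$-subset, sum over $t=1,\dots,s$, and absorb the discrepancy between sampling from $\C\setminus T_1$ versus all of $\C$ into a constant (the paper uses a factor of $2$ under $m=\omega(k)$; you use $1+o(1)$). One small caveat: the throwaway claim in your first paragraph that $v_i$'s cost ``is at most $\sum_{t=1}^s r_{v_i}(c_{i_t})$'' is backwards — that sum equals $s(s+1)/2$ and is a \emph{lower} bound on the cost, not an upper bound — but since you correctly identified that line of attack as unhelpful and abandoned it, this has no bearing on the actual argument in your second paragraph, which is sound and matches the paper.
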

\begin{proof}
If $\F$ happens, the final solution is still at least as good as choosing the $\frac{k}{\sqrt{s}}$ random candidates in Step (4) of the algorithm. Note that since we assumed $k = \omega(s^{3/2})$, we have $\frac{k}{\sqrt{s}} \ge s$, so that we will have chosen enough random candidates to fill up at least $s$ positions for computing $s$-Borda score. Further, since we assume that $m =  \omega(k)$, the score of the solution will at most double had we assumed these $\frac{k}{\sqrt{s}}$ candidates are chosen randomly from the entire set of $m$ candidates instead of from the remaining $m - k(1 - \frac{1}{\sqrt{s}})$ candidates after dependent rounding. Thus, we have:
$$\mathsc{Bad} \leq 2 \E_{T \subseteq C, |T| = k/\sqrt{s}}[r_{v_i}(T)] \le 2 \frac{s(s+1)}{2} \frac{m+1}{\frac{k}{\sqrt{s}} + 1}\leq 4s^{3/2}(s + 1)\cdot \rand,$$
which yields that $\mathsc{Bad} \leq O(s^{5/2})\cdot \rand$.
\end{proof}

\begin{lemma}
\label{lem:good}
$\mathsc{Good} \leq 3\opt_i +O(s^{3/2}\log s)\cdot \rand$.
\end{lemma}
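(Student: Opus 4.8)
The plan is to condition on $\F$ not occurring (write $\overline{\F}$ for that event) and, for the fixed voter $v_i$, exhibit an explicit set $Q\subseteq T$ with $|Q|=s$ such that $\sum_{c\in Q}r_{v_i}(c)\le 3\opt_i+O(s^{3/2}\log s)\cdot\rand$ in expectation. Since $r_{v_i}(T)=\min_{Q'\subseteq T,|Q'|=s}\sum_{c\in Q'}r_{v_i}(c)\le\sum_{c\in Q}r_{v_i}(c)$ (using $|T|=k\ge s$), this gives $\mathsc{Good}=\E[r_{v_i}(T)\mid\overline{\F}]\le 3\opt_i+O(s^{3/2}\log s)\cdot\rand$; combined with Lemma~\ref{lem:bad} and $\Pr[\F]\le s^{-3/2}$ it yields $\E[r_{v_i}(T)]\le 3\opt_i+O(s^{3/2}\log s)\cdot\rand$, and averaging over $i$ (with $\tfrac1n\sum_i\opt_i$ being the LP value, hence at most $\opt$) proves Theorem~\ref{thm:lp}.

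\emph{Step 1 — charging the bucket candidates to $\opt_i$.} Since $\C_1,\dots,\C_\eta$ are consecutive blocks of $v_i$'s preference order, every candidate of $\C_h$ has rank at most $\min_{c\in\C_{h+1}}r_{v_i}(c)\le y_{\C_{h+1}}^{-1}\sum_{c\in\C_{h+1}}r_{v_i}(c)\,y_c$. Hence the candidates of $\C_h$ picked by dependent rounding have total rank at most $\big(Y_{\C_h}/y_{\C_{h+1}}\big)\sum_{c\in\C_{h+1}}r_{v_i}(c)\,y_c$. On $\overline{\F}$ we have $Y_{\C_h}\le y_{\C_h}+9\sqrt{y_{\C_h}\log y_{\C_h}}$, and since $y_{\C_h}=2y_{\C_{h+1}}$ and $y_{\C_h}\ge 4\sqrt s$ for $h\le\eta-1$, the ratio $Y_{\C_h}/y_{\C_{h+1}}$ equals $2+18\sqrt{\log y_{\C_h}}/\sqrt{y_{\C_h}}=2+o(1)\le 3$ for $s$ large (this is where $s=\omega(1)$ enters). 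Summing over $h=1,\dots,\eta-1$ and using that $\bigcup_h\C_h$ lies inside the prefix of $v_i$ whose $\tilde y$-weighted rank is at most $\opt_i$ (Eq.~\eqref{eq:opt1}, together with $y_j\le\tilde y_j$), the candidates of $T_1$ in $\C_1\cup\dots\cup\C_{\eta-1}$ have total rank at most $3\opt_i$.

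\emph{Step 2 — the deficit and how to fill it.} Let $M=\sum_{h=1}^{\eta-1}Y_{\C_h}$. We have $\sum_{h\le\eta-1}y_{\C_h}=s-4\sqrt s$, and on $\overline{\F}$ a geometric-series bound gives $\sum_{h\le\eta-1}9\sqrt{y_{\C_h}\log y_{\C_h}}=O(\sqrt{s\log s})$, so $M\ge s-O(\sqrt{s\log s})$; set $D=(s-M)^+=O(\sqrt{s\log s})$. The $M$ candidates of $T_1$ inside the controlled buckets are precisely the $M$ smallest-rank candidates of $T_1$ (the controlled buckets form an initial segment of $v_i$'s order), and by Step~1 their total rank is $\le3\opt_i$; it remains to supply $D$ further candidates of small expected total rank. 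We take these as the $D$ smallest-rank candidates of the uniformly random set $T_2$, which is legitimate because $|T_2|=k/\sqrt s\ge s>D$ (using $k=\omega(s^{3/2})$): a uniformly random committee of size $k/\sqrt s$ has $t$-th smallest rank at most $t\cdot\frac{m+1}{k/\sqrt s+1}\le t\sqrt s\cdot\rand$ in expectation, so these $D$ candidates have expected total rank at most $\sum_{t=1}^{D}t\sqrt s\cdot\rand=O(D^2\sqrt s)\cdot\rand=O(s^{3/2}\log s)\cdot\rand$. Taking $Q$ to be the $M$ bucket candidates together with these $D$ candidates (or, if $M\ge s$, just $s$ of the bucket candidates) completes the bound.

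\emph{Main obstacle.} The delicate point is that $T_2$ is drawn from $\C\setminus T_1$, not from all of $\C$, so its order statistics could a priori be inflated by as much as $\Theta(k)$ if $T_1$ happened to swallow a long initial segment of $v_i$'s order — which would break the last estimate of Step~2. I expect the resolution to require a dichotomy that interacts carefully with Step~1: either $T_1$ is ``poor'' in $v_i$'s top candidates, so $\C\setminus T_1$ omits only $o(m)$ of $v_i$'s prefix and the clean order-statistic bound above applies; or $T_1$ is ``rich'', in which case $T_1$ itself already contains $s$ candidates whose ranks sum to $O(\opt_i)+O(s^{3/2}\log s)\cdot\rand$ — here one uses $\opt_i\ge\tfrac{s(s+1)}2$, since by Eq.~\eqref{eq:opt1} $\opt_i$ is the cost of distributing total mass $s$ over $v_i$'s positions $1,2,\dots$ with each coordinate in $[0,1]$, which is at least $1+2+\cdots+s$ — so $Q\subseteq T_1$ suffices and $T_2$ is not needed. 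Getting this case split to mesh with the bucket-charging while keeping the leading constant on $\opt_i$ at $3$ is, I anticipate, the main bookkeeping challenge of the lemma.
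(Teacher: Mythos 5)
Your Steps~1 and~2 reproduce the paper's argument almost exactly. Step~1 is the bucket-charging bound: the paper gets $|T_1\cap\C_h|\le y_{\C_h}+9\sqrt{y_{\C_h}\log y_{\C_h}}\le\frac32 y_{\C_h}$ and then uses $y_{\C_h}\le 2y_{\C_{h+1}}$ together with $\sum_h \mu_h y_{\C_h}\le\opt_i$, arriving at the same $3\opt_i$ bound; your version tracks the ratio $Y_{\C_h}/y_{\C_{h+1}}=2+o(1)\le 3$ directly, which is marginally sharper but delivers the identical conclusion. Step~2 is also identical in structure to the paper's: the paper defines $u=4\sqrt s+(\sqrt2+1)\sqrt{s\log s}=O(\sqrt{s\log s})$, takes $T_1^*$ of size $s-u$ from the controlled buckets and $T_2^*$ of size $u$ from the random set, and applies the order-statistic bound $t\cdot\frac{m+1}{k/\sqrt s+1}\le t\sqrt s\cdot\rand$ to get $O(s^{3/2}\log s)\cdot\rand$.

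The ``main obstacle'' paragraph is where you diverge from the paper, and the divergence is unnecessary. The paper does not use any dichotomy, and it does not invoke a lower bound like $\opt_i\ge\frac{s(s+1)}2$. It simply reuses the observation already made in the proof of Lemma~\ref{lem:bad}: since $m=\omega(k)$ and $|T_1|=k(1-1/\sqrt s)\le k=o(m)$, drawing $T_2$ from $\C\setminus T_1$ instead of from $\C$ can shift any order statistic by at most an additive $|T_1|$ (a candidate with restricted rank $r$ has original rank at most $r+|T_1|$), and the paper asserts this changes the bound by at most a constant factor. You were right to pause over this point — the paper's treatment is terse, and whether the additive $k$ is truly dominated by $t\sqrt s\cdot\rand$ for small $t$ under only $m=\omega(k)$ deserves a sentence of justification — but the intended resolution is that direct appeal to $m\gg k$, not the case split you sketch. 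Framing the issue as the ``main bookkeeping challenge of the lemma'' and proposing an unexecuted dichotomy is the one place where your proposal does not match the paper and is left genuinely incomplete.
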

\begin{proof}
Suppose $\F$ does not happen. Denote the set of candidates chosen by the algorithm from $\{\C_1, \ldots, \C_{\eta-1}\}$ as $T_1$, and the randomly chosen $\frac{k}{\sqrt{s}}$ candidates as $T_2$. Therefore $T = T_1 \cup T_2$. From Eq~(\ref{eq:sumy}), we have 
$$\sum_{h=1}^{\eta-1} y_{\C_h} = s - 2\sqrt{s} - y_{\C_{\eta}} = s - 4\sqrt{s}.$$ 
Since $\F$ does not happen, we have:
$$|T_1| \geq s - 4 \sqrt{s} - \sum_{h = 1}^{\eta-1} \sqrt{y_{\C_h}\log y_{\C_h}} \geq s - 4 \sqrt{s} - \sum_{h = 1}^{\eta-1} \sqrt{\frac{s}{2^h}\log s} \geq s - 4 \sqrt{s} - (\sqrt{2} + 1)\sqrt{s\log s}.$$

Denote $u = 4 \sqrt{s} + (\sqrt{2} + 1)\sqrt{s\log s} = O(\sqrt{s\log s})$, so that $|T_1| \ge s - u$.  The quantity $u$ is the total ``deficit'' in candidates from the top $s$ that is caused by scaling the LP and dependent rounding. We make up this deficit using the set $T_2$. Specifically, consider the subsets, 
$$T_1^* = \argmin_{Q \subseteq T_1, |Q| = s - u}\sum_{c \in Q}r_{v_i}(c) \qquad \mbox{and} \qquad T_2^* = \argmin_{Q \subseteq T_2, |Q| = u}\sum_{c \in Q}r_{v_i}(c).$$ 
Note that $T_1^* \subseteq T_1$, and $T_2^* \subseteq T_2$. We will evaluate the score of these subsets of candidates, which will be an upper bound on the score of the algorithm. Towards this end, we define $\mu_h$ as the scaled LP score of $\C_{h}$, that is:
$$\mu_h = \frac{\sum_{j = \theta_{h - 1} + 1}^{\theta_h}r_{v_i}(c_{i_j})y_{i_j}}{y_{\C_h}}, \:\: \forall h \in \{1, \ldots, \eta\}.$$

Since $y_i \le \tilde{y}_i$, combining the previous inequality with Eq~(\ref{eq:opt1}), we have:
$$\sum_{h = 1}^{\eta}\mu_hy_{\C_h} \le \opt_i.$$

Since $y_{\C_h} \ge 2\sqrt{s} = \omega(1)$ for all $h \in \{1,2,\ldots, \eta\}$, we have:
$$|T_1 \cap \C_h| \le y_{\C_h} + 9 \sqrt{y_{\C_h} \log y_{\C_h}} \le \frac{3}{2} y_{\C_h}.$$ 

Since $\mu_h > r_{v_i}(c), \forall c \in \C_{h - 1}$ and since $y_{\C_h} \le 2 y_{\C_{h+1}}$, we can bound the expected score of $T_1^*$ as:
$$\sum_{c \in T_1^*}r_{v_i}(c) \leq \sum_{h = 1}^{\eta-1} \frac{3}{2} \cdot y_{\C_h}\mu_{h + 1} \le \sum_{h = 1}^{\eta-1} 3\cdot y_{\C_{h + 1}}\mu_{h + 1} \le 3\opt_i.$$

We can again assume that the $\frac{k}{\sqrt{s}}$ random candidates are chosen randomly from the entire set of $m$ candidates. This yields a bound on the score of $T_2^*$ as:
$$\sum_{c \in T_2^*}r_{v_i}(c) \leq \sum_{t = 1}^u t \cdot(\sqrt{s}\cdot \rand) = O(s^{3/2}\log s)\cdot \rand.$$
where we used $u = O(\sqrt{s\log s})$ to derive $\sum_{t = 1}^u t = O(s\log s)$. 

Therefore, we can bound $\mathsc{Good}$ as follows:
\[\mathsc{Good} \leq \sum_{c \in T_1^*}r_{v_i}(c) + \sum_{c \in T_2^*}r_{v_i}(c) \leq 3\opt_i + O(s^{3/2}\log s)\cdot \rand.\qedhere\]
\end{proof}

Synthesizing the bounds from Lemmas~\ref{lem:bad} and~\ref{lem:good}, we can conclude that:
\begin{align*}
\E[r_{v_i}(T)] &= \mathsc{Bad}\cdot \Pr[\F] + \mathsc{Good}\cdot (1-\Pr[\F])\\
&\leq  s^{-3/2} \cdot O(s^{5/2})\cdot \rand + 3\opt_i + O(s^{3/2}\log s)\cdot \rand \\
&= 3\opt_i + O(s^{3/2}\log s)\cdot \rand.
\qedhere
\end{align*}

Taking expectation over all voters, this yields Theorem~\ref{thm:lp}.

\section{Conclusion}
Our work opens some interesting directions for further research. One open question is to extend our results to the stronger notion of approximate core stability under the CC rule for which the best known result is a $16$-approximation~\cite{ChengJMW20,JiangMW20}. It would be interesting to explore if our techniques can help improve the approximation factor via a simple-to-implement procedure. 

We conjecture that there is a lower bound of $\Omega(s^{3/2}) \cdot \rand$ on the score achievable by poly-time algorithms for $s$-Borda, i.e., that the algorithm in Section~\ref{sec:lp} is almost optimal. This will require a non-trivial strengthening of known hardness results for maximum multicover~\cite{Barman}.  It would also be interesting to explore if there are greedy rules that can match these bounds.

In the same vein, another interesting question is to map the landscape of approximation ratios  for generalizations such as  committee scoring rules. The work of~\cite{Byrka} shows strong positive results when voters assign a smooth set of weights to all candidates in the committee, while our work considers the case where the weights are concentrated on higher-ranked candidates. There is a large middle ground where the approximability of this problem is poorly understood.

\section*{Acknowledgments}
We thank Brandon Fain for several discussions, and the anonymous reviewers for their suggestions. This work is supported by NSF grant CCF-1637397, ONR award N00014-19-1-2268, and DARPA award FA8650-18-C-7880.

\bibliographystyle{plain}
\bibliography{ref}

\appendix
\section{Expected Ranks in a Random Committee}
\label{app:folklore}
The following is a well-known proof of the statement that if we pick a random size-$k$ subset of $\C$, the $t^{\text{th}}$ smallest rank is $t \cdot \frac{m + 1}{k + 1}$ in expectation. Mark $m + 1$ points on a circle. Pick a subset of $k + 1$ points uniformly at random, and then choose one point $P$ of these $k + 1$ as the cut-off point uniformly at random. Starting from $P$ and going clockwise, mark the next point as the candidate with rank $1$, and the point after that as the candidate with rank $2$, and so on, until the last point which is marked as the candidate with rank $m$. The picked subset comprises $P$ and a uniformly random size-$k$ subset of $\C$. By symmetry, the expected clockwise distance going from the $t^{\text{th}}$-smallest ranked chosen candidate to the $(t + 1)^{\text{st}}$ is the same for every $t \in \{0, 1, \ldots, k\}$, if we view $P$ as simultaneously the $0^{\text{th}}$ and the $(k + 1)^{\text{st}}$ smallest. Since these $k + 1$ distances sum to $m + 1$, all of them should be $\frac{m + 1}{k + 1}$.

\section{Analysis of \g{} for $s$-Borda: Proof of Theorem~\ref{thm:s_borda}}
\label{app:sborda}
For simplicity, we define:
\[
X_k = r_{\V}(T_{k}),\qquad Y_k = \sum_{c \in \C \setminus T_k}\left(r_{\V}(T_k) - r_{\V}(T_k \cup \{c\})\right).
\]

Additionally, let $\rho_{v, k, j} $ denote the score $r_v(c_j(v))$ of the $j^{\text{th}}$-ranked candidate $c_j(v)$ for voter $v$ in the set $T_k$, and let $R_{k} = \frac{1}{n}\sum_{v \in \V} \rho_{v, k, s}$ denote the average score of these candidates across all voters. Note that by definition:
\begin{equation}
\label{eq:xk}
    X_k = r_{\V}(T_k) = \frac{1}{n} \sum_{v \in \V} \sum_{j=1}^s \rho_{v,k,j}
\end{equation}

We first present the following lemma, which is an analog to Lemma~\ref{lem:minimum_improvement}.

\begin{lemma}
\label{lem:sborda1}
For $k \geq s$, we have $X_k - X_{k + 1} \geq \frac{R_{k}^2 - (2s + 1)R_{k} + 2X_k}{2(m - k)}.$
\end{lemma}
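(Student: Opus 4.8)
\textbf{Proof plan for Lemma~\ref{lem:sborda1}.}

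The plan is to mirror the proof of Lemma~\ref{lem:minimum_improvement} from the $s=1$ case, but now account for the fact that adding a single candidate $c$ only affects a voter $v$ if $c$ beats $v$'s current $s^{\text{th}}$-ranked representative, i.e., if $r_v(c) < \rho_{v,k,s}$, and in that case $c$ displaces exactly one candidate from $v$'s top-$s$ set. First I would fix a voter $v$ and sum the marginal improvement $r_v(T_k) - r_v(T_k \cup \{c\})$ over all $c \in \C \setminus T_k$. A candidate $c$ with rank $r_v(c) = q$ for some $q < \rho_{v,k,s}$ improves $v$'s score by $\rho_{v,k,s} - q$ (it knocks out the worst representative, of rank $\rho_{v,k,s}$, and inserts itself at rank $q$). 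Ranks $q$ strictly below $\rho_{v,k,s}$ that are \emph{not} already occupied by a member of $T_k$ are exactly $\rho_{v,k,s} - 1 - (s-1) = \rho_{v,k,s} - s$ in number — since the first $s-1$ representatives sit at ranks $\rho_{v,k,1} < \cdots < \rho_{v,k,s-1}$, all of which are $< \rho_{v,k,s}$. So the per-voter sum of improvements is $\sum_{q} (\rho_{v,k,s} - q)$ over the $\rho_{v,k,s} - s$ ``free'' ranks below $\rho_{v,k,s}$; the worst case (smallest sum) is when those free ranks are the largest available, namely $s, s+1, \ldots, \rho_{v,k,s}-1$, giving $\sum_{j=1}^{\rho_{v,k,s}-s} j = \binom{\rho_{v,k,s}-s+1}{2}$. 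This is where I'd be careful: I want a clean lower bound, and $\binom{\rho_{v,k,s}-s+1}{2} = \tfrac12(\rho_{v,k,s}-s)(\rho_{v,k,s}-s+1)$, which expands to $\tfrac12(\rho_{v,k,s}^2 - (2s-1)\rho_{v,k,s} + s^2 - s)$.

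Next I would sum over voters. Writing $\rho_v := \rho_{v,k,s}$ for brevity, $Y_k = \sum_{c\notin T_k}(r_\V(T_k) - r_\V(T_k\cup\{c\})) \ge \tfrac{1}{n}\sum_v \tfrac12(\rho_v^2 - (2s-1)\rho_v + s^2 - s)$. Now I use two facts: $\tfrac1n\sum_v \rho_v = R_k$ by definition, and $\tfrac1n\sum_v \rho_v^2 \ge R_k^2$ by the Cauchy–Schwarz (power-mean) inequality, exactly as in the $s=1$ proof. Also, since $\rho_{v,k,j} \le \rho_{v,k,s}$ for $j \le s$ would go the wrong way, I instead need a lower bound relating $\sum_v \rho_v$ to $X_k$: from Eq.~(\ref{eq:xk}), $X_k = \tfrac1n\sum_v\sum_{j=1}^s \rho_{v,k,j}$, and since each $\rho_{v,k,j} \ge j$ (the $j^{\text{th}}$ best of any set has rank at least $j$) we get $\sum_{j=1}^{s-1}\rho_{v,k,j} \ge \sum_{j=1}^{s-1} j$ — but more usefully, I want $X_k$ to appear with a $+2X_k$ in the numerator, so I should track $\sum_v \rho_v$ in terms of $X_k$ and the smaller representatives. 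The cleanest route: $\tfrac1n\sum_v \rho_v = X_k - \tfrac1n\sum_v\sum_{j=1}^{s-1}\rho_{v,k,j}$. Plugging the bounds into $Y_k \ge \tfrac1n\sum_v \tfrac12(\rho_v^2 - (2s-1)\rho_v) + \tfrac12(s^2-s)$ and regrouping to produce the target numerator $R_k^2 - (2s+1)R_k + 2X_k$ will require combining the $-(2s-1)R_k$ term with a $-2R_k$ contribution coming from $-2\cdot\tfrac1n\sum_v\sum_{j<s}\rho_{v,k,j}$-type manipulation against $X_k$; I expect constants like $s^2-s$ and the $\rho_{v,k,j}\ge j$ slack to be exactly what absorb the discrepancy.

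Finally, since \g{} picks $c^* = \argmax_c \Delta_c$, we have $X_k - X_{k+1} = \Delta_{c^*} \ge \tfrac{1}{m-k}\sum_{c\notin T_k}\Delta_c = \tfrac{Y_k}{m-k}$, and combining with the lower bound on $Y_k$ yields the claimed inequality. The main obstacle I anticipate is purely bookkeeping: getting the worst-case count of ``free'' ranks below $\rho_{v,k,s}$ exactly right (it is $\rho_{v,k,s}-s$, not $\rho_{v,k,s}-1$), and then algebraically coercing the resulting quadratic $\tfrac12(\rho_v^2 - (2s-1)\rho_v + s^2-s)$, after averaging and applying Cauchy–Schwarz, into the stated form $\tfrac{1}{2(m-k)}(R_k^2 - (2s+1)R_k + 2X_k)$ — in particular checking that the $X_k$ term enters with coefficient exactly $2$ and that the leftover constant terms have the right sign so the bound is not lost. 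I would double-check the edge behavior when $\rho_{v,k,s} = s$ (voter perfectly served, contributes $0$, consistent with the formula since then $\rho_v^2 - (2s-1)\rho_v + s^2 - s = s^2 - (2s-1)s + s^2 - s = 0$), which is a good sanity check that the constants are correct.
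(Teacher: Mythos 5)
Your proposal has a genuine gap at the step where you replace the per-voter improvement sum by its worst case. You correctly compute that, for a voter $v$ with $\rho_v := \rho_{v,k,s}$, the sum of marginal improvements over all $c \notin T_k$ equals
$\frac{\rho_v(\rho_v - 1)}{2} - \sum_{j=1}^{s-1} (\rho_v - \rho_{v,k,j})$,
the total over all ranks below $\rho_v$ minus the contributions of the already-chosen ranks. But you then lower-bound this by assuming the occupied ranks are as small as possible ($\rho_{v,k,j} = j$), obtaining $\binom{\rho_v - s + 1}{2}$. The quantity you throw away is exactly $\sum_{j=1}^{s-1} (\rho_{v,k,j} - j) \ge 0$, and that slack is precisely what the target bound needs. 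Concretely, after summing over voters and applying Cauchy--Schwarz, your bound gives
$Y_k \ge \tfrac{1}{2}\bigl(R_k^2 - (2s-1)R_k + s^2 - s\bigr)$,
whereas the lemma requires
$Y_k \ge \tfrac{1}{2}\bigl(R_k^2 - (2s+1)R_k + 2X_k\bigr)$.
The difference between the target and your bound is $(X_k - R_k) - \frac{s(s-1)}{2}$, which is always $\ge 0$ (since $\rho_{v,k,j} \ge j$), with equality only in the degenerate case $\rho_{v,k,j} = j$ for all $v,j$. So your bound is \emph{strictly weaker} except degenerately, and there is no way to "regroup" or "absorb the discrepancy" as you hoped: the $X_k$ term simply does not appear in your expression once you've discarded the $\sum_{j<s}\rho_{v,k,j}$ information. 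Your sanity check at $\rho_v = s$ didn't catch this because that is exactly the degenerate case where the two bounds coincide.

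The fix is to \emph{not} take a worst case over the occupied ranks. Keep the exact expression and note that summing $\sum_{j=1}^{s-1}(\rho_v - \rho_{v,k,j})$ over voters (and dividing by $n$) yields $(s-1)R_k - (X_k - R_k) = sR_k - X_k$. Equivalently, one adds the correction $\frac{1}{n}\sum_v \sum_{j=1}^{s}(\rho_v - \rho_{v,k,j}) = sR_k - X_k$ to $Y_k$ to get the clean identity
$Y_k + (sR_k - X_k) = \frac{1}{2n}\sum_v \rho_v(\rho_v - 1)$,
and then applies Cauchy--Schwarz to $\sum_v \rho_v^2$. This preserves $X_k$ exactly and produces the $+2X_k$ in the numerator. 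Everything else in your plan — the per-voter marginal calculation, the $\rho_v - s$ count of free ranks, the final averaging step $X_k - X_{k+1} \ge Y_k/(m-k)$ — is correct and matches the paper.
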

\begin{proof}
We observe that:
\begin{align*}
Y_k + (sR_{k} - X_k) &= \sum_{c \in \C \setminus T_k} \left(r_{\V}(T_k) - r_{\V}(T_k \cup \{c\})\right) + \frac{1}{n}\sum_{v \in \V}\sum_{j = 1}^s (\rho_{v, k, s} - \rho_{v, k, j})\\
&= \frac{1}{n}\sum_{v \in \V}\sum_{i = 1}^{\rho_{v, k, s} - 1} i\\
&= \frac{\sum_{v \in V} \rho_{v, k, s}(\rho_{v, k, s} - 1)}{2n}.
\end{align*}
Here, the first equality follows from Eq~(\ref{eq:xk}). For the second equality, observe that any candidate $c \in \C \setminus T_k$ whose $r_v(c) < \rho_{v,k,s}$ contributes $\frac{1}{n} \left(\rho_{v,k,s} - r_v(c) \right)$ to the quantity $\left(r_{\V}(T_k) - r_{\V}(T_k \cup \{c\})\right)$. Therefore, the RHS of the first equality is summing, for each voter $v$, the quantity  $\frac{1}{n} \left(\rho_{v,k,s} - r_v(c) \right)$ over all $c \in \C$ whose $r_v(c) < \rho_{v,k,s}$. This yields the second equality by a change of variables.

By Cauchy-Schwarz inequality, we have:
\[
\frac{\sum_{v \in \V} \rho_{v, k, s}^2}{n} \geq \left(\frac{1}{n}\sum_{v\in \V} \rho_{v, k, s} \right)^2 = R_{k}^2.
\]
Therefore,
\[
Y_k - X_k + sR_{k} = \frac{\sum_{v \in V} \rho_{v, k, s}(\rho_{v, k, s} - 1)}{2n} \geq \frac{1}{2}R_{k}^2 - \frac{1}{2}R_{k},
\]
which is equivalent to $Y_k \geq \frac{1}{2}R_{k}^2 - \frac{2s + 1}{2}R_{k} + X_k.$

Since the candidate chosen by \g{} is at least as good as the average, we have:
\[
X_{k} - X_{k + 1} \geq \frac{Y_k}{m - k}\geq \frac{R_{k}^2 - (2s + 1)R_{k} + 2X_k}{2(m - k)}. \qedhere
\]
\end{proof}

We now present a simple relationship between $X_k$ and $R_{k}$.

\begin{lemma}
\label{lem:sborda2}
For $k \geq s$, $R_{k} \geq \frac{X_k}{s}.$
\end{lemma}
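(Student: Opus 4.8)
The plan is to show that the average of the $s$ top-ranked scores in $T_k$ is at least the average of the single $s$-th ranked score, which is exactly the claim $R_k \ge X_k/s$ after rewriting. Recall $X_k = \frac{1}{n}\sum_{v\in\V}\sum_{j=1}^s \rho_{v,k,j}$ and $R_k = \frac{1}{n}\sum_{v\in\V}\rho_{v,k,s}$, where $\rho_{v,k,j}$ is the rank (for voter $v$) of the $j$-th best candidate of $T_k$, so that $\rho_{v,k,1}\le\rho_{v,k,2}\le\cdots\le\rho_{v,k,s}$ by definition of the ordering. The key observation is the trivial pointwise inequality $\rho_{v,k,j}\le\rho_{v,k,s}$ for every $j\le s$ and every voter $v$, since $\rho_{v,k,s}$ is the largest among the top $s$ ranks.

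First I would sum this pointwise inequality over $j=1,\dots,s$ to get $\sum_{j=1}^s \rho_{v,k,j}\le s\cdot\rho_{v,k,s}$ for each fixed $v$. Then I would average over all voters $v\in\V$, obtaining
\[
X_k = \frac{1}{n}\sum_{v\in\V}\sum_{j=1}^s \rho_{v,k,j} \le \frac{1}{n}\sum_{v\in\V} s\cdot\rho_{v,k,s} = s\cdot R_k,
\]
which rearranges to $R_k \ge X_k/s$, as desired. The hypothesis $k\ge s$ is used only to guarantee that $T_k$ actually contains $s$ candidates, so that $\rho_{v,k,s}$ (and hence the quantities $X_k$, $R_k$) is well-defined.

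There is essentially no obstacle here — the statement is a one-line consequence of monotonicity of the sorted ranks and linearity of averaging; the only thing to be careful about is making explicit that the top-$s$ ranks for each voter are sorted in nondecreasing order, which is how $r_v(T_k)=\sum_{j=1}^s\rho_{v,k,j}$ was defined (the min over size-$s$ subsets $Q\subseteq T_k$ picks exactly the $s$ smallest ranks). I expect the author's proof to be exactly this short argument.
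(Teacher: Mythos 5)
Your proof is correct and is essentially the same argument as the paper's: the $s$-th ranked score $\rho_{v,k,s}$ dominates each of the top $s$ ranks, hence dominates their average, and averaging over voters gives $R_k \geq X_k/s$.
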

\begin{proof}
This inequality follows directly from the definition of $\rho_{v, k, s}$: since this is defined as the rank of the $s^{\text{th}}$-ranked candidate among the already-chosen ones for voter $v$, its contribution to the score must be greater than or equal to the average of the top $s$ among the chosen candidates for voter $v$. Taking sum over all voters gives this inequality.
\end{proof}

\paragraph{Completing the proof of Theorem~\ref{thm:s_borda}}
To complete the proof, we apply induction on $k$ to prove this theorem. We note that \g{} gives the optimal solution after $s$ iterations and therefore, the induction starts with $k = s$. Suppose the claim holds true for some $k - 1 \geq s$. We prove that this claim also holds true for $k$. By induction hypothesis, we have:
\[
X_{k - 1} \leq 2s^2\cdot \frac{m + 1}{k},
\]
and as in the proof for Theorem~\ref{thm:ub_greedy_1}, we only need to consider the following case:
\[
2s^2 \cdot \frac{m + 1}{k + 1} \leq X_{k - 1} \leq 2s^2 \cdot \frac{m + 1}{k}.
\]
Otherwise the induction clearly holds.

By Lemma~\ref{lem:sborda1}, we have:
\begin{align*}
X_k &\leq X_{k - 1} - \frac{R_{k - 1}^2 - (2s + 1)R_{k - 1} + 2X_{k - 1}}{2(m - k + 1)}\\ &= X_{k - 1} - \frac{X_{k - 1}}{m - k + 1} - \frac{1}{2(m - k + 1)}(R_{k - 1}^2 - (2s + 1)R_{k - 1}).
\end{align*}

Notice that the $R_{k - 1}^2 - (2s + 1)R_{k - 1}$ is a quadratic function in $R_{k - 1}$, which is monotonically increasing for $R_{k - 1} \geq \frac{2s + 1}{2}$. Since $R_{k - 1} \geq \frac{X_{k - 1}}{s} \geq 2s\cdot \frac{m + 1}{k + 1} > \frac{2s + 1}{2}$, we know $R_{k - 1}^2 - (2s + 1)R_{k - 1}$ is at least its value when $R_{k - 1} = \frac{X_{k - 1}}{s}$. Thus, we have:
\begin{align*}
X_k &\leq X_{k - 1} - \frac{X_{k - 1}}{m - k + 1} - \frac{1}{2(m - k + 1)}\left(\left(\frac{X_{k - 1}}{s}\right)^2 - (2s + 1)\frac{X_{k - 1}}{s}\right) \\
&= -\frac{1}{2(m - k + 1)s^2}X_{k - 1}^2 + \left(1 + \frac{1}{2(m - k + 1)s}\right)X_{k - 1}\\
&=-\frac{1}{2(m + 1)s^2}X_{k - 1}^2 + X_{k - 1} - \left(\frac{kX_{k - 1}}{2(m - k + 1)(m + 1)s^2} - \frac{1}{2(m - k + 1)s}\right)X_{k - 1}\\
&\leq -\frac{1}{2(m + 1)s^2}X_{k - 1}^2 + X_{k - 1}.
\end{align*}

Similar to the proof of Theorem~\ref{thm:ub_greedy_1}, notice that the right hand side is quadratic in $X_{k - 1}$ and thus monotonically increasing for $X_{k - 1} \leq (m + 1)s^2$. Since $X_{k - 1} \leq 2s^2\cdot \frac{m + 1}{k} \leq (m + 1)s^2$, the right hand side reaches its maximum at $2s^2\cdot \frac{m + 1}{k}$. Therefore,
\[
X_k \leq -\frac{1}{2(m + 1)s^2}\left(2s^2\cdot \frac{m + 1}{k}\right)^2 + 2s^2\cdot \frac{m + 1}{k} \leq 2s^2\cdot \frac{m + 1}{k + 1},
\]
concluding our induction.

\end{document}